\documentclass[10pt]{article}

\usepackage[T1]{fontenc}
\usepackage[utf8]{inputenc}
\usepackage{lmodern}        
\usepackage[activate={true,nocompatibility},final]{microtype}  

\usepackage[margin=0.9in]{geometry}    

\usepackage{authblk}                    

\usepackage{amsmath}
\usepackage{amssymb,amsfonts}
\usepackage{mathtools}      
\usepackage{bm}             
\usepackage{nicefrac}
\allowdisplaybreaks         

\usepackage{amsthm}
\usepackage{thm-restate}    

\usepackage{graphicx}
\graphicspath{{figures/}{./}}                       
\usepackage{booktabs}                               
\usepackage{multirow,makecell,array}
\usepackage{longtable}                              
\usepackage[font=small,labelfont=bf]{caption}       
\usepackage{subcaption}                             
\usepackage{algorithm}
\usepackage{algpseudocode}                          

\usepackage{enumitem}
\setlist[itemize]{leftmargin=2.2em,itemsep=2pt,topsep=2pt}
\setlist[enumerate]{leftmargin=2.2em,itemsep=2pt,topsep=2pt}

\usepackage{xcolor}
\definecolor{LinkColor}{rgb}{0.10,0.40,0.75}        
\definecolor{CiteColor}{rgb}{0.70,0.25,0.20}        
\definecolor{UrlColor} {rgb}{0.20,0.50,0.50}        
\definecolor{TodoColor}{rgb}{0.80,0.30,0.10}        

\usepackage{tikz}
\usetikzlibrary{positioning,calc,arrows.meta,decorations.pathreplacing,patterns,shapes.geometric}

\usepackage[round,sort&compress]{natbib}

\usepackage{url}
\usepackage{hyperref}
\hypersetup{
  colorlinks=true,
  linkcolor=LinkColor,
  citecolor=CiteColor,
  urlcolor=UrlColor,
  breaklinks=true,
  bookmarksnumbered=true,
}
\usepackage{bookmark}                               
\usepackage[capitalise,nameinlink,noabbrev,sort&compress]{cleveref}  

\numberwithin{equation}{section}

\newif\ifdraft \draftfalse
\ifdraft
  \usepackage{lineno}\linenumbers
  \newcommand{\todo}[1]{\textcolor{TodoColor}{\textbf{[TODO:}~#1\textbf{]}}}
\else
  \newcommand{\todo}[1]{}                            
\fi

\usepackage[scaled=0.94]{helvet}
\definecolor{detectblue}{HTML}{1F4FA8}   
\definecolor{attribgold}{HTML}{B97A0A}   
\definecolor{bluetint}  {HTML}{F3F7FC}   
\definecolor{goldtint}  {HTML}{FCF8EE}   
\definecolor{inkgray}   {HTML}{4D4D4D}   
\definecolor{skyblue}   {HTML}{E3F1FC}   

\usepackage{titlesec}
\titleformat*{\section}{\Large\sffamily\bfseries\color{detectblue}}
\titleformat*{\subsection}{\large\sffamily\bfseries\color{detectblue}}
\titleformat*{\subsubsection}{\normalsize\sffamily\bfseries\color{detectblue}}
\titleformat{\paragraph}[runin]{\normalsize\bfseries}{}{0pt}{}
\titlespacing*{\paragraph}{0pt}{1.4ex plus .4ex}{0.9em}

\usepackage{tocloft}

\usepackage{fancyhdr}
\fancyheadoffset{0pt}
\newcommand{\gradientrule}{\begin{tikzpicture}%
  \shade[left color=detectblue, right color=attribgold]
    (0,0) rectangle (\textwidth, 0.06em);\end{tikzpicture}}

\usepackage[most]{tcolorbox}
\newcommand{\resultboxset}[1]{\tcolorboxenvironment{#1}{enhanced jigsaw, breakable,
  boxrule=0pt, frame hidden, colback=bluetint, arc=1.2mm,
  borderline west={1.5pt}{0pt}{detectblue!70},
  left=7pt, right=7pt, top=4pt, bottom=4pt, before skip=11pt, after skip=11pt}}
\newcommand{\setupboxset}[1]{\tcolorboxenvironment{#1}{enhanced jigsaw, breakable,
  boxrule=0pt, frame hidden, colback=goldtint, arc=1.2mm,
  borderline west={1.5pt}{0pt}{attribgold!75},
  left=7pt, right=7pt, top=4pt, bottom=4pt, before skip=11pt, after skip=11pt}}
\AtBeginDocument{%
  \resultboxset{theorem}\resultboxset{lemma}\resultboxset{proposition}%
  \resultboxset{corollary}\resultboxset{fact}\resultboxset{conjecture}%
  \resultboxset{restatable}%
  \setupboxset{definition}\setupboxset{assumption}\setupboxset{remark}%
  \setupboxset{example}\setupboxset{problem}%
}
\newtcolorbox{statementbox}{enhanced jigsaw, breakable, boxrule=0pt, frame hidden,
  colback=bluetint, arc=1.2mm, borderline west={1.5pt}{0pt}{detectblue!70},
  left=7pt, right=7pt, top=4pt, bottom=4pt, before skip=11pt, after skip=11pt}

\newtcolorbox{poincarebox}{enhanced jigsaw, boxrule=0pt, frame hidden,
  colback=skyblue, arc=2mm,
  left=9pt, right=9pt, top=5pt, bottom=5pt}

\newcommand{\E}{\mathbb{E}}                       

\newcommand{\eps}{\varepsilon}

\DeclareMathOperator{\supp}{supp}

\DeclarePairedDelimiter{\norm}{\lVert}{\rVert}

\DeclarePairedDelimiterX{\inner}[2]{\langle}{\rangle}{#1,#2}   

\newcommand{\pnat}{p_{\mathrm{nat}}}            
\newcommand{\pwm}{p_{\mathrm{wm}}}              
\newcommand{\pbar}{\bar p}                      
\newcommand{\Xs}{\mathcal{X}}                   
\newcommand{\Keys}{\mathcal{K}}                 
\DeclareMathOperator{\KLop}{KL}
\newcommand{\KL}[2]{\KLop\!\left(#1 \,\|\, #2\right)}   
\DeclareMathOperator{\I}{I}                      
\newcommand{\Hb}{H_{\mathrm{b}}}                
\newcommand{\Lmap}{L}                           

\newcommand{\Unif}{\mathrm{Unif}}               
\DeclareMathOperator{\JSD}{JSD}                  
\newcommand{\Phibud}{\Phi}                       
\newcommand{\ntask}[1]{n_{#1}}                   
\newcommand{\att}{\mathrm{att}}                  
\newcommand{\ext}{\mathrm{ext}}                  
\providecommand{\det}{\mathrm{det}}              

\theoremstyle{plain}
\newtheorem{theorem}{Theorem}[section]
\newtheorem{proposition}[theorem]{Proposition}
\newtheorem{lemma}[theorem]{Lemma}

\newtheorem{fact}[theorem]{Fact}

\theoremstyle{definition}
\newtheorem{definition}[theorem]{Definition}
\newtheorem{assumption}[theorem]{Assumption}
\newtheorem{example}[theorem]{Example}

\theoremstyle{remark}
\newtheorem{remark}[theorem]{Remark}

\crefname{theorem}{Theorem}{Theorems}          \Crefname{theorem}{Theorem}{Theorems}
\crefname{proposition}{Proposition}{Propositions}
\Crefname{proposition}{Proposition}{Propositions}
\crefname{lemma}{Lemma}{Lemmas}                \Crefname{lemma}{Lemma}{Lemmas}
\crefname{corollary}{Corollary}{Corollaries}   \Crefname{corollary}{Corollary}{Corollaries}
\crefname{conjecture}{Conjecture}{Conjectures} \Crefname{conjecture}{Conjecture}{Conjectures}
\crefname{fact}{Fact}{Facts}                   \Crefname{fact}{Fact}{Facts}
\crefname{definition}{Definition}{Definitions} \Crefname{definition}{Definition}{Definitions}
\crefname{assumption}{Assumption}{Assumptions} \Crefname{assumption}{Assumption}{Assumptions}
\crefname{example}{Example}{Examples}          \Crefname{example}{Example}{Examples}
\crefname{problem}{Problem}{Problems}          \Crefname{problem}{Problem}{Problems}
\crefname{remark}{Remark}{Remarks}             \Crefname{remark}{Remark}{Remarks}
\crefname{claim}{Claim}{Claims}                \Crefname{claim}{Claim}{Claims}
\crefname{algorithm}{Algorithm}{Algorithms}    \Crefname{algorithm}{Algorithm}{Algorithms}

\hypersetup{pdftitle={Watermark Forensics for Generative Models: An Information-Theoretic Perspective},
  pdfauthor={Xiaoyu Li, Zheng Gao, Xiaoyan Feng, Jiaojiao Jiang, Yulei Sui, Jiankun Hu}}

\begin{document}

\thispagestyle{empty}
\begin{center}
  {\huge\sffamily\bfseries Watermark Forensics for Generative Models\par}
  \vspace{0.5em}
  {\Large\sffamily\bfseries\color{black!85}
   An Information-Theoretic Perspective\par}
  \vspace{0.7em}
  \gradientrule
  \vspace{0.9em}
  {\normalsize
   \begin{tabular}{@{}c@{\hspace{2.8em}}c@{\hspace{2.8em}}c@{}}
   Xiaoyu Li$^{1}$ & Zheng Gao$^{1}$ & Xiaoyan Feng$^{2}$ \\[0.6em]
   Jiaojiao Jiang$^{1}$ & Yulei Sui$^{1}$ & Jiankun Hu$^{1}$
   \end{tabular}\par}
  \vspace{0.95em}
  {\small $^{1}$University of New South Wales \qquad $^{2}$Griffith University\par}
  \vspace{0.55em}
  {\footnotesize\ttfamily
   \{xiaoyu.li2,\,zheng.gao1,\,jiaojiao.jiang,\,yulei.sui\}@unsw.edu.au\\[0.3em]
   xiaoyan.feng@griffithuni.edu.au\qquad j.hu@adfa.edu.au\par}
  \vspace{0.7em}
  {\small\itshape\color{inkgray}\today\par}
\end{center}
\vspace{0.6em}

\vspace{-5mm}
\begin{poincarebox}
{\small\linespread{1.0}\selectfont
 \noindent\textbf{\sffamily\color{detectblue}Abstract.}\enspace
A watermark in a generative model's output is normally asked only whether a text is machine-made. The
same mark can do more: \emph{attribute} it to the user who produced it, \emph{extract} a hidden payload,
or \emph{localize} the part that survives editing. These form a forensic ladder, and we ask what each
rung costs in the sample length $n$.

A single object organizes the answers. Let $S$ be the secret the mark carries (a user's identity, a
payload), and let the \emph{information profile} $\nu(t)=I(S;X_t\mid X_{<t})$ measure how much the
$t$-th token reveals about $S$ given the earlier ones. Its \emph{total} pays for attribution and
extraction; how it is \emph{spread} across the text pays for localization; and detection alone is paid
for not by information but by presence, the distance from the marked to the unmarked distribution. The
two quality models in the literature, a mark subtle on every token and one that stamps a few tokens
loudly, are then incomparable ways of capping this one profile.

Our main theorem settles the ladder's entropy column. For schemes that leave the model's output
distribution statistically intact, attributing a text to one of $N$ users costs $\Theta(\log N / h)$
tokens over every stationary-ergodic source of entropy rate $h$, sharp to a $(1+o(1))$ factor; to our
knowledge the first tight entropy-rate law for multi-user attribution (via exact alignment; the
edit-robust case is open). A warning: the natural collision-counting argument overcharges without bound,
and only a decoder that scores each candidate by its own realized surprisal, not against the others,
attains the rate while almost never implicating an innocent user. A matching converse makes the law
two-sided, and the same accounting prices extraction of an $\ell$-bit payload at $\Theta(\ell/h)$. Two
gaps are real rather than modeling artifacts: a window $\Theta(\log N)$ tokens wide in which a text is
provably machine-made yet unattributable, and an uncertainty principle trading a mark's footprint
against its localization resolution. Since emerging regulation presumes an \emph{extractable} mark, these
are limits it inherits rather than escapes.

Across three language models (GPT-2, Pythia-410M, Qwen2.5) under deployed watermarks, the predicted
constants appear: the collision analysis overcharges by about $1.6\times$, and detection saturates its
$\Theta(\log N)$ window before attribution begins.
\par}
\end{poincarebox}

\vspace{-1.4em}
\begin{center}
\begin{tikzpicture}[font=\sffamily,>=Stealth,line cap=round]

\begin{scope}
  \def\W{1.55}                              
  \def\hA{1.58}\def\hB{2.92}\def\hC{3.80}\def\hD{4.68}   
  \fill[black!8,line join=round]
    (0.07,-0.07) -- (0.07,\hA-0.07) -- (\W+0.07,\hA-0.07) -- (\W+0.07,\hB-0.07)
    -- (2*\W+0.07,\hB-0.07) -- (2*\W+0.07,\hC-0.07) -- (3*\W+0.07,\hC-0.07)
    -- (3*\W+0.07,\hD-0.07) -- (4*\W+0.07,\hD-0.07) -- (4*\W+0.07,-0.07) -- cycle;
  \draw[->,inkgray!80,line width=0.7pt] (0,0) -- (0,5.4) node[left=1pt,inkgray,font=\scriptsize]{cost $n$};
  \draw[inkgray!50,line width=0.7pt] (0,0) -- (4.15*\W,0);
  \fill[bluetint,draw=detectblue,line width=1.1pt,line join=round]
    (0,0) -- (0,\hA) -- (\W,\hA) -- (\W,\hB) -- (2*\W,\hB) -- (2*\W,\hC)
    -- (3*\W,\hC) -- (3*\W,\hD) -- (4*\W,\hD) -- (4*\W,0) -- cycle;
  \draw[detectblue!30,line width=0.5pt] (\W,0)--(\W,\hB);
  \draw[detectblue!30,line width=0.5pt] (2*\W,0)--(2*\W,\hC);
  \draw[detectblue!30,line width=0.5pt] (3*\W,0)--(3*\W,\hD);
  \foreach \k/\lab in {0/0,1/1,2/2,3/3}{
    \node[circle,fill=detectblue,text=white,font=\scriptsize\bfseries,inner sep=1.4pt]
      at ({(\k+0.5)*\W},0.30) {\lab};
  }
  \node[font=\footnotesize\bfseries,text=detectblue] at (0.5*\W,\hA-0.30) {Detect};
  \node[font=\scriptsize,text=inkgray,align=center]   at (0.5*\W,0.82) {is it\\marked?};
  \node[font=\footnotesize\bfseries,text=detectblue] at (1.5*\W,\hB-0.30) {Attribute};
  \node[font=\scriptsize,text=inkgray,align=center]   at (1.5*\W,1.05) {which of\\$N$ users?};
  \node[font=\footnotesize\bfseries,text=detectblue] at (2.5*\W,\hC-0.30) {Extract};
  \node[font=\scriptsize,text=inkgray,align=center]   at (2.5*\W,1.05) {what\\payload?};
  \node[font=\footnotesize\bfseries,text=detectblue] at (3.5*\W,\hD-0.30) {Localize};
  \node[font=\scriptsize,text=inkgray,align=center]   at (3.5*\W,1.05) {where?};
  \node[font=\scriptsize,text=inkgray]    at (0.5*\W,\hA+0.40) {$\Theta(1/\Delta)$};
  \node[font=\scriptsize\bfseries,text=detectblue] at (1.5*\W,\hB+0.40) {$\Theta(\tfrac{\log N}{\Delta})$};
  \node[font=\scriptsize,text=inkgray]    at (2.5*\W,\hC+0.40) {$\Theta(\tfrac{\ell}{\Delta})$};
  \node[font=\scriptsize,text=inkgray]    at (3.5*\W,\hD+0.40) {$m\,w{=}\Omega(n)$};
  \node[font=\scriptsize\itshape,text=inkgray] at (0.5*\W,-0.34) {mechanism-blind};
  \node[rounded corners=2.5pt,fill=detectblue,text=white,font=\scriptsize\bfseries,
        align=center,inner sep=3.5pt] (hl) at (1.62*\W,4.92)
        {Detection $\boldsymbol{\neq}$ Attribution};
  \draw[->,detectblue,line width=1.1pt] (hl.south) to[bend right=45] (\W,0.55*\hA+0.45*\hB);
  \node[font=\small\bfseries,text=detectblue] at (2.0*\W,-0.92) {(a)\; the forensic ladder};
\end{scope}

\begin{scope}[shift={(8.5,0)}]
  \def\bw{5.5}                       
  \def\hbias{0.62}                   
  \node[font=\footnotesize\bfseries,text=inkgray,align=center,anchor=north] at (\bw/2,5.42)
     {one measure $\nu(t)=I(S;X_t\mid X_{<t})$};

  \def\ty{3.05}                      
  \node[anchor=west,font=\scriptsize\bfseries,text=detectblue] at (0,\ty+1.55) {biasing $\;\mathcal B(\Delta)$};
  \node[anchor=west,font=\scriptsize,text=inkgray] at (0,\ty+1.22) {subtle on every token, full support};
  \draw[inkgray!50,line width=0.7pt] (0,\ty) -- (\bw,\ty);
  \foreach \i/\hh in {0/0.58,1/0.62,2/0.49,3/0.60,4/0.45,5/0.61,6/0.53,7/0.62,8/0.47,9/0.56,10/0.62,11/0.51,12/0.59,13/0.54}{
    \fill[detectblue!70] ({0.12+\i*0.385},\ty) rectangle ({0.12+\i*0.385+0.24},\ty+\hh);
  }
  \draw[dashed,detectblue,line width=0.9pt] (0,\ty+\hbias+0.06) -- (\bw,\ty+\hbias+0.06)
     node[right,font=\scriptsize,text=detectblue,inner sep=2pt] {$\Delta$: an $L^\infty$ cap};

  \def\by{0.55}                      
  \node[anchor=west,font=\scriptsize\bfseries,text=attribgold!85!black] at (0,\by+1.78) {embedding $\;\mathcal E(m)$};
  \node[anchor=west,font=\scriptsize,text=inkgray] at (0,\by+1.45) {a short, loud stamp};
  \draw[inkgray!50,line width=0.7pt] (0,\by) -- (\bw,\by);
  \foreach \i/\hh in {0/1.32,1/1.07,2/1.22}{
    \fill[attribgold] ({0.12+\i*0.40},\by) rectangle ({0.12+\i*0.40+0.27},\by+\hh);
  }
  \draw[decorate,decoration={brace,amplitude=4pt,mirror},attribgold!85!black,line width=0.8pt]
     (0.08,\by-0.10) -- (1.31,\by-0.10);
  \node[anchor=north,font=\scriptsize,text=attribgold!85!black] at (0.7,\by-0.18) {footprint $m$: an $L^0$ cap};

  \node[anchor=north,align=center,font=\scriptsize,text=inkgray] at (\bw/2,\by-0.60)
     {{\bfseries same shaded mass} $\|\nu\|_1=I(S;X)\ge\log N$, two shapes};

  \node[font=\small\bfseries,text=detectblue] at (\bw/2,-0.92) {(b)\; one profile $\nu(t)$, two caps};
\end{scope}

\draw[inkgray!22,line width=0.6pt] (7.05,-0.4) -- (7.05,5.2);

\end{tikzpicture}
\vspace{-9mm}
\captionof{figure}{\scriptsize \textbf{The forensic ladder, and the one measure beneath it.}
\textsf{(a)} The four forensic questions form a \emph{sample-complexity ladder}. Detection is
cheap and mechanism-blind; but attribution is not free (naming one of $N$ users costs
$\Theta(\log N/\Delta)$ or $\Theta(\log N/h)$ in the appropriate per-token currency, the jump from detection to attribution), and extraction and crop-robust
localization sit higher still. \textsf{(b)} The rungs above detection are read off a single object, the
information profile $\nu(t)=I(S;X_t\mid X_{<t})$. The literature's two quality models are two
\emph{incomparable} caps on the same $\nu$: a \emph{biasing} mark spends the mass thinly under
an $L^\infty$ height cap $\Delta$ (subtle on every token, full support), an \emph{embedding}
mark concentrates it under an $L^0$ width cap $m$ (a short, loud stamp). Same shaded mass
$\|\nu\|_1=I(S;X)$, two shapes. \vspace{-10mm}}
\label{fig:hero}
\end{center}

\clearpage
\tableofcontents
\newpage

\section{Introduction}
\label{sec:intro}

Consider two deployed watermarks for generative models. \emph{Tree-Ring} \citep{wen2023treering} embeds a ring
pattern into the Fourier spectrum of a diffusion model's initial noise; to test an image one inverts the
generation and matches the ring. The \emph{green-list} of \citet{kirchenbauer2023watermark} colours half the
vocabulary green at each decoding step and tilts the next token toward it; to test a passage one counts green
tokens. Tuned to the same detection ROC, the two are interchangeable for the question \emph{is this
AI-generated}. Now ask a second question. Given the marked image, can one \emph{point to} where the mark
lives? Yes: invert to the noise and read off the ring. Given the marked text, can one point to which tokens
\emph{are} the mark? Not in any strict sense; every token is plausible under the natural distribution, and the
mark exists only in the statistical aggregate of the whole passage. In the second case the watermark is a
\emph{property} of the carrier, not an \emph{object inside} it, and asking which tokens are the mark is a
category error, much as one cannot ask which molecules of a warm gas carry its temperature. The distinction has
direct consequences for attribution, extraction, and regulation, and the literature treats it only case by
case.

This paper makes that informal gap quantitative. We treat the forensic uses of a watermark as a ladder of
four statistical problems on a length-$n$ carrier: \textbf{Level 0} \emph{detection} (machine-generated or
not), \textbf{Level 1} \emph{attribution} (which of $N$ users produced it), \textbf{Level 2} \emph{extraction}
(recover an $\ell$-bit payload), and \textbf{Level 3} \emph{localization} (which sub-region carries the mark).
We ask for the sample complexity of each. The detection rung is by now well understood: under a per-token
distortion budget $\Delta$ (each marked conditional within KL $\Delta$ of the natural one), detection costs
$\Theta(1/\Delta)$ tokens, with sharp constants \citep{huang2023optimal, li2024statistical, cai2024better}. We
import this rung and never reprove it. Our question is what happens \emph{above} it, and whether the
embedding-versus-biasing distinction that separates our two examples is a real boundary or a matter of taste.

\paragraph{One object.} It is a real boundary, and one measure draws it. Every forensic question asks how much
of the secret $S$ a decoder can recover from the carrier $X$; by data processing this is at most the
\emph{mutual information} $I(S;X)$, the number of nats that observing $X$ reveals about $S$ (for attribution and
distortion-free extraction, conditioned on the decoder's public key registry; \cref{sec:prelim}, and a primer
on the information-theoretic notions is in \cref{sec:itbackground}). The chain rule resolves it position by position,
$I(S;X)=\sum_t I(S;X_t\mid X_{<t})$, each summand the information about $S$ that the $t$-th token adds given its
past. This is the \emph{information profile} of the
scheme's forensic payload $S$,
\[
  \nu(t)\;:=\;I(S;X_t\mid X_{<t}),\qquad t=1,\dots,n.
\]
Attribution and extraction are recovery questions, \emph{how much} of $S$, so each is governed by
the total \emph{mass} $\norm{\nu}_1=I(S;X)$; localization asks not how much but \emph{where}, so it reads the
\emph{shape} $\supp\nu$, the footprint of \cref{def:embedbias}. Detection is the exception: the imported
baseline asks only whether the mark is \emph{present}, priced by how far the marked law sits from natural and
not by the mass --- a scheme can make $S$ recoverable yet be undetectable without the key. Our two opening examples are the extremal
profiles (\cref{fig:hero}): the green-list spreads its mass $\Delta$-thin over every position, while Tree-Ring
concentrates it on a vanishing set of transform coordinates. The profile is written once, at generation time:
by data processing, no reading of the carrier can recover more about $S$ than the sampler put in, so what any
decoder can ever learn is fixed by the $\nu$ the scheme writes. This is the paper's one idea: the forensic
power of a watermark is set by the information it makes recoverable, position by position --- the profile ---
and not by the mechanics of how the mark is embedded. The literature asks four questions of a watermark and
treats them as four problems; detection is the mechanism-blind baseline, and the three rungs above it are
readings of $\nu$. Everything below is a statement about functionals of it. The profile earns its keep as the object that unifies the ladder above detection --- attribution,
extraction, and localization, together with the two quality models, become functionals of one measure ---
more than as a proof device: the scalar mass $\norm{\nu}_1$ already separates attribution and extraction,
while the per-position \emph{shape} does load-bearing work only in the mechanism dichotomy
(\cref{thm:dichotomy}) and in localization (\cref{thm:crop}). We are explicit throughout about which is which.

\paragraph{Contributions.} The contribution is the object; each result below is a way of reading it.
\begin{itemize}
  \item \textbf{One measure, two norms (\cref{prop:profile}).} The chain rule gives the exact identity
  $\norm{\nu}_1=I(S;X)=\Phi(n)$, the quantity the forensic-recovery budget caps. The two quality models of the
  literature are then norm caps on $\nu$: the distributional model $\mathcal B(\Delta)$ caps it in $L^\infty$
  ($\nu(t)\le\Delta$ at every position, via the golden formula), and the footprint model $\mathcal E(m)$ caps
  it in $L^0$ ($\nu$ supported on at most $m$ positions, by conditional screening). We organize what were two
  incomparable quality models and a separately stated footprint axis as one object read two ways; the
  incomparability becomes definitional, two norms on one measure. Only the mass is invariant to the choice of
  dictionary and ordering; the caps, like sparsity, are relative to a fixed basis. Crop-\emph{robustness} is
  the one property not of $\nu$ alone: it additionally needs an edit coupling on the carrier footprint.
  \item \textbf{The entropy column closes (\cref{thm:entropyclose}).} For statistically distortion-free
  schemes with i.i.d.\ per-user keys (the decoder knows $\pnat$ and the deployed key set), attribution among
  $N$ users costs
  \[
    \ntask{\att}\;=\;\Theta\Big(\frac{\log N}{h}\Big),
    \qquad\text{and}\qquad
    \ntask{\att}\;=\;(1+o(1))\,\frac{\log N}{h}
  \]
  as the per-innocent false-positive level $\delta\to0$ with $\log(1/\delta)=o(\log N)$ (the $\log(1/\delta)$
  term is then lower order),
  for \emph{every} stationary-ergodic source of entropy rate $h>0$; the Shannon--McMillan--Breiman theorem is
  the only source hypothesis. To our knowledge this is the first tight $h$-rate for multi-user attribution.
  Achievability is a surprisal-threshold decoder, and its threshold is \emph{forced}, not tuned: an innocent
  key matches the observed text only with probability $\pnat(X)=e^{-\imath(X)}$, where $\imath(X)=-\log\pnat(X)$
  is the realized surprisal, so to implicate no innocent above $\delta/N$ the decoder must answer only when
  $\imath(X)>\tau=\log(N/\delta)$ (then $\pnat(X)<\delta/N$); completeness only asks that the true text, being
  $\pnat$-typical, clear that same bar, which it does once $n\ge\tau/h$. The obvious
  decoder instead reports any \emph{consistent} key; bounding its error means counting the innocents
  consistent by chance, $(N-1)e^{-nr_2}$, and waiting for that list to clear at $\log N/r_2$, the R\'enyi-2
  collision rate. But the forensic question is not list size; it is naming one user with a guarantee, governed
  by the Shannon rate $h$. Since $r_2\le h$, and can be far smaller, the consistency count overcharges by the
  unbounded factor $h/r_2$ (\cref{rem:listsize}). That guarantee is exact
  and non-asymptotic, and for the deterministic witness scheme pointwise, which is the form a forensic
  guarantee has to take: no innocent user is implicated except with probability $\delta/N$ --- at every
  length, for every source, for every realized set of innocent keys. The matching converse is Fano plus a mixture-entropy
  bound on the i.i.d.\ key draw.
  Extraction is the same theorem at $N=2^\ell$, giving $\Theta(\ell/h)$ (\cref{cor:extraction-h}). The
  entropy column of the ladder, previously a converse-only ledger, is thereby two-sided.
  \item \textbf{The rest of the ladder, as readings of $\nu$ (\cref{thm:budget,thm:separation,thm:dichotomy,thm:crop,prop:region}).}
  The mass and the shape of one measure, read five more ways. The
  forensic-recovery budget $\Phi(n)=\norm{\nu}_1\le n\Delta$, respectively $\le n\,h_n$
  ($h_n:=H(\pnat^{(n)})/n\downarrow h$, the entropy rate). The co-located window: a length
  at which a biasing watermark is detectable with error $\to0$ yet every attribution rule errs with
  probability $\to1$, within a single scheme. The dichotomy at the extremal profiles: $\mathcal B(\Delta)$
  couples, $\ntask{\att}=\Omega(\log N/\Delta)$ (tight for regular mixing sources), while $\mathcal E(m)$
  decouples, $\ntask{\att}=\Theta(\log N/\log q)$, $\Delta$-free; a localization map enables the cheaper model
  but does not decide it (\cref{cor:bridge}). The footprint--resolution uncertainty $m\cdot w=\Omega(n)$:
  fine-resolution crop-robustness forces full support. And the rate region: attribution and extraction split
  one mass, $R_{\att}+R_{\ext}\le(1+o(1))\,n\Delta$, order-tight.
\end{itemize}

\paragraph{What is imported and what is ours.} The Level-0 detection rate and the fidelity--detectability
frontier are imported \citep{cai2024better, li2024statistical, huang2023optimal}, as is the distortion-free
detection rate \citep{christ2024undetectable}. The attribution converse in the $\mathcal B(\Delta)$ column is
the single-user Fano specialization of the distortion-constrained fingerprinting converse of
\citet{moulin2008universal, somekhbaruch2005capacity}; we claim no novelty for it. The false-positive
mechanism behind \cref{thm:entropyclose} is the folklore likelihood-ratio Markov bound (a one-line change of
measure); we claim its use here, not the inequality. Ours are: the profile as the organizing object and the
norm-cap reading of the quality models; the closed entropy column, that is, the decoder, its exact
false-positive side, and tightness at full stationary-ergodic generality; the co-located window; the
footprint--resolution uncertainty; the rate
region; and the taxonomy. We flag plainly that the entropy-column witness decodes by exact alignment and is
not edit-robust; edit-robust $\Delta$-free attribution remains open. The closest prior work proves
complementary things: \citet{cohen2024many} construct multi-user watermarks where detection is cheap and
tracing is expensive, with a cryptographic counting argument but no tight information-theoretic rate in $h$;
\citet{jiang2024attribution} argue empirically that detection and attribution coincide under a strict
threshold; the SoK of \citet{sok2024watermarking} taxonomizes schemes on a security axis, with no
sample-complexity content; concurrent work \citep{distribembed2025} studies the scalar $I(S;X)$ without the
positional decomposition on which our questions turn. We position against each in \cref{sec:related}.

\paragraph{Why these questions.} Watermarking is becoming provenance infrastructure: the mechanism by which a
world filling with synthetic content is supposed to answer \emph{who made this}. The four rungs of the ladder
are the operational content of that hope, and they are not one capability but four, with four prices. What
theory can contribute is the price list: which questions are priced by the population (attribution: a
$\log N$ toll in the appropriate currency), which are not (detection), and which are structurally
unavailable (fine crop-robust localization for any small-footprint mark, \cref{thm:crop}). Regulation makes
the list concrete. Watermark statutes tacitly assume the embedding ontology, asking for marks ``in a
machine-readable format'' and the like: language that presumes an extractable
object. Our dichotomy prices that presumption: a biasing watermark cannot meet an attribution or localization
mandate at any fidelity without a sample cost that diverges as the mark is made stealthier, and the closed
entropy column gives an attribution mandate its price when the mark is distortion-free,
$\Theta((\log N+\log(1/\delta))/h)$ tokens with a per-innocent false-positive guarantee of $\delta/N$. A
statute can mandate a capability; it cannot repeal an inequality. We return to this in \cref{sec:discussion}.

\paragraph{Organization.} \Cref{sec:prelim} sets the model, the quality assumptions, and the information
profile (\cref{prop:profile}). \Cref{sec:results} states the results: the forensic-recovery budget, the closed entropy
column (\cref{thm:entropyclose}), the co-located window, the dichotomy, the uncertainty bound, and the rate
region, with a numerical illustration and, on three watermarked language models (GPT-2, Pythia, Qwen2.5), a
measurement of the operative constants (\cref{sec:llm}). \Cref{sec:taxonomy} reads twenty deployed schemes
through the profile. \Cref{sec:related} positions the paper against prior and concurrent work, \cref{sec:discussion}
draws consequences and states the open problems, and \cref{sec:conclusion} concludes. Proofs are in
\cref{sec:appendix}, which also contains the imported-results glossary (\cref{sec:imported}), the
scheme-by-scheme fit of the deployed watermarks (\cref{sec:schemefit}), and a broader survey of adjacent
work (\cref{sec:morerelated}).

\section{Model: the forensic ladder}
\label{sec:prelim}

The model has three ingredients: a carrier, the laws that generate it, and a keyed family that secretly
selects among those laws; every forensic task is a question about the selection.

\paragraph{Carrier, model, keys.} Tokens are drawn from an alphabet $\Sigma$ of size $q:=|\Sigma|$; a
carrier is a sequence $X\in\Xs:=\Sigma^n$. An unmarked model induces the natural law $\pnat$ on $\Xs$. A
\emph{keyed watermarking scheme} draws a secret key $K\sim\mu$ from a key space $\Keys$ and generates
$X\sim\pwm^{(K)}$; write $\pbar:=\E_K\,\pwm^{(K)}$ for the key-averaged output law. The \emph{secret}
$S$ (a user identity in $[N]$, an $\ell$-bit payload, or both) determines the key, $K=K(S)$, so
$S\to K\to X$ is a Markov chain. The primary object throughout is the family of \emph{per-secret kernels}
$\pwm(\cdot\mid s,x_{<t})$ (the conditional next-token law under secret $s$ given the prefix $x_{<t}$)
from which key-averaged quantities are derived, never the reverse. Decoders are informed: an attribution
decoder knows $\pnat$ and the key registry $\kappa:=\{k_1,\dots,k_N\}$, and false-positive statements are
averaged over the i.i.d.\ draw of those keys. Recoverable information is measured against this side
information: throughout, the per-secret quantities --- the kernels $\pwm(\cdot\mid S,x_{<t})$, the information
profile $\nu$, the distortion profile $\bar d$, and the budget $\Phibud$ --- are read conditionally on the
registry $\kappa$, written $I(S;X)$, $\bar d(t)$, and so on for brevity. Given $\kappa$ the secret determines
its key, $K=k_S$, so each per-secret object is the \emph{realized} per-key one and every average over $S$
ranges over the enrolled keys. This conditioning is vacuous for a fixed registry (a single enrolled key,
detection, or deterministic extraction). It matters for any randomized registry with exchangeable keys ---
bounded schemes included, since marginally over the key draw the carrier is independent of the secret --- and
is sharpest for the distortion-free witness of \cref{thm:entropyclose}, whose output is exactly $\pnat$: there
the unconditional $I(S;X)=0$ while the registry-conditional budget is positive, and it is that conditional
quantity the converse bounds by $nh_n$. All logarithms are natural unless a base is shown;
$\KL{\cdot}{\cdot}$ is relative entropy and $I(\cdot;\cdot)$ mutual information, both in nats.

\paragraph{Quality as a budget.} Fidelity to the base model is what makes a watermark usable, and it is what
constrains forensic power. We consider the two quality models that occur in practice, both stated on the
per-secret kernels.

Write $\bar d(t):=\E_{X_{<t}}\E_S\,\KL{\pwm(\cdot\mid S,X_{<t})}{\pnat(\cdot\mid X_{<t})}$ for the
\emph{per-secret distortion profile} (expectations under the joint marked law), so by the chain rule
$\sum_{t=1}^n\bar d(t)=\E_S\,\KL{\pwm(\cdot\mid S)}{\pnat}$, and
$T:=\{t:\pwm(\cdot\mid s,x_{<t})\ne\pnat(\cdot\mid x_{<t})\text{ for some }s,x_{<t}\}$ for the
\emph{per-secret footprint}, the set of marked positions. A natural alternative is the key-averaged marginal
profile $\delta_{\mathrm{marg}}(t):=\E_{X_{<t}}\,\KL{\E_S\,\pwm(\cdot\mid S,X_{<t})}{\pnat(\cdot\mid X_{<t})}$;
joint convexity of relative entropy gives the bridge $\delta_{\mathrm{marg}}(t)\le\bar d(t)$, and the
inequality is strict in general (\cref{app:witnesses}). We retire the marginal profile because it cannot see
information hidden in correlations across positions: the two-token parity scheme (\cref{app:witnesses}) has
zero marginal distortion at every position, yet a decoder reads one full bit from it ($\nu(2)=\log2$ below).
Every cap below is therefore stated on the per-secret kernels.

\begin{assumption}[Bounded-distortion quality; the class $\mathcal B(\Delta)$]\label{asm:bounded}
There is a per-token budget $\Delta\ge0$, imposed in one of two forms. \emph{Sup-form:}
$\KL{\pwm(\cdot\mid s,x_{<t})}{\pnat(\cdot\mid x_{<t})}\le\Delta$ for every secret $s$, prefix $x_{<t}$, and
position $t$. \emph{Avg-form:} $\max_t\bar d(t)\le\Delta$. The sup-form implies the avg-form, and we tag each
claim with the form it needs: every converse below needs only the avg-form, while the exponential-tilting
achievability witness satisfies the sup-form. By the chain rule the global divergence inherits the cap:
$\KL{\pwm^{(k)}}{\pnat}\le n\Delta$ for every key $k$ under the sup-form, and
$\E_S\,\KL{\pwm(\cdot\mid S)}{\pnat}=\sum_t\bar d(t)\le n\Delta$ under the avg-form. Each token stays
$\Delta$-close to natural on average; the mark is \emph{delocalized}.
\end{assumption}

\begin{assumption}[Distortion-free quality]\label{asm:free}
The mixture is undetectable without the key, at one of two levels. \emph{Sequence level:} $\pbar=\pnat$
(statistically) or $\pbar\approx\pnat$ up to a negligible distinguishing advantage (computationally) --- the
operational statement: no key-less observer detects the mark. \emph{Kernel level:}
$\E_K\,\pwm(\cdot\mid K,x_{<t})=\pnat(\cdot\mid x_{<t})$ for every prefix $x_{<t}$ --- the structural
statement, satisfied by the exact-sampler family (inverse-CDF / exponential-minimum sampling;
\citealp{kuditipudi2023robust}; \cref{sec:taxonomy}), and implying the sequence level
whenever the key coordinates consumed at distinct positions are position-independent. The per-key divergence
$\KL{\pwm^{(k)}}{\pnat}$ is unconstrained; the binding resource is the generation entropy
$h_n:=H(\pnat^{(n)})/n$, where $\pnat^{(n)}$ is the length-$n$ law; for stationary sources $h_n$ decreases to
the entropy rate $h:=\lim_n h_n$, and finite-$n$ statements use $h_n$, asymptotic ones $h$.
We state for each result which level it uses.
\end{assumption}

\Cref{asm:bounded} is the regime of green-list and tilting schemes \citep{kirchenbauer2023watermark};
\cref{asm:free} is the regime of distortion-free and undetectable schemes \citep{kuditipudi2023robust,
christ2024undetectable, christ2024pseudorandom, dathathri2024synthid}.

For the localization results (\cref{sec:results}, Level 3) we use one structural hypothesis, satisfied by the
deployed schemes.

\begin{assumption}[$r$-local / edit-based]\label{asm:local}
There is a set $R\subseteq[n]$ (the \emph{carrier footprint}), a constant $r$, and a coupling of the marked
carrier $X$ with an unmarked carrier $X'\sim\pnat$ such that $X$ and $X'$ agree on every coordinate outside the
dilated footprint $R^{+}:=\{i:\mathrm{dist}(i,R)\le r\}$; for a scheme in the footprint-capped class
$\mathcal E(m)$ of \cref{def:embedbias} one may take $R=T$ when the natural law factorizes across positions
(off-$T$ kernels then depend on no marked prefix; a perfectly correlated source admits no such coupling).
Equivalently, the watermark is a local edit
confined to a neighbourhood of $R$ and does not propagate elsewhere. This is the post-hoc regime in which
Level-3 localization is actually studied: image, audio, and patch watermarks edit only the marked region and
leave the rest of the carrier as generated (a biasing scheme such as green-list instead has $R=[n]$, so
$R^{+}=[n]$ and the assumption is vacuous). The consequence we use is immediate: any contiguous window
disjoint from $R^{+}$ has $x_W=x'_W$, distributed exactly as $\pnat$.
\end{assumption}

\paragraph{The two mechanisms.} The distinction the introduction drew between an \emph{object inside} the
carrier and a \emph{property} of its law has a clean operational form, which we take from the ontology of
generative watermarking \citep[after the steganography trichotomy of][]{fridrich2009stego} and state
informally, since no result below invokes it directly; the operative cut is \cref{def:embedbias}. An
\emph{embedding} watermark is a triple $(\mathsf{Gen},\mathsf{Embed},\mathsf{Ext})$ with a carrier-level
object space $\mathcal W$, maps $\mathsf{Embed}:\Xs\times\Keys\times\mathcal W\to\Xs$ and
$\mathsf{Ext}:\Xs\times\Keys\to\mathcal W\cup\{\bot\}$ with $\mathsf{Ext}(\mathsf{Embed}(x,k,w),k)=w$ for
almost all $(k,w)$, and a \emph{localization map} $\Lmap_k:\Xs\to 2^{[n]}$ through which alone
$\mathsf{Ext}$ reads $x$: the watermark is an object on $\Lmap_k(x)$. A \emph{biasing} watermark is a pair
$(\mathsf{Gen}_{\mathrm{wm}},\mathsf{Det})$, where $\mathsf{Gen}_{\mathrm{wm}}$ induces $\pwm^{(k)}$ with
$\KL{\pwm^{(k)}}{\pnat}>0$ (or $=0$, indistinguishable only against a bounded distinguisher) and
$\mathsf{Det}:\Xs\times\Keys\to\{0,1\}$ is a hypothesis test realising a Type-I/Type-II trade-off: the
watermark is a property of the law $\pwm^{(k)}$. The two mechanisms come equipped with different native
operations ($\mathsf{Ext}$ recovers an object at Level 2, $\mathsf{Det}$ runs a test at Level 0), which is
the seed of the ladder below. What this object-versus-property distinction says about what a watermark
\emph{is} we take up, as a closing reflection, in \cref{sec:philosophy}.

\paragraph{The operative refinement: footprint.} For the sample-complexity analysis, the object-level
reading leaves one thing under-determined: its localization map constrains where the secret is \emph{read from}, but
not the \emph{size} of $\Lmap_k(x)$. A green-list secret is readable from a long enough window, so the
read-from criterion alone would miscount it as embedding. The ``object inside'' intuition needs the region to
be \emph{small}. We therefore read the distinction through the per-secret footprint $T$.

\begin{definition}[Footprint; embedding vs.\ biasing]\label{def:embedbias}
A scheme is \emph{embedding} if $|T|=o(n)$, so the per-secret kernels are marked on a vanishing fraction of
coordinates, and $\Lmap_k:=T$ is then a localization map (off $T$ the next-token law is exactly natural given
the past, for every secret); it is \emph{biasing} if $|T|=\Omega(n)$: a constant fraction of coordinates carry
distortion for some secret, so no $o(n)$ region contains the mark and it lives only in the aggregate. The
footprint-capped class is $\mathcal E(m):=\{|T|\le m\}$; together with $\mathcal B(\Delta)$ of
\cref{asm:bounded} it is read in a fixed dictionary and coordinate ordering (see the remarks after
\cref{prop:profile}).
\end{definition}

\begin{remark}[Model change: per-secret, not key-averaged]
Both classes are cut on the per-secret kernels, not on the key-averaged kernel $\E_S\,\pwm(\cdot\mid S,x_{<t})$:
by the convexity bridge they are strict subclasses of their marginal counterparts, and the marginal cut
misclassifies --- every kernel-level distortion-free scheme (\cref{asm:free}) has key-averaged kernels exactly
natural, so the key-averaged support would place all of them in $\mathcal E(0)$ while they carry $\Phibud(n)>0$.
The two cuts separate in both directions (\cref{app:witnesses}): writing $\nu$ for the information profile of
\cref{prop:profile} below, the two-token parity scheme has $\nu(2)=\log2$ while $\delta_{\mathrm{marg}}\equiv0$,
and the key-independent kernel-bend has $\delta_{\mathrm{marg}}>0$ while $\nu\equiv0$.
\end{remark}

This refinement does real work, and it splits the embedding ontology's localization map into \emph{two}
orthogonal axes. The \emph{footprint} $T$ governs whether forensic cost decouples from the
distortion budget (\cref{thm:dichotomy}); the \emph{readout resolution} of $\mathsf{Ext}$, namely whether the
secret is decoded globally or per-region, governs localization (\cref{thm:crop}), and is independent of the
footprint. The two come apart on real schemes: green-list \citep{kirchenbauer2023watermark} is biasing
($T=[n]$) although its secret is recoverable from a window; and SEAL or Gaussian Shading, which an
object-level reading would file as embedding, are full-support hence biasing, localizing (when they do) by
fine readout rather than by a small footprint (\cref{sec:taxonomy}). The
distortion-free family (Kuditipudi, Christ--Gunn) is likewise full-support biasing, living in the entropy
regime of \cref{asm:free} rather than the $\Delta$ regime.

\subsection{The information profile}

The per-secret kernels carry one further object, and it organizes everything that follows: the
\emph{information profile} of a scheme, the measure on positions
\[
\nu(t):=I(S;X_t\mid X_{<t}),\qquad \nu(A):=\sum_{t\in A}\nu(t),\qquad \norm{\nu}_1:=\nu([n]).
\]

\begin{restatable}[Information profile]{proposition}{PropProfile}\label{prop:profile}
For every keyed scheme:
\begin{enumerate}[label=(\roman*)]
\item \emph{(Mass identity.)} $\norm{\nu}_1=I(S;X)=\Phibud(n)$.
\item \emph{($\mathcal B(\Delta)$ is an $L^\infty$ cap.)} $\nu(t)\le\bar d(t)$ for every $t$; hence under
\cref{asm:bounded}, in either form, $\max_t\nu(t)\le\Delta$.
\item \emph{($\mathcal E(m)$ is an $L^0$ cap.)} If the scheme lies in $\mathcal E(m)$ (\cref{def:embedbias}:
per-secret kernels unmarked off $T$, $|T|\le m$, in the fixed dictionary and ordering), then $\nu(t)=0$ for
$t\notin T$; hence $\supp(\nu)\subseteq T$ and $\Phibud(n)=\sum_{t\in T}\nu(t)\le m\log q$.
\end{enumerate}
\end{restatable}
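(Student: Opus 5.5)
The plan is to derive all three parts from the chain rule for mutual information and the ``golden formula'' (the variational characterization of mutual information), applied conditionally on the prefix and, throughout, on the registry $\kappa$.

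\emph{Part (i).} Expanding $I(S;X)=I(S;X_1,\dots,X_n)$ by the chain rule gives $\sum_{t=1}^n I(S;X_t\mid X_{<t})=\sum_t\nu(t)=\norm{\nu}_1$. The identity $\Phibud(n)=I(S;X)$ is then definitional: the forensic-recovery budget is the registry-conditional mutual information, so there is nothing further to prove there. I will note that every quantity is read given $\kappa$, so the chain rule is applied under that conditioning.

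\emph{Part (ii).} Fix $t$ and a prefix $x_{<t}$. Write $P_s:=\pwm(\cdot\mid s,x_{<t})$ for the per-secret kernel, $\bar P:=\E_{S\mid x_{<t}}P_S$ for its mixture under the posterior of $S$ given the prefix, and $Q:=\pnat(\cdot\mid x_{<t})$.
\begin{itemize}
\item The conditional mutual information given the prefix equals $\E_{S\mid x_{<t}}\KL{P_S}{\bar P}$.
\item The golden formula gives, for any $Q$,
\[
\E\,\KL{P_S}{Q}=\E\,\KL{P_S}{\bar P}+\KL{\bar P}{Q}\ \ge\ \E\,\KL{P_S}{\bar P}.
\]
\item Averaging over $X_{<t}$ under the joint marked law yields $\nu(t)\le\bar d(t)$.
\end{itemize}
One subtlety needs care: $\bar d(t)$ is defined with $\E_{X_{<t}}\E_S$. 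I read this as the joint expectation over $(S,X_{<t})$, and the argument requires the posterior of $S$ given $x_{<t}$, so I will state that reading explicitly. Given $\nu(t)\le\bar d(t)$, the avg-form of \cref{asm:bounded} gives $\nu(t)\le\Delta$ directly. The sup-form bounds every $\KL{P_s}{Q}$ by $\Delta$, which implies the avg-form.

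\emph{Part (iii).} Take $t\notin T$. For every $s$ and $x_{<t}$ we have $P_s=Q$, so $X_t$ is conditionally independent of $S$ given $X_{<t}$, and hence $\nu(t)=0$. This gives $\supp\nu\subseteq T$. Combining with (i), $\Phibud(n)=\sum_{t\in T}\nu(t)$. Finally, each term satisfies $\nu(t)\le H(X_t\mid X_{<t})\le\log q$, so the sum is at most $|T|\log q\le m\log q$.

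The main obstacle is keeping the conditioning consistent. The footprint $T$ is defined over all prefixes, including those of probability zero, while $\nu$ only involves prefixes with positive probability; the definition of $T$ is therefore more than sufficient. The more delicate point is that the registry conditioning in (i) and (ii) must be handled uniformly, and I will fix it once at the start of the proof.
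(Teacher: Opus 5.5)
Your proposal is correct and follows the paper's proof essentially step for step: the chain rule for (i); the golden formula with reference $\pnat(\cdot\mid x_{<t})$, applied conditionally on the prefix and then averaged, for (ii); and conditional independence off $T$ together with $\nu(t)\le H(X_t\mid X_{<t})\le\log q$ for (iii). Where the paper writes a bare $\E_S$, you take the expectation over $S$ under its posterior given $x_{<t}$, which matches the paper's ``joint marked law'' reading of $\bar d(t)$. That makes explicit a point the paper's proof leaves implicit.
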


\begin{proof}
(i) is the chain rule for mutual information. (ii): condition on $X_{<t}=x_{<t}$ and apply the golden formula
(\cref{fact:golden}) with reference $\pnat(\cdot\mid x_{<t})$, giving
$I(S;X_t\mid X_{<t}=x_{<t})\le\E_S\,\KL{\pwm(\cdot\mid S,x_{<t})}{\pnat(\cdot\mid x_{<t})}$; averaging over
$x_{<t}$ gives $\nu(t)\le\bar d(t)$. (iii): for $t\notin T$ the kernel
$\pwm(\cdot\mid s,x_{<t})=\pnat(\cdot\mid x_{<t})$ does not depend on $s$, so $X_t\perp S\mid X_{<t}$
(given the past, the kernel at $t$ is the natural law for every secret, so $X_t$ carries no
secret-information there; whatever earlier marked positions did to shape $X_{<t}$ was already counted at those
positions) and $\nu(t)=0$; each surviving term
obeys $\nu(t)\le H(X_t\mid X_{<t})\le\log q$.
\end{proof}

\begin{figure}[t]
\centering
\begin{tikzpicture}[font=\sffamily,>=Stealth,line cap=round]
  \def\dx{0.62}
  \foreach \i/\h in {0/0.95,1/1.55,2/0.65,3/1.20,5/1.85,6/1.05,7/0.70,9/1.45,10/1.65,11/0.55,12/1.15}{
    \fill[detectblue!78,rounded corners=0.3pt] ({\i*\dx},0) rectangle ({\i*\dx+0.40},\h);
  }
  \draw[inkgray!55,line width=0.8pt] (-0.15,0) -- (7.85,0);
  \node[anchor=south west,font=\footnotesize\bfseries,text=detectblue] at (-0.15,2.18)
     {$\nu(t)=I(S;X_t\mid X_{<t})$};
  \node[anchor=north,font=\scriptsize,text=inkgray] at (3.7,-0.62) {position $t=1,\dots,n$};
  \foreach \i in {0,1,2,3,5,6,7,9,10,11,12}{
    \draw[attribgold,line width=1.1pt] ({\i*\dx+0.20},-0.08) -- ({\i*\dx+0.20},-0.30);
  }
  \draw[dashed,detectblue!85,line width=0.7pt] (-0.15,2.02) -- (7.85,2.02);
  \node[anchor=west,font=\scriptsize,text=detectblue] at (7.95,2.02) {height $\le\Delta$};
  \node[anchor=west,font=\scriptsize,text=inkgray]   at (7.95,1.70) {an $L^\infty$ cap};
  \node[anchor=west,font=\scriptsize\itshape,text=detectblue] at (7.95,1.42) {(biasing)};

  \draw[decorate,decoration={brace,amplitude=5pt,mirror},detectblue,line width=0.9pt]
     (-0.05,-0.42) -- (7.75,-0.42);
  \node[anchor=north,align=center,font=\scriptsize,text=detectblue] at (3.85,-0.92)
     {{\bfseries mass} $\;\|\nu\|_1=\textstyle\sum_t\nu(t)=I(S;X)=\Phi(n)$};
  \node[anchor=north,align=center,font=\scriptsize,text=inkgray] at (3.85,-1.34)
     {read by \textbf{detection}, \textbf{attribution}, \textbf{extraction} \;(Levels 0--2)};

  \node[anchor=west,align=left,font=\scriptsize,text=attribgold!88!black] at (7.95,0.55)
     {{\bfseries shape} $\;\operatorname{supp}\nu$};
  \node[anchor=west,align=left,font=\scriptsize,text=inkgray] at (7.95,0.22)
     {width $\le m$: an $L^0$ cap};
  \node[anchor=west,align=left,font=\scriptsize,text=inkgray] at (7.95,-0.08)
     {read by \textbf{localization}};
  \node[anchor=west,align=left,font=\scriptsize\itshape,text=attribgold!88!black] at (7.95,-0.34)
     {(Level 3)};
\end{tikzpicture}
\caption{\textbf{One profile, read two ways} (\cref{prop:profile}). The information profile
$\nu(t)=I(S;X_t\mid X_{<t})$ is a measure on positions, and a measure has a \emph{total} and a
\emph{support}. The forensic tasks above detection are exactly these two functionals: attribution and
extraction read the \textbf{mass} $\|\nu\|_1=I(S;X)=\Phibud(n)$, which the forensic-recovery budget caps
(\cref{thm:budget}); localization reads the \textbf{shape} $\operatorname{supp}\nu$; detection is the
mechanism-blind baseline, priced by presence. The two quality
models of the literature are then two caps on the \emph{same} $\nu$: a height cap $\nu(t)\le\Delta$
($L^\infty$, \cref{prop:profile}(ii), biasing) and a support cap $|\operatorname{supp}\nu|\le m$
($L^0$, \cref{prop:profile}(iii), embedding).}
\label{fig:profile}
\end{figure}

\begin{remark}[Two caps, one object]
\Cref{prop:profile} reads the two quality models as norm caps on the same measure: $\mathcal B(\Delta)$ caps
$\norm{\nu}_\infty$ and $\mathcal E(m)$ caps $\norm{\nu}_0$, the support size; their incomparability is now
definitional (different norms on one object) rather than a contingency of two ontologies (\cref{fig:profile}).
How the forensic
tasks read $\nu$ (mass for Levels 1--2, shape plus the edit coupling for Level 3; detection, Level 0, reads
neither) is spelled out under \emph{Mass, shape, robustness} below.
\end{remark}

\begin{remark}[Basis and ordering relativity]
$\nu$ is computed in a coordinate system: a dictionary in which $X$ is written and an order in which its coordinates
are revealed. Only the mass $\norm{\nu}_1=I(S;X)$ is invariant; the $L^\infty$ and $L^0$ caps (and with them the
embedding/biasing cut of \cref{def:embedbias}) are stated relative to a fixed dictionary and ordering, declared
once per scheme (for Tree-Ring, the latent-Fourier dictionary; \cref{sec:taxonomy}). This is the same relativity as
sparsity, and we treat it the same way: a property of the representation, fixed in advance.
\end{remark}

\paragraph{Mass, shape, robustness.} Attribution and extraction read the \emph{mass}
$\norm{\nu}_1$ (\cref{def:budget}); localization reads the \emph{shape} $\supp(\nu)$, together with the edit
coupling of \cref{asm:local} (\cref{thm:crop}); detection, the mechanism-blind baseline, reads neither --- a
key-blind test is priced by the mixture divergence $\KL{\pbar}{\pnat}$, which a scheme can drive to zero while
keeping $\norm{\nu}_1>0$ (the parity witness), and a keyed test by the per-key divergence or soundness
($\Theta(1/\Delta)$, $\Theta(\lambda/h)$). Mass and shape are properties of the per-secret kernels alone.
Crop-\emph{robustness} is not: it lives on the coupling footprint $R$ of \cref{asm:local}, not on $\supp(\nu)$.
The two witnesses of \cref{app:witnesses} separate the notions in both directions: the kernel-bend has
$\nu\equiv0$ (nothing recoverable anywhere) yet a window covering the bent coordinate is visibly
non-natural; the parity scheme has $\nu(2)=\log2$ yet every window marginal is exactly natural. \Cref{thm:crop}
is accordingly stated on $R$: mass and shape are read from $\nu$, robustness from the edit coupling.

\paragraph{The forensic levels.} The forensic uses of a watermark form a ladder of statistical problems.

\begin{definition}[Forensic tasks]\label{def:tasks}
Given a length-$n$ sample (and, for keyed detectors, the key), the forensic tasks are:
\begin{itemize}
  \item \textbf{Level 0 (Detection).} Test $H_0:X\sim\pnat$ against $H_1:X\sim\pwm$, with Type-I/Type-II
  errors $(\alpha,\beta)$.
  \item \textbf{Level 1 (Attribution).} With $N$ users, keys $k_1,\dots,k_N$, $U\sim\Unif[N]$ and
  $X\sim\pwm^{(k_U)}$, output $\hat U(X)\in[N]$, with error $P_e:=\Pr[\hat U\ne U]$.
  \item \textbf{Level 2 (Extraction).} A payload $W\sim\Unif\{0,1\}^\ell$ is fixed at generation; recover
  $\hat W(X)$, with error $\Pr[\hat W\ne W]$.
  \item \textbf{Level 3 (Localization).} Identify the marked window, robustly to cropping to any window of
  size $\ge w$ (\cref{sec:results}).
\end{itemize}
We write $\ntask{\det},\ntask{\att},\ntask{\ext}$ for the least $n$ achieving a fixed target error.
\end{definition}

\paragraph{The forensic-recovery budget.} The single quantity that drives every lower bound is the
information a scheme makes recoverable about its secret.

\begin{definition}[Forensic-recovery budget]\label{def:budget}
The \emph{forensic-recovery budget} of a scheme is $\Phibud(n):=I(S;X)=\norm{\nu}_1$
(\cref{prop:profile}(i); the mutual information is registry-conditional, per the convention above): the total
mass of the information profile. By data processing, any informed extractor $\hat S=g(X,\kappa)$ satisfies
$I(S;\hat S\mid\kappa)\le\Phibud(n)$, so $\Phibud(n)$ upper-bounds the recoverable information
of \emph{every} Level-1 / Level-2 procedure simultaneously.
\end{definition}

\noindent We use $\Unif$ for the uniform distribution and write $a=\Theta(b)$ in the usual two-sided sense,
hiding universal constants. The standing convention is that the detection rung of \cref{def:tasks} is the
imported baseline $\ntask{\det}=\Theta(1/\Delta)$ \citep{cai2024better,li2024statistical,huang2023optimal}
(\cref{fact:det}); all of our results concern the rungs above it. Throughout, $\lambda$ denotes the soundness
(security) parameter of distortion-free detection: the keyed detector's false-positive probability is at most
$2^{-\lambda}$; it is distinct from the per-innocent attribution level $\delta$ of \cref{thm:entropyclose},
which parametrizes a different task.

\subsection{Useful tools}
\label{sec:tools}
The whole paper rests on three textbook facts, which we use as machinery and record once.

\begin{fact}[Data processing and the golden formula; \citealp{cover2006elements}]\label{fact:golden}
If $S\to K\to X$ is a Markov chain then $I(S;X)\le I(K;X)$. Moreover, for a key $K$ with $X\sim P_{X\mid K}$
and marginal $P_X$, $I(K;X)=\min_Q \E_K\,\KL{P_{X\mid K}}{Q}$, with the minimum attained at $Q=P_X$; hence
$I(K;X)\le\E_K\,\KL{P_{X\mid K}}{Q}$ for every reference $Q$.
\end{fact}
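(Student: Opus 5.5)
The plan is to prove both parts from the chain rule for mutual information and from the nonnegativity of relative entropy. Throughout, the carrier lives on the finite set $\Xs=\Sigma^n$, so every expectation below is a finite sum. For data processing, I would expand the joint information $I(S,K;X)$ in the two possible orders:
\[
I(S,K;X)=I(K;X)+I(S;X\mid K)=I(S;X)+I(K;X\mid S).
\]
The Markov property $S\to K\to X$ says exactly that $X\perp S\mid K$, so $I(S;X\mid K)=0$. Conditional mutual information is nonnegative, so $I(K;X\mid S)\ge0$. Comparing the two expansions then gives $I(S;X)\le I(K;X)$.

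For the golden formula, I would fix an arbitrary reference $Q$ on $\Xs$ and write, inside the expectation, $\log\frac{P_{X\mid K}}{Q}=\log\frac{P_{X\mid K}}{P_X}+\log\frac{P_X}{Q}$. Taking expectations first over $X\sim P_{X\mid K}$ and then over $K$ should give the exact decomposition
\[
\E_K\,\KL{P_{X\mid K}}{Q}=I(K;X)+\KL{P_X}{Q}.
\]
The first term is the definition of $I(K;X)$. The second term uses the fact that the $K$-average of $P_{X\mid K}$ is $P_X$, so the cross term depends on $X$ only through its marginal. Gibbs' inequality gives $\KL{P_X}{Q}\ge0$, with equality if and only if $Q=P_X$. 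This yields the variational identity $I(K;X)=\min_Q\E_K\,\KL{P_{X\mid K}}{Q}$, with the minimum attained at $Q=P_X$, and hence the stated upper bound for every reference $Q$.

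The only step needing care is the decomposition when $Q$ fails to dominate some $P_{X\mid K=k}$ on a set of positive $\mu$-measure. In that case the left side is $+\infty$, and the bound holds trivially. Otherwise $P_X\ll Q$, because $P_X$ is a mixture of measures that are each dominated by $Q$. Then every term in the decomposition is finite, or at worst a nonnegative series on a countable key space, and the split of the logarithm is legitimate.

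I therefore expect no real obstacle, since this is textbook material \citep{cover2006elements}. The one point I would state explicitly is the $\mu$-null-set and absolute-continuity bookkeeping, so that the decomposition is not applied in the form $\infty-\infty$.
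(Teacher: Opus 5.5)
Your proposal is correct and takes essentially the paper's route. The paper cites data processing as textbook, and in the proof of \cref{thm:budget} it derives the golden formula by writing $\E_K\,\KL{P_{X\mid K}}{Q}$ as minus the conditional entropy plus the cross-entropy $H(P_X,Q)$, then minimizing the latter at $Q=P_X$ by Gibbs' inequality; this is your identity $\E_K\,\KL{P_{X\mid K}}{Q}=I(K;X)+\KL{P_X}{Q}$ rearranged via $H(P_X,Q)=H(P_X)+\KL{P_X}{Q}$, with your absolute-continuity bookkeeping a harmless addition (the aside about a countable key space is unnecessary, since finiteness of $\Xs$ already keeps every term finite).
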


\begin{fact}[Fano's inequality; \citealp{cover2006elements}]\label{fact:fano}
For $U$ on a set of size $N$ and any estimator $\hat U=g(X)$ with error $P_e=\Pr[\hat U\ne U]$,
$H(U\mid X)\le \Hb(P_e)+P_e\log(N-1)$, where $\Hb$ is binary entropy. Consequently, if $U\sim\Unif[N]$,
$P_e\ge 1-(I(U;X)+\log 2)/\log N$.
\end{fact}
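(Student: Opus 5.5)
The plan is the standard error-indicator argument: introduce the error event as an auxiliary variable, expand one joint entropy in two orders with the chain rule, and then specialize to uniform $U$ for the stated consequence. No earlier result of the paper is needed beyond the chain rule for entropy. Let $E:=\mathbf 1\{\hat U\ne U\}$ with $\hat U=g(X)$, so that $\Pr[E=1]=P_e$.

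\textbf{Step 1: two expansions of $H(E,U\mid X)$.} First expand as
\[
H(E,U\mid X)=H(U\mid X)+H(E\mid U,X).
\]
Since $E$ is a deterministic function of $(U,g(X))$, the second term vanishes, so $H(E,U\mid X)=H(U\mid X)$. Second expand as
\[
H(E,U\mid X)=H(E\mid X)+H(U\mid E,X).
\]

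\textbf{Step 2: bound the two terms of the second expansion.}
\begin{itemize}
\item Conditioning reduces entropy, so $H(E\mid X)\le H(E)=\Hb(P_e)$.
\item Split $H(U\mid E,X)=(1-P_e)\,H(U\mid E=0,X)+P_e\,H(U\mid E=1,X)$.
\item On $\{E=0\}$ we have $U=g(X)$, so $H(U\mid E=0,X)=0$.
\item On $\{E=1\}$, given $X$, the variable $U$ lies in the set $[N]\setminus\{g(X)\}$ of size $N-1$. Hence $H(U\mid E=1,X)\le\log(N-1)$ pointwise in $X$, and therefore also after averaging.
\end{itemize}
Combining the two steps gives $H(U\mid X)\le\Hb(P_e)+P_e\log(N-1)$.

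\textbf{Step 3: the consequence for uniform $U$.} When $U\sim\Unif[N]$, we have $H(U\mid X)=\log N-I(U;X)$. Use the crude bounds $\Hb(P_e)\le\log 2$ and $\log(N-1)\le\log N$ to get
\[
\log N-I(U;X)\le\log2+P_e\log N .
\]
Rearranging gives $P_e\ge 1-(I(U;X)+\log2)/\log N$.

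There is no real obstacle. The one point needing care is that the bound $H(U\mid E=1,X)\le\log(N-1)$ must be justified conditionally on each value of $X$, since the excluded point $g(X)$ depends on $X$; averaging is done only afterwards. The degenerate case $N=1$ is trivial, because then $P_e=0$ and $\log(N-1)$ does not arise.
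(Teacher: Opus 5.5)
Your proof is correct and is the standard error-indicator argument of Cover--Thomas, which the paper cites for this Fact without proving it. Your Step 3 rearrangement (bounding $\Hb(P_e)\le\log 2$ and $\log(N-1)\le\log N$) is exactly how the paper obtains the uniform-$U$ consequence in the proof of \cref{thm:separation}. The only implicit assumption is $g(X)\in[N]$, which your $\log(N-1)$ step needs and which the paper's attribution task guarantees. Without it you get $\log N$ in place of $\log(N-1)$, and that still yields the stated consequence.
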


\begin{fact}[Stein's lemma and its ergodic extension; \citealp{cover2006elements, barron1985density}]\label{fact:stein}
(i) For testing $P$ against $Q$ from $n$ i.i.d.\ samples at fixed Type-I level $\alpha\in(0,1)$, the smallest
Type-II error $\beta_n$ obeys $-\tfrac1n\log\beta_n\to\KL{P}{Q}$ \citep{cover2006elements}. (ii) More
generally, if $P,Q$ are the length-$n$ marginals of stationary ergodic laws with positive KL-rate
$\bar D:=\lim_n\tfrac1n\KL{P}{Q}>0$ and an $L^1$-domination condition on the log-density, then the generalized
(relative-entropy) Shannon--McMillan--Breiman theorem \citep{barron1985density} gives
$\tfrac1n\log\tfrac{dP}{dQ}\to\bar D$ almost surely under $P$, so the optimal test of $P$ against $Q$ on a
single length-$n$ sample has error $\to0$. A merely diverging \emph{total} KL on non-i.i.d.\ laws does not
suffice: it is the per-symbol rate $\bar D$, under that domination condition, that must be bounded below.
\end{fact}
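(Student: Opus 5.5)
The plan treats both parts as the usual two-sided argument for the log-likelihood ratio $\Lambda_n:=\log\tfrac{dP}{dQ}(X^n)$. The achievability side comes from a concentration statement for $\tfrac1n\Lambda_n$ under $P$, together with a change-of-measure bound under $Q$. The converse side comes from data processing of relative entropy. Part (ii) is then the same scheme with the law of large numbers replaced by Barron's generalized Shannon--McMillan--Breiman theorem, which I import rather than reprove.

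\textbf{Part (i), achievability.} Fix $\eps>0$ and use the test that declares $P$ iff $\tfrac1n\Lambda_n>\KL{P}{Q}-\eps$.
\begin{itemize}
\item Under $P$, $\Lambda_n$ is a sum of i.i.d.\ terms with mean $\KL{P}{Q}$. The weak law of large numbers therefore gives Type-I error $\to0$, in particular $\le\alpha$ for large $n$.
\item Under $Q$, the standard change of measure gives
\[
Q\bigl(\Lambda_n>n(\KL{P}{Q}-\eps)\bigr)\le e^{-n(\KL{P}{Q}-\eps)}\,P\bigl(\Lambda_n>n(\KL{P}{Q}-\eps)\bigr)\le e^{-n(\KL{P}{Q}-\eps)}.
\]
\end{itemize}
Hence $\liminf-\tfrac1n\log\beta_n\ge\KL{P}{Q}-\eps$.

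\textbf{Part (i), converse.} Take any test with acceptance region $A_n$ satisfying $P(A_n)\ge1-\alpha$. Split $A_n$ into the part where $\tfrac1n\Lambda_n\le\KL{P}{Q}+\eps$ and its complement. The second part has vanishing $P$-mass by the law of large numbers. On the first part, $dQ\ge e^{-n(\KL{P}{Q}+\eps)}dP$. This yields
\[
\beta_n=Q(A_n)\ge e^{-n(\KL{P}{Q}+\eps)}\bigl(1-\alpha-o(1)\bigr),
\]
so $\limsup-\tfrac1n\log\beta_n\le\KL{P}{Q}+\eps$. Letting $\eps\downarrow0$ closes part (i).

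\textbf{Part (ii).} The only new ingredient is that $\tfrac1n\Lambda_n\to\bar D$ almost surely under $P$. This is exactly Barron's relative-entropy SMB theorem \citep{barron1985density}, which requires stationarity, ergodicity and the $L^1$-domination hypothesis on the log-density. I would cite it rather than reprove it. This is the main obstacle: it is where ergodicity and domination are genuinely used, and the i.i.d.\ law of large numbers does not transfer.

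Given almost-sure convergence, I rerun the achievability argument with threshold $\bar D-\eps$.
\begin{itemize}
\item The Type-I error tends to $0$ by almost-sure, hence in-probability, convergence.
\item The Type-II error is at most $e^{-n(\bar D-\eps)}\to0$ by the same change of measure, using $\bar D>0$ and choosing $\eps<\bar D$.
\end{itemize}
Both errors therefore vanish for the likelihood-ratio test, and the optimal test can only do better.

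\textbf{Closing caveat.} To support the final sentence, I would exhibit a counterexample in which $\KL{P}{Q}\to\infty$ while $\tfrac1n\KL{P}{Q}\to0$ and the errors do not vanish. Take $P$ and $Q$ that differ in a single coordinate through distributions with unbounded but heavy-tailed log-ratio, so the total KL diverges while the total variation distance stays bounded away from $1$. This shows that the per-symbol rate, not the total divergence, is the operative quantity.
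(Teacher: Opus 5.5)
Your proposal is correct. The paper gives no proof of this Fact: it imports (i) from Cover--Thomas and (ii) from Barron, and your argument is the standard one behind those citations.

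The one place a real comparison is possible is the step ``so the optimal test has error $\to0$''. When the paper uses the Fact in \cref{thm:separation}, it glosses this step as concentration of $\tfrac1n\log(\pbar/\pnat)$ ``at a positive rate under $\pbar$ and a negative rate under $\pnat$''. The second half of that gloss would need a second relative Shannon--McMillan--Breiman theorem with the two laws exchanged, and the stated hypotheses do not provide one. Your change-of-measure bound $Q(\Lambda_n>n(\bar D-\eps))\le e^{-n(\bar D-\eps)}$ needs nothing under $Q$ beyond $\bar D>0$. It therefore uses only the one-sided limit that Barron's theorem actually delivers, and it is the better route.

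Two small repairs are needed. First, your converse in (i) is the relative-entropy typical-set argument, not data processing as your opening sentence says. Data processing on the test outcome only gives $\limsup_n-\tfrac1n\log\beta_n\le\KL{P}{Q}/(1-\alpha)$, and it is your split of $A_n$ that removes the factor $1/(1-\alpha)$. That split also tacitly uses $\KL{P}{Q}<\infty$, hence $P\ll Q$ and an integrable log-ratio for the law of large numbers.

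Second, the closing counterexample cannot work as worded: a fixed pair of laws on one coordinate has constant divergence, so ``the total KL diverges'' forces the pair to depend on $n$. For example, put $\mathrm{Ber}(\eps)$ under $P$ and $\mathrm{Ber}(e^{-\sqrt n})$ under $Q$ on the first coordinate, with all other coordinates identical. Then $\KL{P}{Q}=\eps\sqrt n-\Hb(\eps)+o(1)\to\infty$ and $\tfrac1n\KL{P}{Q}\to0$, while $\mathrm{TV}(P,Q)\le\eps$, so the two error probabilities of every test sum to at least $1-\eps$. If you want a single consistent process, take independent coordinates $\mathrm{Ber}(i^{-2})$ under $P$ versus $\mathrm{Ber}(e^{-i^{3/2}})$ under $Q$. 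The per-coordinate divergences sum to order $\sqrt n$, but the squared Hellinger distances are summable. By Kakutani the infinite products are then equivalent, and the total variation stays bounded away from $1$.
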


Three further results that we \emph{import} rather than use as machinery (the Level-0 detection rate, the
distortion-constrained fingerprinting converse, and the general-channel coding theorem) are stated, with the
non-standard terminology they involve, in \cref{sec:imported}.

\section{Results}
\label{sec:results}

One object organizes the results: the information profile $\nu(t):=I(S;X_t\mid X_{<t})$ of
\cref{prop:profile} (\cref{sec:prelim}). Attribution and extraction read its \emph{mass}
$\|\nu\|_1=I(S;X)=\Phibud(n)$; localization reads its \emph{shape} $\supp(\nu)$ together with the edit
coupling; detection is the mechanism-blind baseline, priced by presence rather than the mass. Everything below
is a statement about $\nu$: its \emph{mass} funds attribution and extraction and sets them against the
detection baseline (\cref{thm:budget,thm:separation,thm:dichotomy,thm:entropyclose}); its \emph{shape} governs
localization (\cref{thm:crop}); and the levels draw on one shared budget (\cref{prop:region}). Full proofs are in
\cref{sec:appendix}, and we are explicit throughout about which step is imported and which is ours.

\begin{figure}[t]
\centering
\begin{tikzpicture}[font=\sffamily,>=Stealth,line cap=round,
  rbox/.style={rounded corners=2.5pt,draw=detectblue,line width=0.9pt,fill=bluetint,
               align=center,inner sep=3.5pt,text=detectblue},
  cbox/.style={rounded corners=2.5pt,draw=attribgold!85!black,line width=0.9pt,fill=goldtint,
               align=center,inner sep=3.5pt,text=attribgold!88!black},
  dep/.style={->,detectblue!75,line width=0.9pt},
  sec/.style={->,inkgray!55,line width=0.7pt,densely dashed},
  ]
  \node[rbox] (prof) at (7.0,5.45) {\textbf{Prop~\ref*{prop:profile}}\\[1pt]{\scriptsize the information profile $\nu$}};
  \node[rbox] (bud)  at (7.0,4.05) {\textbf{Thm~\ref*{thm:budget}}\\[1pt]{\scriptsize forensic-recovery budget $\;\|\nu\|_1\le n\Delta$}};
  \node[rbox] (sep)  at (1.55,2.35) {\textbf{Thm~\ref*{thm:separation}}\\[1pt]{\scriptsize detection $\ne$ attribution}};
  \node[rbox] (dic)  at (4.75,2.35) {\textbf{Thm~\ref*{thm:dichotomy}}\\[1pt]{\scriptsize two quality models}};
  \node[rbox] (ent)  at (8.05,2.35) {\textbf{Thm~\ref*{thm:entropyclose}}\\[1pt]{\scriptsize entropy column $\Theta(\tfrac{\log N}{h})$}};
  \node[rbox] (reg)  at (11.25,2.35){\textbf{Prop~\ref*{prop:region}}\\[1pt]{\scriptsize rate region}};
  \node[rbox] (crop) at (14.0,2.35) {\textbf{Thm~\ref*{thm:crop}}\\[1pt]{\scriptsize footprint--resolution}};
  \node[cbox] (brg)  at (4.75,0.7) {\textbf{Cor~\ref*{cor:bridge}}\\[1pt]{\scriptsize footprint $\Rightarrow$ localization}};
  \node[cbox] (ext)  at (7.6,0.7) {\textbf{Cor~\ref*{cor:extraction-h}}\\[1pt]{\scriptsize extraction $\Theta(\ell/h)$}};
  \node[cbox] (trap) at (10.6,0.7) {\textbf{Prop~\ref*{rem:listsize}}\\[1pt]{\scriptsize the collision trap $h/r_2$}};
  \draw[dep] (prof) -- (bud);
  \draw[dep] (bud) to[out=196,in=90,looseness=0.75] (sep.north);
  \draw[dep] (bud) to[out=214,in=90,looseness=0.85] (dic.north);
  \draw[dep] (bud) to[out=312,in=90,looseness=0.85] (ent.north);
  \draw[dep] (bud) to[out=344,in=90,looseness=0.85] (reg.north);
  \draw[dep] (bud) to[out=350,in=90,looseness=0.75] (crop.north);
  \draw[dep] (dic) -- (brg);
  \draw[dep] (ent) -- (ext);
  \draw[dep] (ent.south east) to[out=-50,in=90] (trap.north);
  \draw[sec] (sep) -- (dic);     
  \draw[sec] (dic.south east) to[bend right=12] (reg.south west);  
  \draw[densely dotted,inkgray!50,line width=0.7pt] (12.6,0.2) -- (12.6,2.95);
  \node[font=\scriptsize\itshape,text=detectblue,anchor=east]        at (12.4,1.5) {mass};
  \node[font=\scriptsize\itshape,text=attribgold!88!black,anchor=west] at (12.8,1.5) {shape};
\end{tikzpicture}
\caption{\textbf{The results at a glance.} Every result is a reading of the information profile $\nu$. The
forensic-recovery budget (\cref{thm:budget}) caps its mass $\|\nu\|_1=I(S;X)$; the four results left of the divider
concern that mass budget (the detect/attribute separation; the two quality models; the closed entropy column; the shared rate
region), while \cref{thm:crop} reads its shape. Solid arrows mark a direct dependence, dashed arrows a
secondary use (\cref{thm:separation} supplies the dichotomy's lower bound; \cref{thm:dichotomy}(a) supplies the
rate region's achievability). Corollaries and the collision-trap proposition hang under their parents; the
imported and textbook facts each proof
draws on are catalogued in \cref{sec:imported}.}
\label{fig:resultmap}
\end{figure}

\subsection{The forensic-recovery budget}

Each forensic procedure is funded from a single account, the mass $\norm{\nu}_1=I(S;X)$ of the profile, and
the quality models cap that account in three different norms. The theorem reads off all three.

\begin{restatable}[Forensic-recovery budget]{theorem}{ThmBudget}\label{thm:budget}
Let a keyed scheme generate $X\in\Sigma^n$ from a secret $S$ via a key $K=K(S)$, and set
$\Phibud(n):=I(S;X)=\sup_g I(S;g(X,\kappa)\mid\kappa)$ (informed decoders, registry-conditional per the
convention of \cref{sec:prelim}). Then (a) under \cref{asm:bounded}, $\Phibud(n)\le n\Delta$; (b)
under \cref{asm:free} with statistical undetectability, $\Phibud(n)\le n h_n$ ($h_n\downarrow h$ for
stationary sources; asymptotic statements may use $h$); and (c) under the footprint cap
$\mathcal E(m)$ of \cref{sec:prelim}, with marked set $T$ ($|T|\le m$),
$\Phibud(n)=\sum_{t\in T}\nu(t)\le m\log q$ (\cref{prop:profile}(iii)). In particular every Level-1 and
Level-2 procedure recovers at most $\Phibud(n)$ nats about the secret, irrespective of $N$ or $\ell$.
\end{restatable}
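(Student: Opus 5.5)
The plan is to read every part of \cref{thm:budget} off the information profile of \cref{prop:profile}, with all quantities registry-conditional. First I would justify the variational identity $\Phibud(n)=\sup_g I(S;g(X,\kappa)\mid\kappa)$. Given $\kappa$, the chain $S\to X\to g(X,\kappa)$ is Markov, so data processing (\cref{fact:golden}) gives $I(S;g(X,\kappa)\mid\kappa)\le I(S;X\mid\kappa)$. The supremum is attained at $g(X,\kappa)=X$. The closing sentence ("every Level-1 and Level-2 procedure recovers at most $\Phibud(n)$ nats, irrespective of $N$ or $\ell$") is then immediate, because any attributor or extractor is such a $g$ and the bound never mentions $N$ or $\ell$.

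\textbf{Part (a).} I would sum \cref{prop:profile}(ii) over positions. Part (i) of that proposition gives $\Phibud(n)=\sum_t\nu(t)$, and part (ii) gives $\nu(t)\le\bar d(t)\le\Delta$ under either form of \cref{asm:bounded}. Hence $\Phibud(n)\le\sum_t\bar d(t)\le n\Delta$. Equivalently, one can apply the golden formula once globally with reference $Q=\pnat$ to get $I(S;X)\le\E_S\KL{\pwm(\cdot\mid S)}{\pnat}$, and then use the chain-rule identity recorded in \cref{asm:bounded}.

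\textbf{Part (c).} This is a restatement of \cref{prop:profile}(iii). Off $T$ the kernel does not depend on $s$, so $\nu(t)=0$ there. On $T$, each term satisfies $\nu(t)\le H(X_t\mid X_{<t})\le\log q$.

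\textbf{Part (b).} This is the only step that needs a new idea, and I expect it to be the main obstacle. The naive route bounds $I(S;X\mid\kappa)\le H(X\mid\kappa)$, which is fine. The difficulty is that under \cref{asm:free} only the key-averaged law is natural, $\pbar=\pnat$, and $\pbar$ averages over the key draw, not over the secret with $\kappa$ held fixed. For a fixed registry, the law of $X$ given $\kappa$ is the uniform mixture $\tfrac1N\sum_i\pwm^{(k_i)}$, which need not equal $\pnat$.

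To handle this I would use concavity of entropy rather than a pointwise identity:
\[
\E_\kappa H(X\mid\kappa=\kappa_0)\le H\bigl(\E_\kappa P_{X\mid\kappa}\bigr)=H(\pbar)=H(\pnat^{(n)})=nh_n .
\]
This holds because, with i.i.d.\ keys and a uniform secret, averaging the registry-conditional law of $X$ over the registry returns $\pbar$. Combined with $I(S;X\mid\kappa)\le H(X\mid\kappa)$, averaged over $\kappa$ as in the paper's convention, this gives $\Phibud(n)\le nh_n$. Finally, $h_n\downarrow h$ for stationary sources is the standard monotonicity of block entropy per symbol; that monotonicity is what justifies replacing $h_n$ by $h$ in asymptotic statements.

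The point I would double-check is that "registry-conditional" here means averaged over the i.i.d.\ registry draw. If the statement were instead meant pointwise for every realized registry, concavity would not suffice, and I would fall back to the weaker bound $H(X\mid\kappa)\le n\log q$ or add a hypothesis.
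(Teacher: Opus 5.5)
Your proposal is correct and follows the paper's proof essentially step for step: (c) is read off \cref{prop:profile}(iii); (a) is the golden formula with reference $\pnat$ plus the chain rule for relative entropy, and your per-position sum of \cref{prop:profile}(ii) is the same bound unrolled; (b) is exactly the paper's chain $I(S;X\mid\kappa)\le H(X\mid\kappa)\le H(X)=H(\pbar)=nh_n$, whose middle step the paper justifies by the same Jensen bound $\E_\kappa H(\hat p_\kappa)\le H(\E_\kappa\hat p_\kappa)$ that you propose. Your reading of ``registry-conditional'' as averaged over the i.i.d.\ key draw is the right one, and the paper likewise notes that a pointwise, fixed-registry version needs an extra mixture-entropy hypothesis (the fixed-key variant in \cref{app:entropyclose}).
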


\begin{proof}[Proof sketch]
$S\to K\to X$ is Markov, so $I(S;X)\le I(K;X)\le\E_K\KL{\pwm^{(K)}}{Q}$ for every reference $Q$ (golden
formula). Take $Q=\pnat$ with \cref{asm:bounded} for (a); for (b), when $\pbar=\pnat$ there is no divergence
left to spend, so no reference beats the source itself and $I(K;X)\le H(X)=nh_n$.
For (c), \cref{prop:profile}(iii) gives $\nu(t)=0$ off $T$, and each remaining term is at most
$H(X_t\mid X_{<t})\le\log q$.
\end{proof}

\begin{figure}[t]
\centering
\scalebox{1.15}{
\begin{tikzpicture}[font=\sffamily,>=Stealth,line cap=round,
   nd/.style={circle,draw=detectblue,line width=1pt,fill=bluetint,minimum size=8.5mm,
              font=\normalsize\bfseries,text=detectblue,inner sep=0pt}]
  \node[nd] (S) at (0,3.0)   {$S$};
  \node[nd] (K) at (2.3,3.0) {$K$};
  \node[nd] (X) at (4.6,3.0) {$X$};
  \draw[->,detectblue,line width=1pt] (S)--(K) node[midway,above=1pt,font=\scriptsize,text=inkgray]{$K=K(S)$};
  \draw[->,detectblue,line width=1pt] (K)--(X) node[midway,above=1pt,font=\scriptsize,text=inkgray]{$X\sim\pwm^{K}$};
  \node[font=\scriptsize\itshape,text=inkgray] at (0,2.35)   {secret};
  \node[font=\scriptsize\itshape,text=inkgray] at (2.3,2.35) {key};
  \node[font=\scriptsize\itshape,text=inkgray] at (4.7,2.35) {carrier ($n$ tokens)};
  \draw[decorate,decoration={brace,amplitude=4pt},inkgray!55,line width=0.6pt] (-0.5,3.55) -- (5.1,3.55);
  \node[anchor=south,font=\scriptsize,text=inkgray] at (2.3,3.62)
     {the secret reaches the carrier only through the key};

  \node[anchor=west,font=\footnotesize] at (-0.55,1.45)
    {$\Phibud(n)=I(S;X)\;\underset{\text{\scriptsize data proc.}}{\le}\;I(K;X)
      \;\underset{\text{\scriptsize golden formula}}{\le}\;
      \E_K\KL{\pwm^{K}}{\pnat}\;\underset{\text{\scriptsize quality budget}}{\le}\;n\Delta$};

  \def\BW{9.0}\def\fillf{0.46}
  \draw[inkgray!45,line width=0.8pt,fill=goldtint,rounded corners=1.2pt] (0,0) rectangle (\BW,0.55);
  \fill[detectblue!78,rounded corners=1.2pt] (0,0) rectangle ({\fillf*\BW},0.55);
  \draw[detectblue,line width=0.9pt,rounded corners=1.2pt] (0,0) rectangle ({\fillf*\BW},0.55);
  \node[anchor=west,font=\scriptsize,text=white] at (0.12,0.275) {recoverable $\Phibud(n)=I(S;X)$};
  \node[anchor=east,font=\scriptsize\itshape,text=inkgray] at (\BW-0.1,0.275) {slack};
  \draw[densely dotted,inkgray,line width=0.7pt] ({\fillf*\BW},-0.05)--({\fillf*\BW},0.78);
  \draw[<->,inkgray!75,line width=0.6pt] (0,-0.28)--(\BW,-0.28);
  \node[anchor=north,font=\scriptsize,text=inkgray] at (\BW/2,-0.33)
     {distortion budget $\;n\Delta$ \;\; (one ledger funds attribution \emph{and} extraction)};
\end{tikzpicture}}
\caption{\textbf{The forensic-recovery budget} (\cref{thm:budget}). The secret reaches the carrier only
through the key, so $S\to K\to X$ is a Markov chain, and three forced steps (data processing, then the
golden formula, then the quality budget) bound the recoverable information $\Phibud(n)=I(S;X)$ by the
distortion actually spent, $n\Delta$. Nothing about embedding versus biasing enters: it is one ledger,
and every nat it holds is shared between attribution and extraction (\cref{prop:region}).}
\label{fig:budget}
\end{figure}

This is a pure data-processing bound: it uses only the quality budget, never the embedding/biasing
distinction, so it holds for \emph{every} scheme. In the profile reading, (a), (b), (c) are three faces of
one quantity: the $L^\infty$ cap, the entropy ceiling, and the $L^0$ cap on the mass $\|\nu\|_1$. The
chain-rule reading $\Phibud(n)=\sum_t\nu(t)\le\E_K\sum_i d_i^{(K)}=\sum_i d_i$ (here
$d_i:=\E_K d_i^{(K)}=\bar d(i)$, the key-averaged per-token distortion) is the same budget once more: the
recoverable secret cannot exceed the total per-token distortion spent, and a single ledger funds attribution
and extraction alike (\cref{fig:budget}).

\subsection{Detection is not attribution}

Detection asks whether a mark is present; attribution asks which of $N$ keys placed it. The distinction is
not rhetorical---the two tasks spend the mass $\norm{\nu}_1$ at different rates---and the theorem below makes
the separation quantitative.

\begin{restatable}[Attribution converse and the forensic gap]{theorem}{ThmSeparation}\label{thm:separation}
Under \cref{asm:bounded}, for $N\ge2$ and any attribution rule,
\begin{equation}\label{eq:fano}
  P_e\ \ge\ 1-\frac{n\Delta+\log 2}{\log N},
\end{equation}
so $\ntask{\att}=\Omega(\log N/\Delta)$. Moreover, suppose the scheme is \emph{mixture-detectable}: the
key-unknown test of $\pbar$ against $\pnat$ reaches vanishing error by some length $\ntask{\det}^{\mathrm{mix}}$
(in the stationary-token model a positive per-token mixture divergence $\tfrac1n\KL{\pbar}{\pnat}\ge c>0$
suffices). Then for all sufficiently large $N$, on the window $\ntask{\det}^{\mathrm{mix}}\le n<(1-\eta)\log
N/\Delta$ the watermark is detectable with error $\to0$ while every attribution rule has $P_e\ge\eta-o(1)$.
\end{restatable}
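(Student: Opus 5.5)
The converse \eqref{eq:fano} is Fano composed with the forensic-recovery budget. Take $U\sim\Unif[N]$ as the secret $S$, so that $U\to K=k_U\to X$ is Markov given the registry $\kappa$. Any attribution rule is an informed estimator $\hat U=g(X,\kappa)$. Applying \cref{fact:fano} conditionally on $\kappa$ gives
\[
P_e\ge 1-\frac{I(U;X\mid\kappa)+\log 2}{\log N}.
\]
By \cref{thm:budget}(a), $I(U;X\mid\kappa)=\Phibud(n)\le n\Delta$. That bound rests on the golden formula with reference $\pnat$ and needs only the avg-form of \cref{asm:bounded}, via $\sum_t\bar d(t)\le n\Delta$, or equivalently \cref{prop:profile}(ii) summed over $t$. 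Substituting yields \eqref{eq:fano}. For the rate, fix a target error $P_e\le\epsilon<1$. Then \eqref{eq:fano} forces $n\Delta\ge(1-\epsilon)\log N-\log 2$, so $\ntask{\att}=\Omega(\log N/\Delta)$ once $N$ exceeds a constant.

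For the window, fix $\eta\in(0,1)$ and take $n<(1-\eta)\log N/\Delta$. Then
\[
P_e\ge 1-(1-\eta)-\frac{\log 2}{\log N}=\eta-\frac{\log 2}{\log N},
\]
which is $\eta-o(1)$ as $N\to\infty$, and this holds for every attribution rule. On the detection side, mixture-detectability says the key-unknown test of $\pbar$ against $\pnat$ has error $\to0$ at every $n\ge\ntask{\det}^{\mathrm{mix}}$. This holds by hypothesis; in the stationary-token case it follows from \cref{fact:stein}, using the positive per-token rate $\tfrac1n\KL{\pbar}{\pnat}\ge c$ together with the domination condition.

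It remains to check that the window is nonempty and co-located in one scheme. The key point is that $\ntask{\det}^{\mathrm{mix}}$ depends on the per-token law $\pbar$ versus $\pnat$, which we take not to depend on $N$. The natural instance is i.i.d.\ keys, so $\pbar$ is the same for every $N$. Since $(1-\eta)\log N/\Delta\to\infty$, for all sufficiently large $N$ the interval $[\ntask{\det}^{\mathrm{mix}},(1-\eta)\log N/\Delta)$ contains lengths $n$ at which both statements hold at once.

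The main obstacle is making ``detectable with error $\to0$'' meaningful on a window whose upper end grows with $N$. I would phrase it as a double limit: for every $\epsilon$ there is $n_0(\epsilon)$ with detection error $\le\epsilon$ for all $n\ge n_0$, uniformly in $N$. This needs the mixture law not to degrade as $N$ grows. I would state that uniformity explicitly as part of the mixture-detectability hypothesis and verify it for the stationary-token case, where $c$ is $N$-independent. The remaining care is consistency between $c$ and $\Delta$: by convexity, $\tfrac1n\KL{\pbar}{\pnat}\le\max_t\bar d(t)\le\Delta$, so $c\le\Delta$. Hence the window has width at least of order $\log N/\Delta-O(1/c)$, which is positive for large $N$.
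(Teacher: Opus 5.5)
Your proposal is correct and follows essentially the paper's proof. Both combine Fano with the budget bound $I(U;X)\le n\Delta$ of \cref{thm:budget}(a), which uses only the avg-form, and then substitute $n\Delta<(1-\eta)\log N$ on the window to get $P_e\ge\eta-\log 2/\log N$.

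Your explicit requirement that the mixture-detection error be uniform in $N$ (automatic for i.i.d.\ keys, since $\pbar$ does not depend on $N$) is a more careful version of the paper's remark that the window is nonempty once $\log N=\omega(\Delta\,\ntask{\det}^{\mathrm{mix}})$. One caveat: the \cref{fact:stein} route to mixture-detectability also needs $\pbar$ ergodic, or else a positive divergence rate on each ergodic component. The paper flags this because a key-mixture is typically non-ergodic.
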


\begin{proof}[Proof sketch]
\Cref{thm:budget}(a) gives $I(U;X)\le n\Delta$; Fano, $H(U\mid X)\le\log2+P_e\log N$, with $H(U\mid X)
=\log N-I(U;X)$, gives \eqref{eq:fano}. On the window $n\ge\ntask{\det}^{\mathrm{mix}}$ the scheme is
mixture-detectable (key-unknown detection error $\to0$; in the stationary-token model a positive per-token
mixture divergence $\tfrac1n\KL{\pbar}{\pnat}\ge c>0$ suffices, by the relative-entropy ergodic theorem),
while $n\Delta<(1-\eta)\log N$ forces $P_e\ge\eta-o(1)$.
\end{proof}

In one sentence: detection is a fixed two-point test, $\pbar$ against $\pnat$, whose cost does not grow with
$N$; attribution must extract $\log N$ nats of identity, and the budget funding it accrues only at rate
$\Delta$ per token, so it cannot begin until $n\Delta\gtrsim\log N$. The window between is where the watermark
is loud but anonymous.

We are careful about credit. The converse \eqref{eq:fano} is an elementary single-user Fano bound on $I(U;X)$
under the budget (derived from scratch in \cref{sec:appendix}); it is the collusion-free special case of the
\emph{distortion-constrained fingerprinting converse} of \citet{moulin2008universal} and
\citet{somekhbaruch2005capacity}, whose hard content (the worst-case collusion channel) we do not use. We
claim no novelty for it. Note also that the converse consumes only the \emph{avg-form} cap of
\cref{sec:prelim} (Fano on $\|\nu\|_1$ needs $\max_t\bar d(t)\le\Delta$, not the sup-form; the sup-form is
what the achievability constructions certify). What we add is the explicit \emph{co-located} regime: a single
biasing scheme, at a single length, detectable yet information-theoretically unattributable. The
mixture-detectability hypothesis is genuine, not free: the parity scheme on $\{0,1\}^2$ (under key $k_0$,
$(x_1,x_2)$ is uniform on $\{00,11\}$; under $k_1$, uniform on $\{01,10\}$; each has $\KLop=\log2$ within
budget, yet the mixture is uniform on $\{0,1\}^2=\pnat$ exactly) is detectable only with the key, which itself
is informative; for distortion-free schemes ``detect but not attribute'' is automatic,
not a separation. The clean takeaway survives at every $N$: attribution costs $\Theta(\log N)$ more tokens
than detection (\cref{sec:numerics}). The razor-thin endpoint, $P_e\to1$ at near-perfect detection, is the
asymptotic-in-$N$ sharpening of that same gap, not a separate phenomenon.

\subsection{Two quality models, and the localization map}

The separation above caps every bounded-distortion scheme, embedding included; it does not by itself separate
the mechanisms. The mechanism enters through which \emph{quality model} a scheme can use: the two norm
caps of \cref{sec:prelim} on the same profile $\nu$. Two orthogonal axes are in play (\cref{fig:twoaxes} in \cref{sec:taxonomy}): the
\emph{footprint}, which this subsection moves along (does the cost decouple from $\Delta$?), and the
\emph{readout resolution} of \cref{thm:crop} (can the mark be named in place?). The \emph{distributional cap}
$\mathcal B(\Delta)$ (the
sup-form per-secret cap: every kernel within $\Delta$ of natural at every prefix, hence $\nu(t)\le\Delta$
everywhere, an $L^\infty$ cap) keeps the mark subtle on every token; the \emph{footprint cap}
$\mathcal E(m)$ (per-secret kernels unmarked off a set $T$ with $|T|\le m$, hence $\nu=0$ off $T$, an
$L^0$ cap) confines it to $\le m$ coordinates, with no per-token bound. The two are definitionally
incomparable: different norms on one object.

Attribution succeeds once the cumulative profile reaches $\nu([n])\ge(1-o(1))\log N$ (Fano on the mass), so
$\ntask{\att}$ is a \emph{first-passage time} of $\nu$. The two caps force two extremal shapes, and the
dichotomy is nothing but their two first-passage lengths: a thin-and-wide $\nu\equiv\Delta$ reaches $\log N$
only at $n\sim\log N/\Delta$, while a tall-and-narrow $\nu=(\log q)\,\mathbf 1_T$ reaches it already at
$n\sim\log N/\log q$, with no $\Delta$ in sight. The theorem makes this precise.

\begin{restatable}[Two quality models]{theorem}{ThmDichotomy}\label{thm:dichotomy}
For attribution among $N$ users and $\ell$-bit extraction:
\begin{enumerate}[label=(\alph*)]
  \item \textbf{(Distributional cap couples.)} Every scheme in $\mathcal B(\Delta)$ has
  $\ntask{\att}=\Omega(\log N/\Delta)$ and $\ntask{\ext}=\Omega(\ell/\Delta)$. These are achieved, with
  $\ntask{\att}=\Theta(\log N/\Delta)$ and $\ntask{\ext}=\Theta(\ell/\Delta)$, when $\pnat$ is memoryless,
  or regular ($\pnat(\cdot\mid\text{past})\ge p_0>0$) and strongly mixing; for general $q$-ary $\pnat$ the
  matching upper bound is conjectural.
  \item \textbf{(Footprint cap decouples.)} There is a scheme in $\mathcal E(m)$ achieving
  $\ntask{\att}=\Theta(\log N/\log q)$, $\ntask{\ext}=\Theta(\ell/\log q)$ at $m=\Theta(\log N/\log q)$, with
  no dependence on any per-token cap. In a carrier of length $n\gg\ntask{\att}$ the footprint fraction
  $m/n\to0$ (and the aggregate distortion $n^{-1}\sum_t\bar d(t)\to0$ when $\pnat$ is regular,
  $\pnat(\cdot\mid\text{past})\ge p_0>0$): imperceptibility comes from the vanishing footprint, not a
  per-token bound.
\end{enumerate}
The localization map makes $\mathcal E(m)$ available: a scheme in $\mathcal E(m)$ has localization map $T$
(\cref{cor:bridge}), and a biasing scheme (no localization map) cannot lie in $\mathcal E(m)$ for $m=o(n)$,
so under a quality constraint it is confined to $\mathcal B(\Delta)$ and pays the coupled cost~(a).
\end{restatable}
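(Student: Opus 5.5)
The plan is to split the theorem into its converse half, its two achievability constructions, and the closing remark about localization maps, using the mass identity of \cref{prop:profile}(i) and the budget of \cref{thm:budget} throughout.

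\textbf{Converses in (a).} These are immediate from \cref{thm:separation}. By \cref{prop:profile}(ii) and \cref{thm:budget}(a), $I(U;X)\le\norm{\nu}_1\le n\Delta$. Fano (\cref{fact:fano}) then gives $P_e\ge 1-(n\Delta+\log2)/\log N$, so any fixed target error $P_e\le 1-\eta$ forces $n\ge(\eta\log N-\log2)/\Delta$. Extraction is the same argument with $W\sim\Unif\{0,1\}^\ell$ and $\log N$ replaced by $\ell\log2$. Only the avg-form cap is used.

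\textbf{Achievability in (a).} I would use an exponential-tilting scheme in the sup-form. Each user $u$ gets an i.i.d.\ random key assigning a $\pm1$ score $g_{u,t}(x)$, balanced under $\pnat(\cdot\mid\text{past})$. The kernel is tilted as $\pwm\propto\pnat e^{\theta g_{u,t}}$, with $\theta$ chosen so that the KL stays $\le\Delta$; this requires $\theta^2\asymp\Delta$ and is possible uniformly when $\pnat(\cdot\mid\text{past})\ge p_0$. The decoder computes each user's log-likelihood ratio against $\pnat$ and picks the argmax.

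Under the true user, the score drift is about $\Delta$ per token. Under an innocent key, the score is a martingale-type sum with mean $\le0$. For memoryless $\pnat$ this reduces to a channel-coding (random coding) argument: the union bound over $N$ innocents costs $N e^{-c n\Delta}$, which vanishes once $n\ge C\log N/\Delta$. For regular strongly mixing sources I would replace i.i.d.\ concentration with Azuma/Bernstein for martingale differences, bounded thanks to $p_0$, and use mixing to control the variance of the drift term. The main obstacle is making the per-innocent deviation exponent of order $\Delta$ rather than $\Delta^2$. I would handle this with a Chernoff bound on the likelihood ratio itself (change of measure: $\Pr_{\text{innocent}}[\text{LLR}>\tau]\le e^{-\tau}$), which gives the right $\log N$ threshold with $\tau\asymp n\Delta/2$. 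Concentration of the true user's LLR around $n\Delta$ then needs only a second-moment bound. For general $q$-ary sources I would leave the upper bound conjectural, as the statement does. Extraction is the case $N=2^\ell$.

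\textbf{Part (b).} I would take $T=\{1,\dots,m\}$ and $m=\lceil\log_q N\rceil$, and let user $u$ write its index in base $q$, deterministically, on $T$. Off $T$ the kernel is $\pnat$ given the past, so the scheme lies in $\mathcal E(m)$. Exact decoding gives $P_e=0$ at $n=m$. The matching lower bound follows from \cref{thm:budget}(c): $\log N\lesssim m\log q$.

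For imperceptibility, the per-position divergence on $T$ is at most $\log(1/p_0)$ under regularity, so $n^{-1}\sum_t\bar d(t)\le m\log(1/p_0)/n\to0$. Extraction again takes $N=2^\ell$.

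\textbf{Closing remark.} This is definitional. By \cref{def:embedbias}, $\mathcal E(m)$ gives $\Lmap_k=T$, and biasing means $|T|=\Omega(n)$, which excludes $m=o(n)$. Such a scheme then falls back on $\mathcal B(\Delta)$ and the converse in (a).
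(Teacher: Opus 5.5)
Your converse and your part (b) are the paper's arguments: Fano on the mass bound $I(U;X)\le n\Delta$, and the cleartext digit stamp with distortion at most $m\log(1/p_0)$. The closing remark is indeed definitional. The achievability in (a) takes a different route, and as written its key step does not go through.

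For an innocent $v$, the bound $\Pr[L_v>\tau]\le e^{-\tau}$ on $L_v=\log\bigl(\pwm^{(k_v)}(X)/\pnat(X)\bigr)$ is not the plain change of measure, because $X$ is drawn from the true user's marked law, not from $\pnat$. The bound holds only because $k_v$ is independent of $X$ and the kernels average over the key exactly to the natural one, position by position with independent key coordinates. Then $\E_{k_v}\pwm^{(k_v)}(x)=\pnat(x)$ for every $x$, and Markov gives $e^{-\tau}\,\E\bigl[\pbar(X)/\pnat(X)\bigr]=e^{-\tau}$. This is the mechanism of \cref{thm:entropyclose}(i$'$), and you need to state it.

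An exactly balanced $\pm1$ score would deliver it, but such a score need not exist. Under memoryless $\mathrm{Ber}(0.3)$, which the theorem covers, no $\pm1$ function has mean zero. With an unbalanced split the normalizer depends on the key, the key-averaged kernel is no longer $\pnat$, and the factor $\E[\pbar(X)/\pnat(X)]$ is no longer $1$.

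The repair is cheap. Use antipodal linear perturbations $\pnat(\cdot\mid x_{<t})\,(1\pm\eps\phi)$ with a fair key sign and a bounded centred $\phi$ whose variance regularity bounds below (e.g.\ a centred indicator of one symbol). These average exactly to $\pnat(\cdot\mid x_{<t})$, with divergence $\asymp\eps^2$ and log-ratio $O(\eps)$, so $\eps\asymp\sqrt\Delta$ meets the sup-form cap.

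With that repair your route is sound, and it differs from the paper's in a useful way. The paper switches one fixed order-preserving tilt on or off with an i.i.d.\ codeword bit and decodes by maximum likelihood via Verd\'u--Han (\cref{fact:verduhan}). That is where strong mixing enters, and its per-token yield is $\E\,\JSD(p_+,\pnat)$, about $\Delta/4$ on the binary witness.

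In your antipodal design, the channel from the fresh key symbol to $X_t$ has output marginal exactly $\pnat(\cdot\mid x_{<t})$. Its information therefore equals the per-key divergence, essentially all of $\Delta$. On the binary witness, $\mathrm{Ber}(\tfrac12\pm s)$ is a binary symmetric channel whose capacity is exactly $\KL{\mathrm{Ber}(\tfrac12+s)}{\mathrm{Ber}(\tfrac12)}$.

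The true user is then handled by Chebyshev on the martingale $L_U-\sum_t D_t$, where $D_t\in[(1-O(\sqrt\Delta))\Delta,\Delta]$ is the realized conditional divergence and the increments are $O(\sqrt\Delta)$. So regularity suffices and mixing is not needed at all. Your appeal to mixing for the drift is therefore unnecessary. Your worry about a $\Delta^2$ exponent is moot too, since even Azuma with $O(\sqrt\Delta)$ increments gives an exponent of order $n\Delta$. Taking $\tau=(1+\alpha)\log N$ yields $n=(1+O(\alpha)+O(\sqrt\Delta))\log N/\Delta$. This matches the Fano constant instead of losing a factor of about $4$.

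What the paper's code buys in exchange is $\pbar\ne\pnat$ at every position. Its witness is mixture-detectable and has the green-list shape, so it can also serve as the co-located scheme of \cref{thm:separation}. Yours is key-blind undetectable, lying in $\mathcal B(\Delta)$ and being distortion-free at once. That is harmless for this theorem, but it cannot exhibit the detectable-yet-unattributable window.
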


\begin{proof}[Proof sketch]
(a) Lower bound: $\mathcal B(\Delta)\Rightarrow\KL{\pwm^{(k)}}{\pnat}\le n\Delta$, so \cref{thm:separation}
applies (unconditionally). Upper bound by a \emph{conditional tilting code}: tilt the natural conditional
$\pnat(\cdot\mid X_{<i})$ by $\Delta$ when $c_u[i]=1$; the KL chain rule makes the total distortion exactly
$(\#\text{ones})\Delta\le n\Delta$ for any autoregressive $\pnat$ (the construction controls KL). Fidelity is spent in KL, but the bits a
decoder reads back are measured by the symmetric \emph{Jensen--Shannon} information
$\E\,\JSD=\Theta(\Delta)$, the capacity of the per-position channel ($C\approx\Delta/4$ in the binary
witness); under the regularity/mixing hypothesis the per-position information stays $\ge c(p_0)\Delta$
uniformly and the information density concentrates (Verd\'u--Han), giving reliable
decoding at $n=\Theta(\log N/\Delta)$ above a finite-blocklength floor $n=\Omega(1/\Delta)$.
(b) The \emph{stamp}: write the user's base-$q$ digits into the first $m=\lceil\log N/\log q\rceil$ tokens,
$T=\{1,\dots,m\}$; readback is exact, the region is $\Theta(\log N/\log q)$, and footprint
$m/n\to0$ gives the fidelity.
\end{proof}

Three clarifications keep the statement honest. First, the asymmetry is not definitional: (a)'s lower bound
rides on the forensic-recovery budget (\cref{thm:budget}), not on ``no localization map.'' Second, this is \emph{not}
an iff phase boundary. A scheme admitting a localization map that \emph{also} obeys $\mathcal B(\Delta)$ still
needs $m\ge\log N/\Delta$ readable coordinates and remains $\Delta$-coupled; $\Delta$-freeness comes from the
footprint model, which the localization map \emph{enables} but does not force. Third, the witness in (b) is a
deliberately trivial cleartext digit-stamp: it is not robust to a single edit and is imperceptible only
through the $m/n\to0$ footprint, nothing like the semantic, edit-robust localization of deployed embedding
watermarks \citep{seal2025, editguard2024}. A $\Delta$-free achievability for any operational (edit-robust,
imperceptible) watermark remains open; (b) establishes only that the information-theoretic obstruction is
absent. The honest statement is a one-directional bridge (\cref{fig:dichotomy}):

\begin{restatable}[Footprint forces localization]{corollary}{CorBridge}\label{cor:bridge}
Let the scheme lie in $\mathcal E(m)$ with marked set $T$, $|T|\le m$. Then $\nu=0$ off $T$, so
$I(S;X)=\nu(T)$; the likelihood's dependence on $s$ factors through the pairs $((x_{<t},x_t))_{t\in T}$, so
$X_{\le\max T}$ is a sufficient statistic for $S$; and when $T$ is a prefix (as for the stamp) this is the
exact Markov chain $S\to X_T\to X$. Thus $\Lmap_k:=T$ localizes the mark. The converse fails: a scheme in
$\mathcal E(m)$ that also obeys the avg-form cap $\max_t\bar d(t)\le\Delta$ (which suffices, since
$\nu(t)\le\bar d(t)$; the sup-form implies it) still needs $m\ge(1-o(1))\log N/\Delta$ marked coordinates
--- localization does not buy $\Delta$-freeness.
\end{restatable}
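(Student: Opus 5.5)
The corollary has four parts: the profile vanishes off $T$, a sufficient statistic, the exact Markov chain for a prefix $T$, and the failed converse. I will prove them in that order.

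\emph{Profile vanishes off $T$.} This is \cref{prop:profile}(iii) applied verbatim. Off $T$ every per-secret kernel equals $\pnat(\cdot\mid x_{<t})$, so $X_t\perp S\mid X_{<t}$ and $\nu(t)=0$. The mass identity \cref{prop:profile}(i) then gives $I(S;X)=\sum_t\nu(t)=\nu(T)$.

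\emph{Sufficiency.} Write the joint likelihood autoregressively:
\[
\pwm(x\mid s)=\prod_{t\in T}\pwm(x_t\mid s,x_{<t})\prod_{t\notin T}\pnat(x_t\mid x_{<t}).
\]
The second product does not involve $s$. The first depends on $x$ only through the pairs $(x_{<t},x_t)$ for $t\in T$, all of which are functions of $X_{\le\max T}$. By the Fisher--Neyman factorization, $X_{\le\max T}$ is sufficient for $S$. Equivalently, given $X_{\le \max T}$, the law of the tail is a product of natural kernels and so does not depend on $s$.

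\emph{Prefix case.} When $T=\{1,\dots,m\}$ is a prefix, $X_{\le\max T}=X_T$. The conditional law of $X_{>m}$ given $X_T$ is $\prod_{t>m}\pnat(x_t\mid x_{<t})$, which is free of $s$. This is exactly the Markov chain $S\to X_T\to X$. The localization claim $\Lmap_k:=T$ follows: any decoder can be replaced by one reading only $X_T$ (more precisely $X_{\le\max T}$ in general) without loss, by data processing (\cref{fact:golden}).

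\emph{Failure of the converse.} Suppose the scheme is in $\mathcal E(m)$ and also satisfies the avg-form cap. By \cref{prop:profile}(ii), $\nu(t)\le\bar d(t)\le\Delta$ on $T$, and $\nu=0$ off $T$, so $I(U;X)=\nu(T)\le m\Delta$. Fano (\cref{fact:fano}) gives
\[
P_e\ge 1-\frac{m\Delta+\log2}{\log N}.
\]
Hence vanishing (or any fixed small) error forces $m\ge(1-o(1))\log N/\Delta$, with the $\log 2$ and the error-dependent slack absorbed into $o(1)$ as $N\to\infty$.

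\emph{Main obstacle.} The delicate point is the sufficiency step for a non-prefix $T$. Positions $t\notin T$ interleaved before $\max T$ feed into later marked kernels, so one cannot claim $X_T$ alone is sufficient. I will therefore state sufficiency for $X_{\le\max T}$ and reserve the clean chain $S\to X_T\to X$ for the prefix case, exactly as the statement does. Care is also needed that the registry-conditional convention of \cref{sec:prelim} is respected; all of the above holds conditionally on $\kappa$ pointwise.
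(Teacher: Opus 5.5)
Your proposal is correct and follows the paper's proof essentially step for step: \cref{prop:profile}(iii) for $\nu=0$ off $T$; the autoregressive likelihood factorization plus the factorization criterion for sufficiency of $X_{\le\max T}$, with the prefix case giving $S\to X_T\to X$; and $\nu(t)\le\bar d(t)\le\Delta$ on $T$ fed into Fano for the converse leg. Two small tightenings: with a fixed error level $\epsilon$, Fano only gives $m\ge(1-\epsilon-o(1))\log N/\Delta$, so the $(1-o(1))$ form needs vanishing error, as the paper assumes; and the decoder-replacement step rests on sufficiency (the posterior depends only on $X_{\le\max T}$), not on data processing.
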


\begin{figure}[t]
\centering
\begin{tikzpicture}[font=\sffamily,line cap=round,
  cell/.style={rounded corners=3pt,minimum width=4.3cm,minimum height=1.72cm,align=center,inner sep=3pt},
  ]
  \def\cA{2.85}\def\cB{7.35}\def\rT{2.0}\def\rB{0.1}\def\cw{4.3}\def\ch{1.72}
  \node[font=\small\bfseries,text=detectblue,align=center] at ({\cA+\cw/2},4.18)
    {pays the $\mathcal B(\Delta)$ cost\\[-1pt]{\scriptsize\textnormal{$\Delta$-coupled, \,$\Omega(\log N/\Delta)$}}};
  \node[font=\small\bfseries,text=attribgold!88!black,align=center] at ({\cB+\cw/2},4.18)
    {pays the $\mathcal E(m)$ cost\\[-1pt]{\scriptsize\textnormal{$\Delta$-free, \,$\Theta(\log N/\log q)$}}};
  \node[font=\small\bfseries,text=inkgray,align=right,anchor=east] at (\cA-0.2,{\rT+\ch/2})
    {small footprint\\$|T|=o(n)$\\{\scriptsize\textnormal{(has a map)}}};
  \node[font=\small\bfseries,text=inkgray,align=right,anchor=east] at (\cA-0.2,{\rB+\ch/2})
    {full support\\$|T|=\Omega(n)$\\{\scriptsize\textnormal{(no map)}}};
  \node[cell,fill=bluetint,draw=detectblue!55,line width=0.8pt,anchor=south west] at (\cA,\rT)
    {possible\\[1pt]{\scriptsize a small mark that still obeys $\mathcal B(\Delta)$}\\[-1pt]
     {\scriptsize pays $\Omega(\log N/\Delta)$ \;(\cref{cor:bridge})}};
  \node[cell,fill=goldtint,draw=attribgold!85!black,line width=1.5pt,anchor=south west] at (\cB,\rT)
    {{\bfseries the stamp}~$\checkmark$\\[1pt]{\scriptsize footprint \emph{decouples} the cost}\\[-1pt]
     {\scriptsize $\Theta(\log N/\log q)$, \,$\Delta$-free}};
  \node[cell,fill=bluetint,draw=detectblue!55,line width=0.8pt,anchor=south west] at (\cA,\rB)
    {{\bfseries biasing}, forced\\[1pt]{\scriptsize no footprint $\Rightarrow$ confined to $\mathcal B(\Delta)$}\\[-1pt]
     {\scriptsize pays $\Omega(\log N/\Delta)$}};
  \node[cell,fill=inkgray!8,draw=inkgray!40,line width=0.8pt,anchor=south west] at (\cB,\rB)
    {\textcolor{inkgray}{\itshape impossible}\\[1pt]{\scriptsize no small footprint}\\[-1pt]
     {\scriptsize $\Rightarrow$ no $\mathcal E(m)$ to use}};
\end{tikzpicture}
\caption{\textbf{The bridge is one-directional} (\cref{thm:dichotomy}, \cref{cor:bridge}). Crossing
``footprint'' with ``quality model used'' leaves exactly one cheap, $\Delta$-free cell, the stamp, and it
needs \emph{both} a small footprint \emph{and} actually using the footprint model $\mathcal E(m)$. A small
footprint (a localization map) \emph{enables} the cheap model but does not \emph{decide} the cost: a small
mark that still obeys the per-token cap $\mathcal B(\Delta)$ stays $\Delta$-coupled (top-left). A full-support
biasing scheme has no small footprint, so it is confined to $\mathcal B(\Delta)$ (bottom-left); the
$\Delta$-free corner is simply unreachable for it (bottom-right). This is a one-way bridge, not an
``embedding $\Leftrightarrow\Delta$-free'' equivalence.}
\label{fig:dichotomy}
\end{figure}

\subsection{Closing the entropy column}

In the distortion-free regime (\cref{asm:free}) the binding resource is the entropy rate $h$, and
\cref{thm:budget}(b) gives the Fano converse $\ntask{\att}=\Omega(\log N/h)$. Whether the rate is attainable
(whether a $\pbar=\pnat$ scheme can actually attribute at $n=O(\log N/h)$) is the achievability question left
open by that converse-only ledger. It closes, at full stationary-ergodic generality, by a decoder
that thresholds the realized surprisal (\cref{fig:surprisal}), giving, to our knowledge, the first tight
$h$-rate for multi-user attribution.

\begin{definition}[Witness scheme; threshold decoder]\label{def:witness}
Give the $N$ users i.i.d.\ keys, each an i.i.d.\ $\Unif[0,1]$ sequence, and mark by inverse-CDF sampling,
$X_t=F^{-1}_{\pnat(\cdot\mid X_{<t})}(k_u(t))$, so that $X\sim\pnat$ exactly (kernel-level distortion-free,
\cref{asm:free}; exponential-minimum sampling after \citealp{kuditipudi2023robust}). The decoder knows
$\pnat$ and the realized key set $\{k_1,\dots,k_N\}$. With $\tau:=\log(N/\delta)$ and
$\imath(X):=-\log\pnat(X)$, the realized surprisal, the \emph{threshold decoder} returns a key consistent
with $X$ when $\imath(X)>\tau$ (any one, if several are consistent) and declares failure when
$\imath(X)\le\tau$.
\end{definition}

The theorem states the rate, then gives in three parts the exact per-innocent guarantee, the completeness
threshold, and the matching converse that together make it two-sided.

\begin{restatable}[Distortion-free attribution is $\Theta(\log N/h)$]{theorem}{ThmEntropyClose}\label{thm:entropyclose}
Let $\pnat$ be stationary ergodic with entropy rate $h>0$ (no mixing hypothesis), marked and decoded as in
\cref{def:witness}. Fix a per-innocent false-positive level $\delta\in(0,\tfrac12)$ and let $N\to\infty$ (optionally
$\delta\to0$ with $\log(1/\delta)=o(\log N)$ for the $(1+o(1))$ form). Then attribution among $N$ users costs
$\ntask{\att}=\Theta(\log N/h)$ (achievability alone holding for all $\delta\in(0,1)$), sharpening to
$(1+o(1))\log N/h$, and:
\begin{enumerate}[label=(\roman*)]
  \item \textbf{(False positives; exact, non-asymptotic.)} For every $n$ and arbitrary $\pnat$ (no
  ergodicity used): for the deterministic sampler of \cref{def:witness}, a fixed innocent $v$ is the key
  returned only with probability $\Pr[\hat U=v]\le e^{-\tau}=\delta/N$, \emph{pointwise in the realized
  innocent keys} (the probability is over the true text $X\sim\pnat$); for randomized distortion-free
  kernels the same bound holds averaged over the key draw.
  \item \textbf{(Completeness.)} $\Pr[\imath(X)>\tau]\to1$ whenever $n\ge(1+\alpha)\tau/h$ for any fixed
  $\alpha>0$ (Shannon--McMillan--Breiman; here $h$ is the entropy \emph{rate}).
  \item \textbf{(The rate, two-sided.)} Achievability: the average error is $\le\delta+o(1)$ once
  $n\ge(1+\alpha)(\log N+\log(1/\delta))/h$, for every fixed $\alpha>0$ ($N$ large). Converse: for
  i.i.d.\ keys, every attribution rule with error $\le2\delta$ needs $n\ge(1-2\delta)\log N/h_n-O(1/h)$,
  with $h_n:=H(\pnat^{(n)})/n$ ($=(1+o(1))h$) --- conditional Fano discharged by a mixture-entropy step
  (\cref{app:entropyclose}). Thus
  \begin{align*}
    \ntask{\att}&\;=\;\Theta(\log N/h)\quad\text{for fixed }\delta\in(0,\tfrac12),\\
    \ntask{\att}&\;=\;(1+o(1))\,\frac{\log N}{h}\quad\text{as }\delta\to0,\ \log(1/\delta)=o(\log N).
  \end{align*}
  For a \emph{fixed} deployed key set the converse instead holds under a mixture-entropy hypothesis on the
  realized keys (as $N\to\infty$; automatic when $\pnat$ is the maximum-entropy law), detailed in
  \cref{app:entropyclose}.
\end{enumerate}
\end{restatable}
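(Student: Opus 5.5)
The plan is to prove the three parts separately and then assemble the two-sided rate. The main lever is that the witness sampler is deterministic given the key. Write $x^{(u)}:=x^{(u)}(k_u)$ for the unique carrier the inverse-CDF recursion of \cref{def:witness} produces from key $k_u$. Then ``$k_u$ is consistent with $X$'' means exactly $X=x^{(u)}$. Moreover, for $U$ uniform and independent of the i.i.d.\ keys, the true text satisfies $X\sim\pnat$ exactly, and it is independent of every innocent key.

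\textbf{Part (i).} Fix an innocent $v$ and condition on its key, so $x^{(v)}$ becomes a fixed sequence. The decoder can return $v$ only if $X=x^{(v)}$ and $\imath(X)>\tau$, which gives
\[
\Pr[\hat U=v]\le \pnat(x^{(v)})\,\mathbf 1\{\pnat(x^{(v)})<e^{-\tau}\}\le e^{-\tau}=\delta/N .
\]
This holds pointwise in $k_v$ and for every $n$ and every $\pnat$; it is the Markov/change-of-measure bound in its simplest form. The one subtlety is the probability space. For a fixed true user, $X$ is determined by $k_U$. So the phrase ``over the true text $X\sim\pnat$'' must mean that the randomness is over the true key (equivalently, over $U$ together with its key), with the innocent keys frozen, and I would state this explicitly. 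For randomized distortion-free kernels, consistency is a likelihood event rather than an equality. There I would use $\E_{k_v}\pwm(x\mid k_v)=\pnat(x)$ (kernel level, \cref{asm:free}) and run the same bound averaged over the key draw.

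\textbf{Part (ii).} Shannon--McMillan--Breiman gives $\imath(X)/n\to h$ almost surely under $\pnat$. Hence, once $n\ge(1+\alpha)\tau/h$, we have $\Pr[\imath(X)\le\tau]\le\Pr[\imath(X)/n\le h/(1+\alpha)]\to0$.

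\textbf{Part (iii), achievability.} The true key is always consistent with $X$. So an error happens only if the decoder declares failure or returns an innocent. A union bound over the $N-1$ innocents, using (i), bounds the error by $\Pr[\imath(X)\le\tau]+\delta$. By (ii) this is $\delta+o(1)$ once $n\ge(1+\alpha)(\log N+\log(1/\delta))/h$.

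\textbf{Part (iii), converse.} I would use registry-conditional Fano, \cref{fact:fano}, with $\kappa$ as side information. A rule with error at most $2\delta$ satisfies
\[
I(U;X\mid\kappa)\ge(1-2\delta)\log N-\log2 .
\]
To discharge the left side:
\begin{itemize}
\item Write $I(U;X\mid\kappa)=H(X\mid\kappa)-H(X\mid U,\kappa)$.
\item For the deterministic sampler the second term vanishes. In general it is nonnegative, so it can be dropped.
\item Conditioning reduces entropy, and the marginal law of $X$ is $\pnat^{(n)}$, so $H(X\mid\kappa)\le H(X)=nh_n$. This is the ``mixture-entropy step'' over the i.i.d.\ key draw.
\end{itemize}
Combining gives $n\ge(1-2\delta)\log N/h_n-\log 2/h_n$. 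Since $h_n\ge h$ for stationary sources, $\log2/h_n\le\log 2/h=O(1/h)$, which yields the stated bound. It matches achievability up to the factor $(1-2\delta)^{-1}(1+\alpha)$. When $\delta\to0$ with $\log(1/\delta)=o(\log N)$, both $\alpha$ and $\delta$ can be sent to zero and $h_n\to h$, which gives the $(1+o(1))\log N/h$ form.

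For a fixed deployed key set, the averaging over $\kappa$ in the converse is lost. The step $H(X\mid\kappa)\le nh_n$ must then be assumed for the realized keys, which is the stated hypothesis. When $\pnat$ is the maximum-entropy law, $H(X\mid\kappa)\le n\log q=nh_n$ trivially.

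\textbf{Main obstacle.} I expect the delicate step to be the bookkeeping of $\delta$ rather than any inequality. The converse needs error $\le2\delta<1$ to give a nontrivial $\Theta$, while the achievability error is $\delta+o(1)$; reconciling the two is why the sharp form requires $\delta\to0$. I would also check that the $\log(1/\delta)$ term, together with the finite-$n$ gap between $h_n$ and $h$, is genuinely lower order in the two regimes claimed.
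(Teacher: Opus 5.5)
Your proposal is correct and follows the paper's proof essentially step for step. Part (i) is the pointwise change-of-measure bound $\pnat(x^{(v)})\mathbf 1\{\imath(x^{(v)})>\tau\}\le e^{-\tau}$, part (ii) is Shannon--McMillan--Breiman, and achievability is a union bound over the $N-1$ innocents. For the converse you use registry-conditional Fano, and your step $H(X\mid\kappa)\le H(X)=nh_n$ is exactly the paper's Jensen step $\E_\kappa H(\hat p_\kappa)\le H(\E_\kappa\hat p_\kappa)=H(\pnat)$.

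The one step that needs a word more is the randomized-kernel variant of (i); the paper is equally brief there. Consistency must be defined so that an independent key is consistent with a fixed $x$ with probability at most $\E_{k_v}\pwm(x\mid k_v)=\pnat(x)$. That holds when the kernel is deterministic given the key, as in the quantile-cell computation. Otherwise, replace consistency by a likelihood-ratio threshold, for which Markov gives $e^{-\tau}$ directly.
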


\begin{restatable}[Payload extraction]{corollary}{CorExtraction}\label{cor:extraction-h}
Index $N=2^\ell$ keys by the payload $W\sim\Unif\{0,1\}^\ell$. Under the hypotheses of
\cref{thm:entropyclose}, $\ntask{\ext}=\Theta(\ell/h)$ for fixed $\delta$, and
$\ntask{\ext}=(1+o(1))\,\ell\log2/h$ as $\delta\to0$ with $\log(1/\delta)=o(\ell)$.
\end{restatable}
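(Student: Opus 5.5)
The plan is to obtain \cref{cor:extraction-h} as a direct specialization of \cref{thm:entropyclose} at $N=2^\ell$, so that the $\ell$-bit payload plays the role of the user identity. Concretely, enumerate $\{0,1\}^\ell$ as $[N]$ with $N=2^\ell$. Draw $N$ i.i.d.\ keys $k_w$, each an i.i.d.\ $\Unif[0,1]$ sequence, one per payload value $w$. Generate by inverse-CDF sampling under $k_W$, exactly as in \cref{def:witness}. Since $W\sim\Unif\{0,1\}^\ell$ is uniform on $N$ labels, Level-2 extraction for this scheme is literally Level-1 attribution with $U=W$, and the two error criteria coincide: $\Pr[\hat W\ne W]=P_e$.

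\textbf{Achievability.} Run the threshold decoder with $\tau=\log(2^\ell/\delta)=\ell\log2+\log(1/\delta)$.
\begin{itemize}
\item \cref{thm:entropyclose}(i) gives, for each wrong payload $w'$, $\Pr[\hat W=w']\le\delta/2^\ell$. A union bound over the $2^\ell-1$ wrong payloads keeps the total false-decode probability at most $\delta$.
\item \cref{thm:entropyclose}(ii) gives completeness once $n\ge(1+\alpha)\tau/h$.
\item Combining the two, part (iii) yields error at most $\delta+o(1)$ at $n=(1+\alpha)(\ell\log 2+\log(1/\delta))/h$.
\end{itemize}
The limit $N\to\infty$ of the theorem is the limit $\ell\to\infty$ here. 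For fixed $\delta$ this gives $O(\ell/h)$. If $\delta\to0$ with $\log(1/\delta)=o(\ell)$, then $\log(1/\delta)=o(\log N)$ as well, since $\log N=\ell\log 2$. The $\log(1/\delta)$ term is then lower order, and letting $\alpha\downarrow0$ gives $(1+o(1))\ell\log2/h$.

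\textbf{Converse.} Apply the i.i.d.-key converse of \cref{thm:entropyclose}(iii) verbatim with $N=2^\ell$: any rule with error at most $2\delta$ needs $n\ge(1-2\delta)\ell\log2/h_n-O(1/h)$. Because $h_n=(1+o(1))h$, this is $\Omega(\ell/h)$ for fixed $\delta<\tfrac12$. As $\delta\to0$ it becomes $(1-o(1))\ell\log2/h$, which matches the achievability bound.

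\textbf{Main obstacle.} The only step needing care is checking that the theorem's hypotheses transfer to the payload setting; the rest is the stated theorem. First, the converse must be the version averaged over the i.i.d.\ key draw rather than the fixed-registry version, so the corollary should be read with payload keys drawn i.i.d.\ as in \cref{def:witness}. Second, the asymptotic regime must be stated in $\ell$ rather than $N$, which is exactly the translation $\log N=\ell\log2$ used above. Third, the $(1+o(1))$ constant should be written in nats as $\ell\log2/h$, which accounts for the $\log 2$ appearing in the statement.
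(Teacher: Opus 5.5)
Your proposal is correct and follows the paper's proof: identify $W$ with $U\sim\Unif[N]$ at $N=2^\ell$, so that $\log N=\ell\log2$ and $\log(1/\delta)=o(\ell)$ is the same condition as $\log(1/\delta)=o(\log N)$. Then read achievability, the i.i.d.-key converse, and the $(1+o(1))$ form directly off \cref{thm:entropyclose}. Your unpacking of parts (i)--(iii), and your note that the converse is the key-averaged version rather than the fixed-registry one, are accurate elaborations of steps the paper leaves implicit.
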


\begin{proof}[Proof sketch]
(i) is a two-line computation, the deterministic case first (matching the proof order in the appendix): fix
the realized key set; on $\{\imath(X)>\tau\}$ the single string an innocent key determines has $\pnat$-mass
$<e^{-\tau}$, so $\Pr[\hat U=v]\le e^{-\tau}$ pointwise. For randomized distortion-free kernels, average
over the key draw: given $X$, an independent wrong key is consistent with probability
$\prod_t\pnat(X_t\mid X_{<t})=e^{-\imath(X)}$, so $\Pr[\hat U=v]\le\E[e^{-\imath}\,\mathbf 1\{\imath>\tau\}]
=\sum_{X:\,\pnat(X)<e^{-\tau}}\pnat(X)^2\le e^{-\tau}$.
(ii) is Shannon--McMillan--Breiman: $\imath(X)/n\to h$ a.s.\ and $\tau/n\le h/(1+\alpha)<h$. (iii) combines
the two with a union bound over the $N-1$ innocents; the converse is conditional Fano plus the Jensen step
$\E_\kappa[H(\hat p_\kappa)]\le H(\E_\kappa[\hat p_\kappa])=nh_n$, where $\kappa=\{k_1,\dots,k_N\}$ is the
realized key set and $\hat p_\kappa:=\tfrac1N\sum_v\pwm^{(k_v)}$ the induced key mixture. Full proof in
\cref{app:entropyclose}.
\end{proof}

\begin{remark}[Tightness and necessity of the hypotheses]\label{rem:necessity}
Both structural hypotheses are necessary, not technical. \emph{Ergodicity:} take the stationary
non-ergodic mixture that emits, with probability $\tfrac12$ each, the all-zero stream or i.i.d.\ fair
bits. Its process entropy rate is $\tfrac12\log2>0$, yet on the constant component the realized surprisal
never exceeds $\log2<\tau$, so every attribution rule (informed or not) errs with probability
$\ge\tfrac12-o(1)$ at every length: the rate $\log N/h$ fails when $h$ is read as the process rate. What
survives for stationary sources is the componentwise statement: part (i) holds verbatim, completeness
holds within each ergodic component, and as $\delta\to0$ the rate becomes $(1+o(1))\log N/h^{*}$ with
$h^{*}$ the essential infimum of the components' entropy rates (elementary for finite mixtures; for fixed
$\delta$ the constant improves to a $\delta$-quantile of the component rates, whose sharp form we leave
open). Ergodicity is exactly the condition collapsing $h^{*}$, the quantile, and the process rate to one
number. \emph{The key registry:} without the indexed assignment $v\mapsto k_v$ the users are exchangeable
given $X$, indeed $X$ is independent of $U$, so every rule succeeds with probability exactly $1/N$;
attribution without the registry is attribution without the identities. The converse, by contrast, already
bounds the informed decoder.
\end{remark}

\begin{figure}[t]
\centering
\begin{tikzpicture}[font=\sffamily,>=Stealth,line cap=round]
  \begin{scope}
    \def\W{6.2}\def\H{2.45}\def\hh{1.08}\def\xstar{2.0}
    \fill[bluetint]
      (0.5,1.78) -- (1.5,1.52) -- (2.5,1.34) -- (3.5,1.24) -- (4.5,1.17) -- (\W,1.14) --
      (\W,1.02) -- (4.5,0.99) -- (3.5,0.92) -- (2.5,0.80) -- (1.5,0.62) -- (0.5,0.38) -- cycle;
    \draw[->,inkgray!80,line width=0.8pt] (0,0)--(\W+0.15,0) node[below=1pt,font=\scriptsize,text=inkgray]{$n$};
    \draw[->,inkgray!80,line width=0.8pt] (0,0)--(0,\H);
    \draw[detectblue,line width=1.2pt] (0,\hh)--(\W,\hh) node[right,font=\scriptsize\bfseries,text=detectblue]{$h$};
    \draw[detectblue!85,line width=1pt]
      (0.5,1.92)--(0.9,1.55)--(1.3,1.74)--(1.8,1.42)--(2.3,1.26)--(2.9,1.33)--(3.5,1.19)--(4.2,1.13)--(5,1.10)--(\W,1.07);
    \node[font=\scriptsize,text=detectblue,anchor=west] at (2.5,1.74) {$\imath(X)/n\to h$};
    \draw[attribgold,line width=1.2pt,smooth,samples=60,domain=1.0:\W] plot (\x,{2.1/\x});
    \node[font=\scriptsize,text=attribgold!88!black,anchor=west] at (3.0,0.86) {$\tau/n=\log(N/\delta)/n$};
    \draw[densely dotted,inkgray,line width=0.8pt] (\xstar,0)--(\xstar,\hh);
    \fill[inkgray] (\xstar,\hh) circle (1.5pt);
    \node[font=\scriptsize\bfseries,text=inkgray,anchor=north] at (\xstar,-0.04) {$n^\star=\tau/h$};
    \draw[->,detectblue!70,line width=0.7pt] (\xstar+0.12,0.30) -- (\W-0.2,0.30);
    \node[font=\scriptsize,text=detectblue,anchor=west] at (\xstar+0.18,0.50) {$\imath(X)>\tau$};
    \node[font=\small\bfseries,text=detectblue] at (\W/2,-0.62) {(ii) completeness \;\textnormal{\small(rate $=h$)}};
  \end{scope}

  \draw[inkgray!22,line width=0.6pt] (6.95,-1.25)--(6.95,2.45);

  \begin{scope}[shift={(7.5,0)}]
    \node[anchor=north west,align=left,font=\scriptsize,text=inkgray] at (0,2.4)
      {An innocent key $k_v$ generates \emph{one} fixed string $X(k_v)$.\\[2pt]
       It frames $v$ only if $X(k_v)=X$ \emph{and} $\imath(X)>\tau$:\\[5pt]
       \hspace{0.6em}$\displaystyle \Pr[\hat U{=}v]\;=\;\pnat\big(X(k_v)\big)\;=\;e^{-\imath}\;<\;e^{-\tau}\;=\;\frac{\delta}{N}.$};
    \def\mx{0.2}\def\mw{4.6}
    \draw[densely dashed,attribgold!85!black,line width=0.9pt] (\mx,0.45)--(\mx+\mw,0.45)
       node[right,font=\scriptsize,text=attribgold!85!black,inner sep=2pt] {$\delta/N=e^{-\tau}$};
    \fill[detectblue!75] (\mx+0.1,0) rectangle (\mx+0.32,0.30);
    \node[anchor=west,font=\scriptsize,text=detectblue] at (\mx+0.42,0.16) {innocent's mass $e^{-\imath}$ (a sliver)};
    \node[anchor=north west,font=\scriptsize\itshape,text=inkgray] at (0,-0.18)
      {pointwise, exact, every $n$, every source};
    \node[font=\small\bfseries,text=attribgold!85!black] at (2.45,-1.12) {(i) soundness \;\textnormal{\small(FP $=\delta/N$)}};
  \end{scope}
\end{tikzpicture}
\caption{\textbf{The surprisal threshold does both jobs} (\cref{thm:entropyclose}). The decoder accepts a
consistent key only when the realized surprisal $\imath(X)=-\log\pnat(X)$ exceeds $\tau=\log(N/\delta)$, and
that single threshold is \emph{forced}, not tuned. \textbf{(ii)} The marked text is itself $\pnat$-typical, so
$\imath(X)/n\to h$ (Shannon--McMillan--Breiman) and clears $\tau$ once $n\ge\tau/h$: this is why the rate is
the entropy $h$. \textbf{(i)} An innocent key produces \emph{one} fixed string, which coincides with $X$ only
with $\pnat$-mass $e^{-\imath}<e^{-\tau}=\delta/N$: the per-innocent guarantee is exact and pointwise. To frame
no innocent above $\delta/N$ one \emph{must} reject every candidate below $\tau$, and completeness then only
asks that the true text clear the same bar.}
\label{fig:surprisal}
\end{figure}

\begin{proposition}[The collision trap]\label{rem:listsize}
Consider the witness scheme and threshold decoder of \cref{def:witness} on an i.i.d.\ source, and say a key
is \emph{consistent} with $X$ if it generates $X$ token for token. Then:
\begin{enumerate}[label=(\roman*)]
  \item the expected number of falsely consistent keys is exactly $(N-1)e^{-nr_2}$, where
  $r_2:=-\log\sum_x\pnat(x)^2$ is the R\'enyi-2 rate, so the expected list clears (falls below one) only at
  $n\ge\log(N-1)/r_2$;
  \item yet the surprisal-threshold decoder keeps per-innocent false positives $\le\delta/N$ at every
  length, and its average error is $\le\delta+o(1)$ once $n\ge(1+\alpha)\log(N/\delta)/h$, for any fixed
  $\alpha>0$;
  \item $r_2\le h$ always, with equality iff $\pnat$ is uniform on its support, so the consistency count
  prescribes $h/r_2$ times more tokens than the decision requires, a factor unbounded as the source skews.
\end{enumerate}
\end{proposition}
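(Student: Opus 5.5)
The proposition has three parts, and I would prove them in order, each from a short computation on the witness scheme of \cref{def:witness} specialized to an i.i.d.\ source $\pnat=p^{\otimes n}$.

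\textbf{Part (i).} Fix the true text $X$, generated by the true key. Take an innocent key $k_v$, whose coordinates are i.i.d.\ $\Unif[0,1]$ and independent of $X$. Under inverse-CDF sampling, $k_v$ reproduces token $t$ exactly when $k_v(t)$ lands in the CDF interval of $X_t$, and that interval has length $\pnat(X_t\mid X_{<t})$. These events are independent across $t$, so
\[
\Pr[k_v\text{ consistent}\mid X]=\textstyle\prod_t p(X_t)=\pnat(X).
\]
Averaging over $X\sim\pnat$ gives $\sum_x\pnat(x)^2=(\sum_a p(a)^2)^n=e^{-nr_2}$. Linearity of expectation over the $N-1$ innocents then gives exactly $(N-1)e^{-nr_2}$. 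This quantity is below one iff $n>\log(N-1)/r_2$. Here $r_2$ is the single-letter R\'enyi-2 entropy, which for an i.i.d.\ source equals the rate.

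\textbf{Part (ii).} This is a direct specialization of \cref{thm:entropyclose}. The per-innocent bound $\delta/N$ at every length is part (i) of that theorem, which holds for arbitrary $\pnat$. For the average error, I would split the error event in two. The first piece is failure, $\{\imath(X)\le\tau\}$. Its probability tends to $0$ once $n\ge(1+\alpha)\tau/h$, by part (ii) of the theorem; for an i.i.d.\ source this is just the weak law of large numbers applied to $\sum_t-\log p(X_t)$. The second piece is that some innocent is returned. A union bound over the $N-1$ innocents bounds it by $(N-1)\delta/N\le\delta$. Together these give an average error of at most $\delta+o(1)$.

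\textbf{Part (iii).} Jensen's inequality for the concave function $\log$ gives
\[
-r_2=\log\E_p[p(X_1)]\ge\E_p\log p(X_1)=-h,
\]
so $r_2\le h$. Equality in Jensen holds iff $p(X_1)$ is a.s.\ constant, that is, iff $p$ is uniform on its support. Dividing the two thresholds, $\log(N-1)/r_2$ against $(1+o(1))\log N/h$, gives the overcharge ratio $h/r_2$ in the regime $\log(1/\delta)=o(\log N)$.

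\textbf{Unboundedness of $h/r_2$.} This is the only place that needs an explicit family, and I expect it to be the main (if mild) obstacle. I would take the alphabet $\{0\}\cup\{1,\dots,M\}$ with $p(0)=1-\epsilon$ and $p(i)=\epsilon/M$ for $i\ge1$.
\begin{itemize}
\item For the entropy, $h\ge\epsilon\log(M/\epsilon)$.
\item For the collision probability, $\sum_a p(a)^2=(1-\epsilon)^2+\epsilon^2/M$, so $r_2\approx2\epsilon$ for small $\epsilon$.
\item Hence $h/r_2\gtrsim\tfrac12\log(M/\epsilon)$, which tends to $\infty$ as $M\to\infty$.
\end{itemize}
If one insists on a fixed alphabet, one can instead let $\epsilon\to0$ with $q$ fixed. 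Then $h/r_2\approx\log(q/\epsilon)/2$ still diverges.

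The one subtlety to state carefully is that "prescribes $h/r_2$ times more" compares the expected-list-clearing length with the achievable decision length of part (ii). This is a ratio of sufficient lengths, not a converse for the counting decoder, and I would phrase it that way.
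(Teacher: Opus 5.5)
Your proposal is correct and matches the paper's proof essentially step for step: (i) is the same computation (given $X$, each innocent key is consistent with probability $\pnat(X)$, which averages to $\sum_x\pnat(x)^2=e^{-nr_2}$; this is the collision identity of \cref{lem:overlap} specialized to the deterministic witness), (ii) is read off \cref{thm:entropyclose}, and your Jensen step in (iii) is exactly the order-$1$ versus order-$2$ case of the R\'enyi monotonicity the paper cites. For unboundedness the paper simply takes $\pnat=\mathrm{Ber}(p)$ with $p\to0$, which is the $M=1$ instance of your family, so that step is not a real obstacle.
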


\begin{proof}
(i) is the collision identity of \cref{lem:overlap} specialized to the deterministic witness and an
i.i.d.\ source (\cref{app:overlap}); (ii) is \cref{thm:entropyclose}(i)--(iii); (iii) is monotonicity of
R\'enyi entropy in its order, with equality exactly on support-uniform sources (\cref{sec:itbackground});
for $\pnat=\mathrm{Ber}(p)$, $h/r_2\to\infty$ as $p\to0$.
\end{proof}

The trap, in words. Consistency (alignment) is the rule deployed detectors actually use: answer with any
key consistent with the text. Its natural analysis drives the expected list $(N-1)e^{-nr_2}$ below a
constant and so predicts the rate $\log N/r_2$ --- the \emph{wrong} rate. Collision entropy prices the
candidate \emph{list}; Shannon entropy prices the \emph{decision}: in the window
$\log N/h\le n\ll\log N/r_2$ the list is exponentially large, yet the surprisal threshold has already
separated the true key \emph{within} it (\cref{fig:collision}). Only the second rate is a sample
complexity. The seeded numerics certify the trap: at the operating point the consistency-only decoder's
false-positive rate overshoots the $\delta/N$ target by orders of magnitude (\cref{sec:numerics}). A
tie-abstaining variant of the consistency rule has zero false positives in the closed-world model
$U\in[N]$ but loses the guarantee on out-of-set text (\cref{app:overlap}); it is not used by the headline.
Whether a genuine $r_2$ cost reappears for edit-robust decoders is taken up in \cref{sec:discussion}.

\begin{figure}[t]
\centering
\begin{tikzpicture}[font=\sffamily,>=Stealth,line cap=round]
  \def\xd{3.9}\def\xl{6.9}\def\yone{0.78}\def\W{9.4}\def\H{4.05}
  \fill[goldtint] (\xd,0) rectangle (\xl,\H);
  \draw[->,inkgray!80,line width=0.8pt] (0,0)--(\W,0)
     node[below=1pt,font=\scriptsize,text=inkgray]{sample length $n$};
  \draw[->,inkgray!80,line width=0.8pt] (0,0)--(0,\H);
  \node[rotate=90,anchor=south,font=\scriptsize,text=inkgray] at (-0.28,\H/2)
     {\#\,consistent keys \;(log scale)};
  \draw[densely dashed,inkgray!65,line width=0.7pt] (0,\yone)--(\W,\yone);
  \node[anchor=south east,font=\scriptsize,text=inkgray] at (\W,\yone+0.03) {list size $=1$};
  \draw[attribgold,line width=1.5pt] (0.45,3.78) -- (8.75,0.02);
  \node[anchor=west,font=\scriptsize,text=attribgold!88!black] at (1.06,3.9)
     {$(N{-}1)\,e^{-n r_2}$ keys still ``consistent''};
  \draw[detectblue,line width=1.3pt] (\xd,0)--(\xd,\H);
  \fill[detectblue] (\xd,2.217) circle (1.7pt);   
  \node[anchor=south,align=center,font=\scriptsize\bfseries,text=detectblue] at (\xd,\H+0.04)
     {attribution\\already succeeds};
  \node[anchor=north,font=\scriptsize\bfseries,text=detectblue] at (\xd,-0.14) {$\dfrac{\log N}{h}$};
  \draw[densely dashed,attribgold!85!black,line width=1.15pt] (\xl,0)--(\xl,\H);
  \node[anchor=north,font=\scriptsize\bfseries,text=attribgold!85!black] at (\xl,-0.14) {$\dfrac{\log N}{r_2}$};
  \node[anchor=south,align=center,font=\scriptsize,text=attribgold!85!black] at (\xl,\H+0.04)
     {the naive count\\``waits'' for here};
  \draw[->,detectblue,line width=0.7pt] (\xd+0.07,2.28) to[bend left=12] (\xd+0.92,2.82);
  \node[anchor=west,font=\scriptsize,text=detectblue] at (\xd+0.86,2.9) {$\approx e^{\,n(h-r_2)}$ consistent};
  \node[font=\footnotesize\itshape\bfseries,text=inkgray] at ({(\xd+\xl)/2},0.42) {the trap};
  \node[anchor=north,align=center,font=\scriptsize,text=inkgray] at (\W/2,-0.74)
     {collision entropy $r_2$ prices the candidate \textbf{\textcolor{attribgold!85!black}{list}};\qquad
      Shannon entropy $h$ prices the \textbf{\textcolor{detectblue}{decision}} \ (and $r_2<h$, gap $h/r_2$ unbounded)};
\end{tikzpicture}
\caption{\textbf{The collision trap} (\cref{rem:listsize}). The obvious decoder answers with any key
\emph{consistent} with the text; its list of falsely consistent innocents, $(N-1)e^{-nr_2}$, clears only
at $n=\log N/r_2$. So a count-based analysis predicts that rate. But the surprisal threshold decides
already at $n=\log N/h$, deep inside the window where the list is still exponentially large
($\approx e^{\,n(h-r_2)}$): it singles out the true key \emph{within} the list rather than waiting for the
list to shrink. The R\'enyi-2 rate $r_2$ prices the candidate list; the Shannon rate $h$ prices the
decision, and $r_2<h$ makes the gap $h/r_2$ unbounded.}
\label{fig:collision}
\end{figure}

\begin{example}[A skewed coin makes the gap concrete]\label{ex:coin}
Take $\pnat=\mathrm{Ber}(0.1)$ i.i.d.\ (the source of \cref{sec:numerics}). Its entropy rate is
$h=\Hb(0.1)=0.325$ nats, while its R\'enyi-2 rate is $r_2=-\log(0.1^2+0.9^2)=0.198$ nats, so
$h/r_2\approx1.64$. To attribute among $N=2^{32}$ users, the surprisal threshold succeeds at
$n\approx\log N/h\approx68$ tokens (the $\log(1/\delta)$ term, about $9$ tokens here, is lower-order in $N$), while the consistency list
clears only at $n\approx\log N/r_2\approx112$; their ratio is exactly $h/r_2\approx1.64$. The gap grows
without bound as the coin skews ($h/r_2\to\infty$ as $p\to0$): the trap is not a constant-factor nuisance.
\end{example}

Credit, again. The false-positive mechanism in (i) is the folklore likelihood-ratio Markov bound, for which
we claim no novelty (\cref{app:entropyclose}). What is new is the closed column: the matching
achievability at full stationary-ergodic generality (SMB only), with an exact non-asymptotic false-positive
side, and its forensic reading --- the entropy mass of \cref{thm:budget}(b) is operational, not merely a
ceiling. The distortion-free sub-column of \cref{tab:ladder}, a one-sided converse ledger in the first
version of this paper, is now two-sided at Levels 1 and 2. The closest cryptographic tracing of many
adaptive users gives no tight information-theoretic $h$-rate \citep{cohen2024many}; Christ--Gunn-style
undetectable schemes are computational and governed by seed entropy (scope note
after \cref{thm:budget} in \cref{sec:appendix}). The witness decoder is exact-alignment, hence not
edit-robust; a $\Delta$-free, edit-robust attribution rate remains open (\cref{sec:discussion}).

\begin{remark}[Where the columns meet]\label{rem:columnsmeet}
The two extremal columns of \cref{tab:ladder} touch at exactly one source. On a full-alphabet-uniform
i.i.d.\ source, $h=\log q$, so the closed entropy column gives $(1+o(1))\log N/\log q$: the invisible
distortion-free witness then costs no more than the cleartext digit-stamp of \cref{thm:dichotomy}(b), and
only there, since $\log N/h\ge\log N/\log q$ always, with equality iff $\pnat=\Unif(\Sigma)$; the witness
moreover saturates \cref{thm:budget}(b) with equality ($I(K;X)=H(X)=nh_n$ for the deterministic sampler),
which is why the column closes. Separately, the collision trap vanishes exactly on \emph{support}-uniform
sources ($r_2=h$ iff $\pnat$ is uniform on its support, \cref{rem:listsize}(iii)): the overcharge $h/r_2$
is a price of skew. The two degeneracies are distinct: a source uniform on a strict sub-alphabet kills the
trap but not the gap between the columns.
\end{remark}

\begin{table}[t]
\centering
\caption{The forensic ladder. \emph{Symbols:} $\Delta$ per-token KL budget; $h$ per-token entropy rate;
$\lambda$ detection soundness ($\mathrm{FP}\le2^{-\lambda}$); $q=|\Sigma|$ alphabet size; $m$ footprint;
$w$ crop window; $N$ users; $\ell$ payload bits. In the
distributional column the lower bounds ($\Omega$) are unconditional; the matching upper bounds ($\Theta$,
shown in brackets) hold for regular, strongly-mixing $\pnat$ (\cref{thm:dichotomy}) and are conjectural for
general $q$-ary $\pnat$. The embedding column is two-sided ($\Theta$, exact construction); the distortion-free
($h$) sub-column is now two-sided at Levels 1--2 (\cref{thm:entropyclose}: stationary-ergodic source,
i.i.d.\ keys, decoder knowing $\pnat$ and the key set). The bounded Level-0 rate $\Theta(1/\Delta)$ is the
imported baseline \citep{cai2024better,li2024statistical,huang2023optimal}; the distortion-free Level-0 rate
$\Theta(\lambda/h)$ (with $\lambda$ the entropy needed for $2^{-\lambda}$ false-positive security) is from
\citet{christ2024undetectable}; the bounded-regime attribution converse is the single-user Fano specialization
of \citet{moulin2008universal,somekhbaruch2005capacity}. Level 3 (crop-robust
localization to resolution $w$, \cref{thm:crop}) turns on a different axis, readout resolution: it forces full
support, so the small-footprint embedding column cannot reach it.}
\label{tab:ladder}
\small
\begin{tabular}{@{}llll@{}}
\toprule
Level & Task & Distributional $\mathcal B(\Delta)$ \ ($h$: distortion-free) & Footprint $\mathcal E(m)$ (embedding) \\
\midrule
0 & Detection & $\Theta(1/\Delta)$ \ / \ $\Theta(\lambda/h)$ & same (baseline) \\
1 & Attribution ($N$) & $\Omega(\log N/\Delta)\,[\Theta\text{ if mixing}]$ \ / \ $\Theta(\log N/h)$ & $\Theta(\log N/\log q)$, \emph{$\Delta$-free} \\
2 & Extraction ($\ell$) & $\Omega(\ell/\Delta)\,[\Theta\text{ if mixing}]$ \ / \ $\Theta(\ell/h)$ & $\Theta(\ell/\log q)$, \emph{$\Delta$-free} \\
3 & Localization (res.\ $w$) & needs full support, $w\Delta\ge\log(n/w)$ & coarse $w=\omega(1)$ only \\
\bottomrule
\end{tabular}
\end{table}

\subsection{Level 3, and a shared budget}

The top rung, localization, is where the framework does work the budget alone does not predict. A watermark is
\emph{localizing} if it can name which sub-region carries the mark, and the demanding version asks this to
survive cropping. Model a crop adversary that deletes all but a contiguous window $W$ of length $w$; a scheme
is \emph{$w$-crop-robust} if the detector succeeds on $x_W$ for every offset. We assume the scheme is
\emph{$r$-local / edit-based} (\cref{asm:local}): the marked carrier agrees with an unmarked one outside an
$r$-neighbourhood $R^{+}$ of the \emph{carrier footprint} $R$ (the support of the edit coupling; for a scheme
in $\mathcal E(m)$ over a factorizing source one may take $R=T$), the post-hoc regime of image, audio, and
patch watermarks, where
Level-3 localization is actually studied. Robustness is the one property that is \emph{not} a functional of
$\nu$ alone: mass and shape live in $\nu$; crop-robustness additionally needs the edit coupling on $R$, which
is why \cref{asm:local} is stated on the carrier.

\begin{restatable}[Footprint--resolution uncertainty]{theorem}{ThmCrop}\label{thm:crop}
Under \cref{asm:local}, any $w$-crop-robust scheme has dilated support $R^{+}$ meeting every length-$w$
window; since $|R^{+}|\le(2r+1)|R|$, its carrier footprint $|R|$ obeys
\[
  |R|\,w\ \ge\ \frac{n-w+1}{2r+1},\qquad\text{i.e.}\qquad |R|\,w=\Omega(n)\ \ (r=O(1)),
\]
the $\Omega(n)$ reading trivial for $w>n/2$ (a nonempty $R$ already gives $|R|\,w>n/2$). For a scheme in
$\mathcal E(m)$ whose natural law factorizes across positions (in particular i.i.d.; the post-hoc regime of
\cref{asm:local}) one may take $R=T$, so $|R|\le m$ and the bound reads $m\,w=\Omega(n)$.
Consequently (i) a small-footprint embedding scheme ($m=o(n)$, hence $|R|=o(n)$) is crop-robust only at
coarse resolution $w=\omega(1)$, and (ii) constant-resolution crop-robustness forces $|R|=\Omega(n)$,
i.e.\ full support, that is, biasing. Moreover (iii) under \cref{asm:bounded}, for a natural law that
factorizes across the length-$w$ blocks (in particular i.i.d., and the whitened regime of the post-hoc
schemes \cref{asm:local} targets), $w$-crop-robust \emph{localization} (naming the window among
the $\Theta(n/w)$ blocks from $x_W$ alone) requires $w\Delta\ge(1-o(1))\log(n/w)$.
\end{restatable}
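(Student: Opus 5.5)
The plan has three parts: a covering argument for the main inequality, a specialization to $\mathcal E(m)$ giving (i) and (ii), and a Fano argument over block positions for (iii).

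\textbf{Main inequality.} First I will show that $w$-crop-robustness forces $R^{+}$ to meet every length-$w$ window. Suppose some contiguous window $W$ of length $w$ is disjoint from $R^{+}$. By \cref{asm:local} the coupling then gives $x_W=x'_W$ with $X'\sim\pnat$. So the law of the marked crop $x_W$ coincides exactly with its natural law. For any detector $D$ applied to $x_W$, Type-I plus Type-II error is therefore at least $1$, and $D$ cannot succeed at that offset. This contradicts robustness.

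With that in hand, the main inequality is a covering count. Each point $i$ lies in at most $w$ of the $n-w+1$ windows. So the at most $(2r+1)|R|$ points of $R^{+}$ hit at most $(2r+1)|R|\,w$ windows, counted with multiplicity. Every window must be hit at least once, which gives $(2r+1)|R|\,w\ge n-w+1$. The remark for $w>n/2$ is immediate, since a nonempty $R$ gives $|R|\,w\ge w>n/2$.

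\textbf{Specialization to $\mathcal E(m)$, items (i) and (ii).} I will take $R=T$, using the factorizing hypothesis as already noted after \cref{asm:local}. Concretely, couple $X'$ by resampling the coordinates in $T$ from $\pnat$. Because the off-$T$ kernels are natural and, by factorization, prefix-independent, the off-$T$ coordinates can be shared between $X$ and $X'$, with $r=0$. Then (i) follows by reading $m\,w\ge n-w+1$ with $m=o(n)$, which forces $w=\omega(1)$. For (ii), take $w=O(1)$; the inequality then forces $|R|=\Omega(n)$.

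\textbf{Item (iii).} I treat localization as an $M$-ary hypothesis test, with $M=\Theta(n/w)$ candidate blocks. Let the block index $J$ be uniform on $[M]$; the scheme marks block $J$, and the decoder sees $x_W$. I intend to bound the information $I(J;\text{observation})$ by the divergence budget a single block can carry, namely $w\Delta$. This uses the golden formula (\cref{fact:golden}) with the reference $\pnat$ on the block, together with the sup- or avg-form chain rule of \cref{asm:bounded}; factorization across blocks ensures the other blocks contribute nothing that depends on $J$. Fano (\cref{fact:fano}) then gives $P_e\ge 1-(w\Delta+\log2)/\log M$. Vanishing error therefore needs $w\Delta\ge(1-o(1))\log(n/w)$.

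\textbf{Main obstacle.} The hard step is setting up (iii) so that the information about location really is bounded by one block's divergence rather than by $n\Delta$. The per-block bound holds only because the decoder sees a single window and the blocks are independent. I will first handle the i.i.d. case by conditioning on the crop offset. If that does not go through cleanly, I will instead compare each hypothesis pairwise with the all-natural reference. That gives $\KL{P_j}{\pnat}\le w\Delta$ for each block $j$, and applying the golden formula with $Q=\pnat$ then yields $I\le w\Delta$ directly.
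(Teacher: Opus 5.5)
Your proposal is correct and follows the paper's proof: the same window-misses-$R^{+}$ contradiction and covering count, the same $r=0$, $R=T$ coupling for factorizing $\mathcal E(m)$ schemes, and for (iii) the same Fano-plus-golden-formula argument over a uniformly chosen block, with the natural block law as reference. Two small points on (iii). First, $J$ should be the block that the crop retains from a fully marked carrier, not a block the scheme chooses to mark. Second, the bound $\KL{P_{x_{W_j}}}{\pnat^{W}}\le w\Delta$ needs two steps: the chain rule applied conditionally on the marked pre-block prefix, then convexity of relative entropy over that prefix. Block factorization is used exactly here, to make the natural reference prefix-free; it is not that the other blocks ``contribute nothing'', since the marked kernels in block $j$ may still depend on the prefix. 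Finally, the paper averages $\sum_{i\in W_J}\bar d(i)$ over $J$, so it needs only the global budget $\sum_i\bar d(i)\le n\Delta$. Your per-block cap instead uses the per-token form of \cref{asm:bounded}, which the hypothesis supplies, so both versions are valid.
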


\begin{proof}[Proof sketch]
If a length-$w$ window $W$ misses $R^{+}$, then by the coupling of \cref{asm:local} $x_W=x'_W\sim\pnat$, so a
detector seeing only $x_W$ tests $\pnat$ against $\pnat$ and cannot beat chance, a contradiction. Hence $R^{+}$
meets every length-$w$ window; there are $n-w+1$ such windows and one element pierces at most $w$ of them,
so $|R^{+}|\ge(n-w+1)/w$; with
$|R^{+}|\le(2r+1)|R|$ this gives $|R|\,w=\Omega(n)$, and $|R|\le m$ for $\mathcal E(m)$. For (iii),
partition $[n]$ into $N_w=\lfloor n/w\rfloor$
disjoint length-$w$ blocks; naming the surviving block is attribution among the blocks (\cref{thm:separation}
with the $N_w$ blocks as users), so by Fano $I(J;x_{W_J})\ge(1-o(1))\log N_w$ for
$J\sim\Unif[N_w]$. When $\pnat$ factorizes across the blocks, the golden formula bounds $I(J;x_{W_J})\le
\E_J\sum_{i\in W_J}d_i=\tfrac1{N_w}\sum_{i}d_i\le n\Delta/N_w=w\Delta$, using only the \emph{global}
budget $\sum_i d_i\le n\Delta$ (no per-token cap); hence $w\Delta\ge(1-o(1))\log(n/w)$. For correlated
$\pnat$ the cropped block marginalizes its prefix and this accounting can fail; the relation persists in our
numerical search, but a tight general proof is left open.
\end{proof}

\begin{figure}[t]
\centering
\begin{tikzpicture}[font=\sffamily,>=Stealth,line cap=round]
  \def\L{10.6}\def\sy{2.75}\def\sh{0.55}
  \fill[bluetint] (0,\sy) rectangle (\L,\sy+\sh);
  \draw[inkgray!55,line width=0.8pt] (0,\sy) rectangle (\L,\sy+\sh);
  \node[anchor=east,font=\scriptsize,text=inkgray] at (-0.12,\sy+\sh/2) {carrier};
  \node[anchor=west,font=\scriptsize,text=inkgray] at (\L+0.1,\sy+\sh/2) {$[1,\dots,n]$};
  \foreach \a/\b in {0.9/1.65, 4.05/4.8, 7.25/8.0, 9.7/10.45}{
    \fill[attribgold!22] (\a,\sy) rectangle (\b,\sy+\sh);
    \draw[attribgold!85!black,line width=1.2pt] ({(\a+\b)/2},\sy-0.12) -- ({(\a+\b)/2},\sy+\sh+0.12);
  }
  \node[anchor=south,font=\scriptsize,text=attribgold!85!black] at (1.28,\sy+\sh+0.16)
     {marked $R$ (dilated $R^{+}$, width $2r{+}1$)};
  \def\wa{2.25}\def\wb{3.75}
  \draw[decorate,decoration={brace,amplitude=5pt,mirror},inkgray,line width=0.9pt] (\wa,\sy-0.18)--(\wb,\sy-0.18);
  \node[anchor=north,font=\scriptsize,text=inkgray] at ({(\wa+\wb)/2},\sy-0.30) {a crop window of length $w$};
  \node[anchor=north,align=center,font=\scriptsize,text=attribgold!85!black] at ({(\wa+\wb)/2},\sy-0.74)
     {falls in a gap: $x_W=x'_W\sim\pnat$\\(unmarked) $\Rightarrow$ detector at chance $\bm\times$};
  \node[anchor=north west,align=left,font=\scriptsize,text=inkgray] at (0,1.02)
     {So \emph{every} length-$w$ window must meet $R^{+}$ $\;\Rightarrow\;$ consecutive marks lie $\le w$ apart
      $\;\Rightarrow\;$ $|R^{+}|\ge \dfrac{n}{w+1}$,};
  \node[anchor=north west,font=\small,text=detectblue] at (0.0,0.30) {hence};
  \node[anchor=north west,rounded corners=2pt,draw=detectblue,line width=0.9pt,fill=bluetint,
        inner sep=4pt,font=\small\bfseries,text=detectblue] at (0.95,0.40) {$|R|\,w=\Omega(n)$};
  \node[anchor=west,align=left,font=\scriptsize\itshape,text=inkgray] at (3.8,0.06)
     {few marks $\Rightarrow$ only coarse resolution;\\ fine resolution $\Rightarrow$ full support (biasing).};
\end{tikzpicture}
\caption{\textbf{Why a small footprint cannot localize at fine resolution} (\cref{thm:crop}). Under the edit
coupling (\cref{asm:local}), a crop window that misses the dilated footprint $R^{+}$ sees text distributed
exactly as $\pnat$, so a detector reading only that window tests $\pnat$ against $\pnat$ and cannot beat
chance. Every length-$w$ window must therefore meet $R^{+}$, which forces the marks no more than $w$ apart:
$|R|\,w=\Omega(n)$. Few marks buy only coarse resolution; fine resolution forces full support, that is,
biasing.}
\label{fig:crop}
\end{figure}

So Level 3 turns on a different axis than the footprint (\cref{fig:crop}): a small-footprint embedding scheme
cannot be crop-robust at fine resolution at all (a single splice removes its mark), whereas crop-robust fine
localization \emph{forces} full support $|R|=\Omega(n/w)$. A scheme localizes when it reads a full-support mark
at fine per-region resolution and self-locates at window budget $w\ge\log(n/w)/\Delta$. This is the structural
form of the empirical split (fine-readout schemes localize; global-readout schemes do not), made precise in
\cref{sec:taxonomy}.

Finally, the levels are not independent obligations but draws on one account.

\begin{restatable}[Forensic rate region]{proposition}{PropRegion}\label{prop:region}
Under \cref{asm:bounded} and in the regime $n\Delta\to\infty$, with $R_{\att}=\log N$ and $R_{\ext}=\ell\log2$:
\begin{enumerate}[label=(\alph*)]
  \item \emph{(Converse.)} vanishing-error attribution \emph{and} extraction force $R_{\att}+R_{\ext}\le
  (1+o(1))\,n\Delta$;
  \item \emph{(Achievability.)} for regular, strongly-mixing $\pnat$ (general $q$-ary conjectural, inheriting
  \cref{thm:dichotomy}(a)), every pair with $R_{\att}+R_{\ext}\le(1-o(1))\,nC$ is achieved by one multiplexed
  conditional tilting code, where $C=\Theta(\Delta)$ is the per-token channel capacity.
\end{enumerate}
Thus the achievable region is a slope-$-1$ simplex whose boundary is \emph{order-tight}, $\Theta(n\Delta)$: it
lies between $nC$ and $n\Delta$, which agree up to the capacity constant ($C\approx\Delta/4$ in the binary
witness of \cref{sec:numerics}). The corners carry no extra loss.
\end{restatable}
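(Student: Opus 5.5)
The plan is to treat the joint secret $S=(U,W)$, uniform on $[N]\times\{0,1\}^\ell$, as a single message of $\log N+\ell\log2=R_{\att}+R_{\ext}$ nats. Both parts then reduce to results already proved for one secret: the forensic-recovery budget (\cref{thm:budget}(a)) for the converse, and the conditional tilting code of \cref{thm:dichotomy}(a) for achievability.

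\emph{Converse.} Suppose a decoder outputs $(\hat U,\hat W)$, with each marginal error vanishing. By a union bound the joint error $\Pr[(\hat U,\hat W)\ne(U,W)]$ also vanishes. \Cref{thm:budget}(a) bounds the information on the joint secret: $I(S;X)\le n\Delta$. Only the avg-form cap is needed here, since $\|\nu\|_1=\sum_t\nu(t)\le\sum_t\bar d(t)$ by \cref{prop:profile}(ii). Fano (\cref{fact:fano}) applied to the uniform $S$ with $|S|=Ne^{\ell\log 2}$ gives
\[
(1-P_e)(R_{\att}+R_{\ext})\le n\Delta+\log2 .
\]
Since $P_e\to0$ and $n\Delta\to\infty$, this yields $R_{\att}+R_{\ext}\le(1+o(1))\,n\Delta$. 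The $\log 2$ term is absorbed precisely because $n\Delta\to\infty$, which is where the regime hypothesis is used.

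\emph{Achievability.} Fix a pair with $R_{\att}+R_{\ext}\le(1-o(1))nC$. Build one codebook of $Ne^{R_{\ext}}$ binary codewords $c_{(u,w)}\in\{0,1\}^n$, indexed by the joint secret, and apply the conditional tilting code of \cref{thm:dichotomy}(a): tilt $\pnat(\cdot\mid X_{<i})$ by $\Delta$ wherever $c_{(u,w)}[i]=1$. The KL chain rule keeps every key inside the sup-form budget, so the scheme lies in $\mathcal B(\Delta)$. Decode jointly and read off both coordinates; the corners $\ell=0$ and $N=1$ are special cases of the same code, so they incur no extra loss. Under the regularity and mixing hypothesis the proof of \cref{thm:dichotomy}(a) already provides reliable decoding of $M$ messages once $\log M\le(1-o(1))nC$. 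Applying it with $M=Ne^{R_{\ext}}$ gives vanishing joint error, and therefore vanishing error for each task. Finally, $C=\Theta(\Delta)$ (from the JSD accounting, with $C\approx\Delta/4$ in the binary witness), so the inner boundary $nC$ and the outer boundary $n\Delta$ agree up to a constant. This is the order-tightness claim.

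\emph{Main obstacle.} The delicate step is achievability. \Cref{thm:dichotomy}(a) is stated as a sample-complexity bound $\Theta(\log M/\Delta)$, not with an explicit rate constant $C$. I would therefore re-enter its proof, the Verdú--Han information-density concentration under mixing, to extract that any rate below the per-position information $\liminf\frac1n I$, lower-bounded by $c(p_0)\Delta$, is achievable. I would then define $C$ as that constant. I would also check that the finite-blocklength floor $n=\Omega(1/\Delta)$ is harmless in the regime $n\Delta\to\infty$. For general $q$-ary $\pnat$ this step inherits the conjectural status flagged in \cref{thm:dichotomy}(a).
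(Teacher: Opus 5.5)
Your proposal is correct and follows the paper's proof: achievability uses the same single conditional tilting code carrying the concatenated message $(U,W)$, and the converse rests on the same budget $I(U,W;X)\le n\Delta$ from \cref{thm:budget}(a) combined with Fano. The only difference is cosmetic: you apply one Fano bound to the joint secret after a union bound on the two error events, whereas the paper splits $I(U,W;X)=I(U;X)+I(W;X\mid U)$ by the chain rule and bounds each term below by its own Fano inequality (conditioned on $U$ for the payload); both give $R_{\att}+R_{\ext}\le(1+o(1))\,n\Delta$, yours with a single $\log 2$ slack.
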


\begin{proof}[Proof sketch]
With $S=(U,W)$, the chain rule gives $I(U,W;X)=I(U;X)+I(W;X\mid U)\le n\Delta$ (\cref{thm:budget}); Fano on
$U$ and conditional Fano on $W\mid U$ make the two terms $\ge(1-o(1))R_{\att}$ and $\ge(1-o(1))R_{\ext}$,
giving (a). For (b), concatenate the two messages into one codeword and apply \cref{thm:dichotomy}(a) at rate
$nC$.
\end{proof}

So it is not just that attribution and extraction are each $\Omega(\cdot/\Delta)$, but that a biasing watermark
cannot be cheap at both at once: every nat spent naming the user is a nat denied to the payload
(\cref{fig:rateregion}). The embedding
analogue is the $\Delta$-free simplex $R_{\att}+R_{\ext}\le m\log q$, now a theorem
(\cref{thm:budget}(c), the $L^0$ face of the mass cap), and there, uniquely, the bound is constant-exact
(the zero-error stamp readback meets it).

\begin{figure}[t]
\centering
\scalebox{1.3}{
\begin{tikzpicture}[font=\sffamily,>=Stealth,line cap=round]
  \begin{scope}
    \def\A{3.45}
    \fill[bluetint] (0,0)--(2.3,0)--(0,2.3)--cycle;
    \fill[detectblue!12] (2.3,0)--(3.1,0)--(0,3.1)--(0,2.3)--cycle;
    \draw[detectblue,line width=1.1pt] (2.3,0)--(0,2.3);
    \draw[detectblue,densely dashed,line width=1.0pt] (3.1,0)--(0,3.1);
    \draw[->,inkgray!80,line width=0.8pt] (0,0)--(\A,0) node[right,font=\scriptsize,text=inkgray]{$R_{\att}$};
    \draw[->,inkgray!80,line width=0.8pt] (0,0)--(0,\A) node[above,font=\scriptsize,text=inkgray]{$R_{\ext}$};
    \node[font=\scriptsize,text=detectblue,anchor=west,rotate=-45] at (0.95,1.55) {$nC$};
    \node[font=\scriptsize,text=detectblue,anchor=west,rotate=-45] at (1.20,2.26) {$n\Delta$};
    \node[align=center,font=\small\bfseries,text=detectblue] at (1.7,-0.65)
       {(a) biasing $\mathcal B(\Delta)$\\[-1pt]{\scriptsize\textnormal{soft frontier, order-tight $\Theta(n\Delta)$}}};
  \end{scope}
  \begin{scope}[shift={(5.3,0)}]
    \def\A{3.45}
    \fill[goldtint] (0,0)--(3.0,0)--(0,3.0)--cycle;
    \draw[attribgold!88!black,line width=1.4pt] (3.0,0)--(0,3.0);
    \draw[->,inkgray!80,line width=0.8pt] (0,0)--(\A,0) node[right,font=\scriptsize,text=inkgray]{$R_{\att}$};
    \draw[->,inkgray!80,line width=0.8pt] (0,0)--(0,\A) node[above,font=\scriptsize,text=inkgray]{$R_{\ext}$};
    \node[font=\scriptsize,text=attribgold!88!black,anchor=west,rotate=-45] at (0.66,2.05) {$=m\log q$};
    \fill[attribgold!88!black] (3.0,0) circle (1.7pt);
    \node[align=center,font=\scriptsize,text=attribgold!88!black] at (1.15,0.52)
       {\textbf{the stamp}\\zero error};
    \draw[->,attribgold!85!black,line width=0.6pt] (1.78,0.42) to[bend right=10] (2.96,0.06);
    \node[align=center,font=\small\bfseries,text=attribgold!88!black] at (1.7,-0.65)
       {(b) embedding $\mathcal E(m)$\\[-1pt]{\scriptsize\textnormal{sharp frontier, constant-exact}}};
  \end{scope}
  \node[align=center,font=\scriptsize\itshape,text=inkgray] at (4.15,-1.45)
     {one mass split two ways: every nat naming the user is a nat denied the payload};
\end{tikzpicture}}
\caption{\textbf{Two rate regions, one shape} (\cref{prop:region}). Attribution and extraction draw on the
single budget of \cref{thm:budget}, so their rates trade off along a slope-$(-1)$ frontier. \textbf{(a)} Under
the per-token cap $\mathcal B(\Delta)$ the frontier is soft, order-tight $\Theta(n\Delta)$, lying between the
achievable $nC$ and the converse $n\Delta$ ($C\approx\Delta/4$ in the binary witness of \cref{sec:numerics}).
\textbf{(b)} Under the footprint cap $\mathcal E(m)$ it is sharp and constant-exact, $R_{\att}+R_{\ext}=m\log
q$, met with equality and zero error by the stamp. The same simplex, with a soft boundary on one side and a
hard one on the other.}
\label{fig:rateregion}
\end{figure}

\subsection{Numerical illustration}
\label{sec:numerics}

\begin{figure}[t]
\centering
\includegraphics[width=\textwidth]{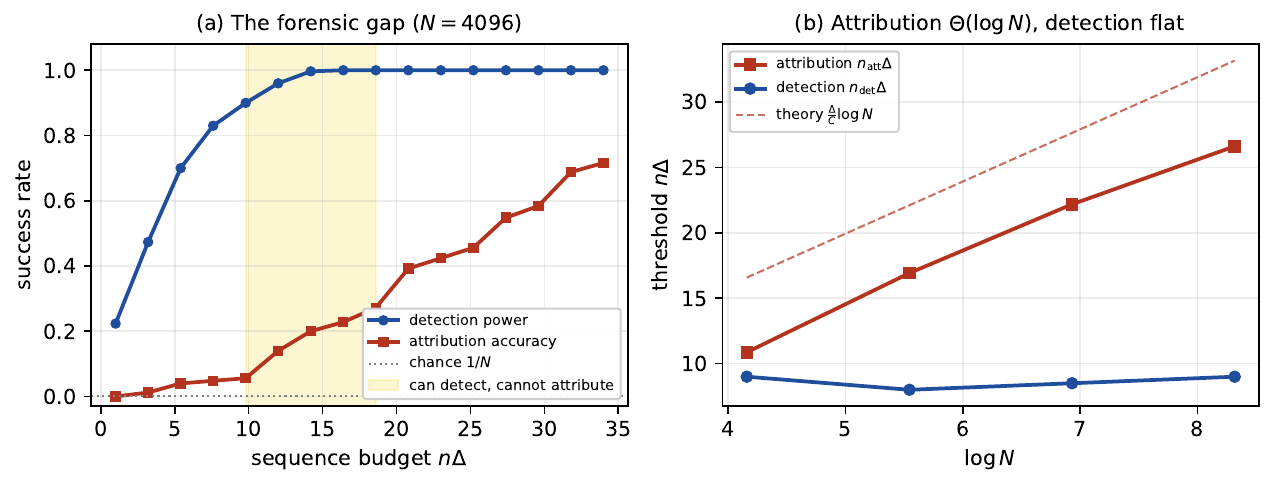}
\caption{Seeded Monte-Carlo confirmation of the separation (binary tilting code, $s=0.08$, so the per-token
budget is $\Delta=\KL{\pwm}{\pnat}/n$ and the channel capacity $C=I(b;X)\approx\Delta/4$). \textbf{(a)} At $N=4096$,
detection power saturates to $1$ while attribution accuracy is still near chance over a wide window (shaded):
one can detect long before one can attribute. \textbf{(b)} The attribution threshold grows linearly in
$\log N$ (slope matching $\Delta/C$, dashed), while the detection threshold is flat in $N$; their ratio is
$\Theta(\log N)$.}
\label{fig:numerics}
\end{figure}

We instantiate the tilting code over a binary alphabet ($p_+=\mathrm{Ber}(\tfrac12+s)$, $\pnat=
\mathrm{Ber}(\tfrac12)$) and measure detection power and attribution accuracy by seeded Monte Carlo
(\cref{fig:numerics}). Three findings confirm the theory. (i) The
converse \eqref{eq:fano} is never violated. (ii) The attribution threshold scales as $\log N/C$, where
$C=I(b;X)=\Theta(\Delta)$ is the channel capacity (empirically $C\approx\Delta/4$ at $s=0.08$, so the operative
constant is the capacity, not $\Delta$ itself), and grows by exactly the factor $\log4096/\log64=2$ as $N$ goes
from $64$ to $4096$, while the detection threshold is flat in $N$, so the ratio is $\Theta(\log N)$
(\cref{fig:numerics}b). (iii) At $N=4096$, detection power is already $0.99$ while attribution accuracy is
still $0.28$ at a common length ($n=1400$; \cref{fig:numerics}a), and
detection saturates to $1.00$ several hundred tokens before attribution leaves chance (it reaches $0.55$ only
at $n=2200$): the finite-$N$ form of the gap. The embedding stamp, by contrast, attributes among all
$4096$ users from $n=12$ tokens, where the biasing scheme is at chance. The first two findings quantify
\cref{thm:separation}; the third quantifies \cref{thm:dichotomy}.

Two further seeded experiments exercise \cref{thm:entropyclose} directly. The first runs the threshold
decoder on an i.i.d.\ binary source
($h\approx0.325$, $r_2\approx0.198$) at $\delta=0.05$ and $N$ up to $2^{32}$: the per-innocent false-positive
rate satisfies $\mathrm{FP}\cdot N/\delta\le0.11$ at every tested length, \emph{including} the boundary
$n=\tau/h$, while the consistency-only decoder (no surprisal threshold) overshoots the $\delta/N$ target by
one to four orders of magnitude --- the collision trap of \cref{rem:listsize}, priced by
$e^{-nr_2}$, not by the decision rate. A companion run tracks
the consistency decoder's threshold drift (seeded) and supports the abstain-on-tie completeness observation
of \cref{rem:listsize}. The second checks the Jensen step,
$\E_\kappa[H(\hat p_\kappa)]\le nh_n$, at all tested $(n,N)$ pairs, including $N>e^{nh}$ (all pass). All
runs are seeded and reproducible.

\subsection{Confirmation on deployed language models}
\label{sec:llm}

The theory makes a falsifiable numerical prediction: the consistency rule's overcharge is the entropy ratio
$h/r_2$, a quantity fixed by the source and by nothing about the watermark. On natural language it should be
a stable constant near $1.6$. The binary source of \cref{sec:numerics} isolates the mechanism on a
distribution we control; we now close the loop, measuring $h/r_2$ and the detection/attribution thresholds
where they are meant to bite: on real autoregressive language models, and on the watermarks practitioners
deploy. These are illustrations
of the theory's operational content, not new theory --- the lower bounds remain the theorems, unfalsifiable
by simulation, and what follows measures their \emph{inputs} (the rates $h,r_2$ and the
detection/attribution thresholds) on real systems; the exact false-positive guarantee $\delta/N$ itself was
exercised in the seeded numerics of \cref{sec:numerics}. All experiments are seeded and reproducible.

\paragraph{The collision trap on a shipped distortion-free scheme.}
\Cref{thm:entropyclose} and its trap (\cref{rem:listsize}) live in the distortion-free regime, whose
deployable instance is the exponential-minimum-sampling (Gumbel-max) watermark of
\citet{aaronson2022watermark} and \citet{kuditipudi2023robust}: a key-seeded Gumbel perturbs the
logits, $x_t=\arg\max_i\bigl(\log\pnat(i\mid x_{<t})+g^{(\kappa)}_{t,i}\bigr)$, so marginally
$x_t\sim\pnat$. We run it on GPT-2 \citep{radford2019gpt2}. Two properties make the entropy column
apply, and we verify rather than assume both. \textsf{(i)}~\emph{Distortion-free}: the marked text has
the same Shannon and R\'enyi-2 rates as unmarked sampling ($h=3.81$ vs.\ $3.76$, $r_2=2.36$ vs.\ $2.33$
nats/token), so the mark spends no distortion. \textsf{(ii)}~\emph{The collision is real}: at a fixed
context, $8000$ independent keys reproduce $\pnat$ (its $20$ most likely tokens to within $0.006$), so
an innocent key emits token $v$ with probability exactly $\pnat(v)$ --- so the per-token collision law
of \cref{rem:listsize} is a property of the construction at a fixed context, not a modelling assumption
($r_2$ here is the empirical R\'enyi-2 \emph{rate}, and the $(N-1)e^{-nr_2}$ count is its per-token
extrapolation). The guarantee
lengths then separate as predicted (\cref{fig:exp-ems}): implicating no innocent above $\delta/N$ costs
$\log(N/\delta)/r_2$ tokens under the consistency rule but only $\log(N/\delta)/h$ under surprisal
thresholding, a ratio $h/r_2=1.61$ on GPT-2. This is the trap on a scheme one can ship, and the
deterministic witness of \cref{sec:numerics} recovers the very same constant.

\begin{figure}[t]
\centering
\includegraphics[width=\textwidth]{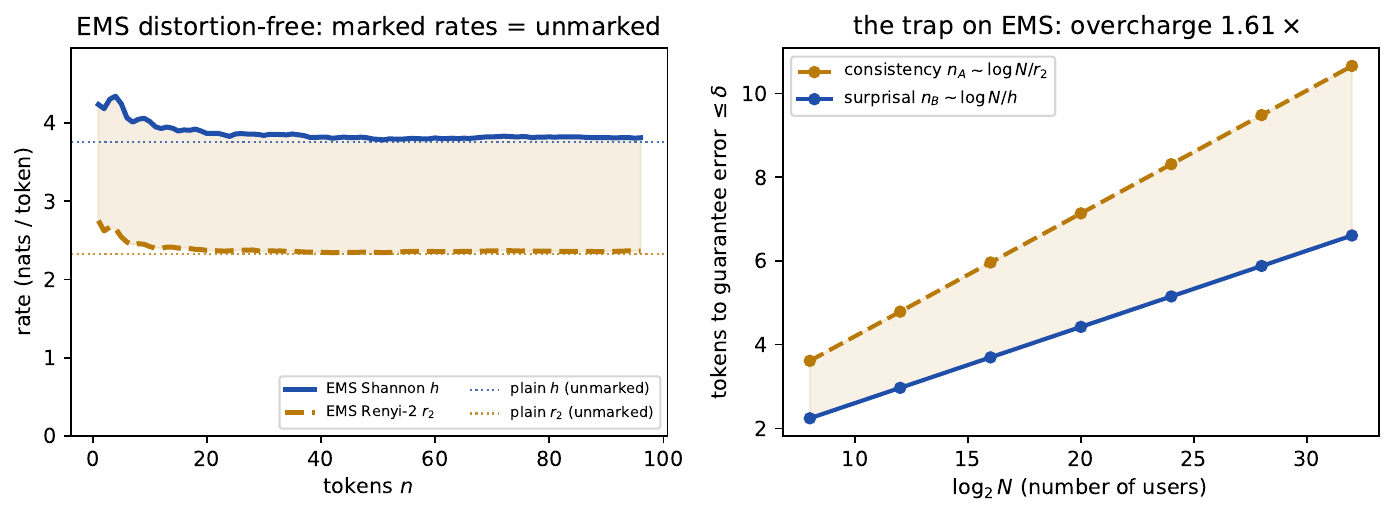}
\caption{\textbf{The collision trap on the deployed EMS (Gumbel-max) distortion-free watermark, GPT-2.}
\textbf{(a)}~The marked text's Shannon ($h$) and R\'enyi-2 ($r_2$) rates coincide with unmarked sampling
(dotted): the watermark is distortion-free, so the entropy column of \cref{thm:entropyclose} applies.
\textbf{(b)}~To implicate no innocent user above $\delta/N$, the consistency-list rule needs
$\log(N/\delta)/r_2$ tokens (amber) while surprisal thresholding needs $\log(N/\delta)/h$ (blue); the
shaded overcharge is the factor $h/r_2=1.61$, widening as $\Theta(\log N)$.}
\label{fig:exp-ems}
\end{figure}

\paragraph{Detection is not attribution, on the deployed green-list.}
The green-list of \citet{kirchenbauer2023watermark} is the extremal \emph{biasing} profile (the
$\mathcal B(\Delta)$ column of \cref{thm:dichotomy}): a context-keyed split greens a fraction
$\gamma=\tfrac14$ of the vocabulary and adds a fixed logit bias (here $2.0$). On GPT-2 the realized green
fraction is $0.62$ under the true key against $0.25$ under any innocent key. \Cref{fig:exp-greenlist}
exhibits \cref{thm:separation} in finite-$N$ form: detection power saturates by $n\approx28$ tokens
\emph{independently of $N$}, whereas the attribution length (here the $50\%$-accuracy crossing) grows
linearly in $\log N$ ($+2.3$ tokens per doubling, from $16$ to $61$ tokens as $N$ runs $2^{8}$ to
$2^{28}$). Between the two lies a
window in which the passage is detectable yet unattributable. Unlike the EMS scheme above, the
green-list \emph{shifts} the entropy: the two quality models of \cref{def:embedbias}, the same
separation read off either column.

\begin{figure}[t]
\centering
\includegraphics[width=\textwidth]{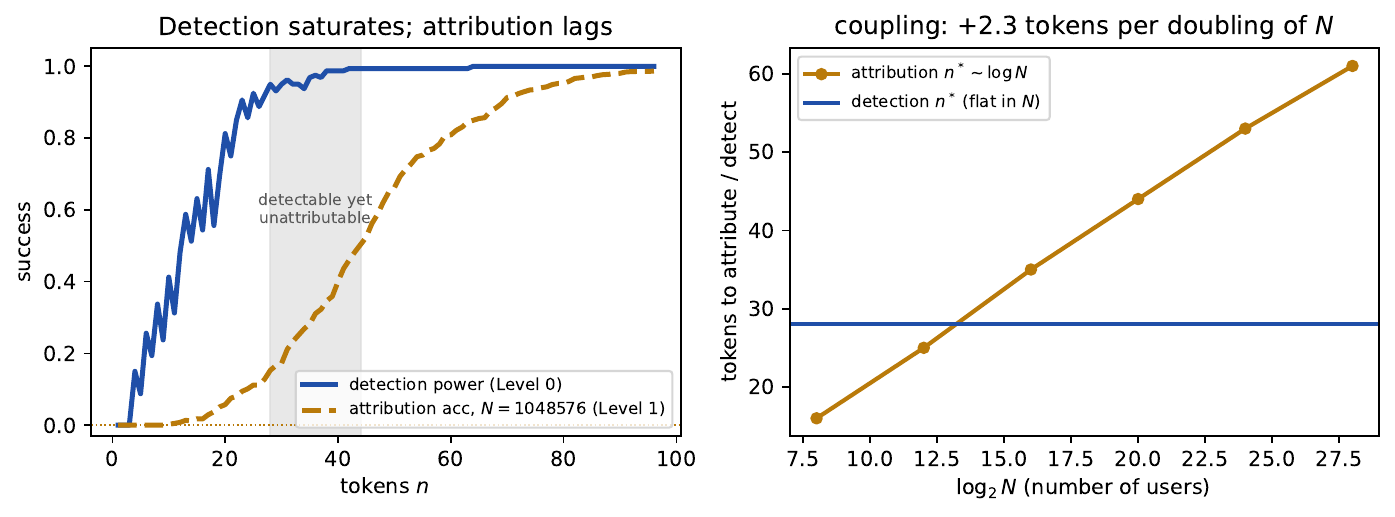}
\caption{\textbf{Detection is not attribution on the deployed green-list watermark, GPT-2.}
\textbf{(a)}~Detection power saturates by $n\approx28$ tokens while attribution accuracy (shown for
$N=2^{20}$) lags; in the shaded window the text is detectable yet unattributable. \textbf{(b)}~The
attribution length (the $50\%$-accuracy crossing) grows linearly in $\log N$ ($+2.3$ tokens per doubling
of $N$); the detection threshold is flat in $N$. The finite-$N$ form of the $\Theta(\log N)$ gap of
\cref{thm:separation}.}
\label{fig:exp-greenlist}
\end{figure}

\paragraph{The overcharge is a property of language, not of one model.}
The constant $h/r_2$ is the only model-dependent quantity in the trap, so we measure it across families.
On GPT-2, Pythia-410M \citep{biderman2023pythia}, and Qwen2.5-0.5B \citep{qwen2025qwen25} --- three
pretraining corpora and tokenizers, vocabularies from $50$k to $152$k --- the absolute rates fall as the
models sharpen ($h$ from $3.80$ to $2.34$ nats/token), but the overcharge is stable: $h/r_2=1.61,\,1.60,
\,1.57$ (\cref{fig:exp-crossmodel}). The collision trap reflects the gap between the Shannon and
R\'enyi-2 rates of natural language, not an artifact of any one model.

\begin{figure}[t]
\centering
\includegraphics[width=0.82\textwidth]{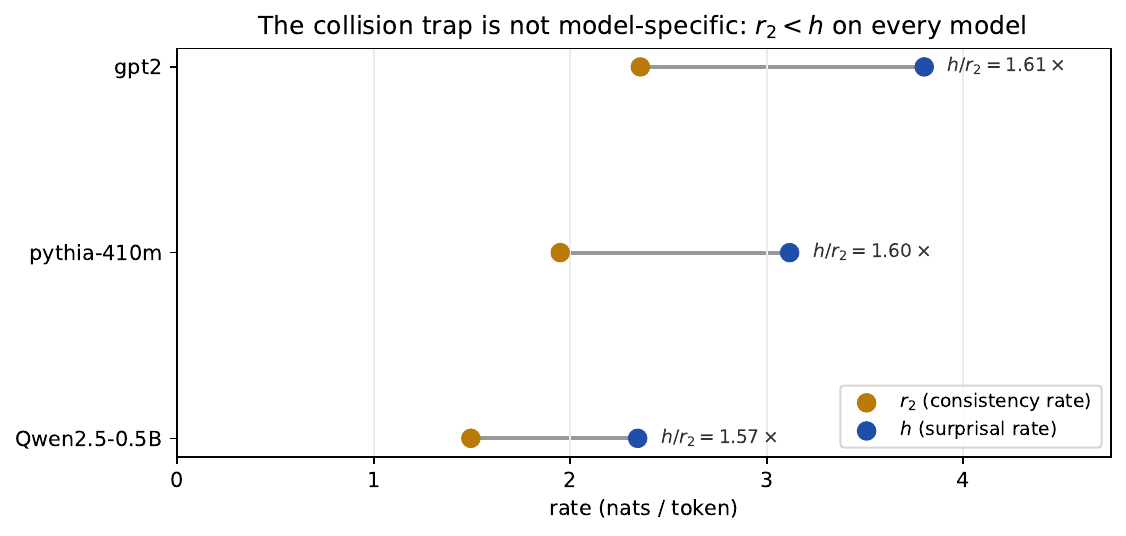}
\caption{\textbf{The overcharge $h/r_2$ across model families.} On GPT-2, Pythia-410M, and Qwen2.5-0.5B
the absolute rates differ, but $r_2<h$ with a stable ratio $\approx1.6$: the consistency-list rule costs
$\sim 60\%$ more tokens than surprisal thresholding on every model. The collision trap is a property of
natural-language entropy.}
\label{fig:exp-crossmodel}
\end{figure}

\section{The framework applied: a taxonomy of deployed watermarks}
\label{sec:taxonomy}

The classification of \cref{def:embedbias} is decided by one measurable quantity: the per-secret footprint
$T$ (the reading set: the positions on which the \emph{per-secret} kernels are marked, \cref{cor:bridge}),
read in the basis the extractor works in. The per-secret reading is not optional: the key-averaged support would misclassify every
distortion-free scheme as $\mathcal E(0)$, since averaging over the key can erase a mark present under every
individual key (parity witness, \cref{app:witnesses}). \Cref{tab:taxonomy} evaluates the criterion on
twenty-one representative schemes (the twenty deployed watermarks and the theoretical digit-stamp witness),
holding it fixed (so a scheme is embedding only if its per-secret kernels truly
mark $o(n)$ coordinates, not merely if its secret is readable from a sub-carrier); \cref{sec:schemefit}
works through each scheme individually. Two axes turn out to
matter, and they are orthogonal: the \emph{footprint} $T$, which governs the embedding/biasing
split and the $\Delta$-free decoupling of \cref{thm:dichotomy}; and the \emph{readout resolution}, whether
the mark is decoded globally or per-region, which governs Level 3. Three patterns follow.

\begin{figure}[t]
\centering
\begin{tikzpicture}[font=\sffamily,>=Stealth,line cap=round,
   dotE/.style={circle,fill=attribgold,inner sep=1.7pt},      
   dotB/.style={circle,fill=detectblue,inner sep=1.7pt},      
   lbl/.style={font=\scriptsize,inner sep=1pt},
  ]
  \def\W{9.4}\def\H{5.1}\def\mx{4.5}\def\my{2.55}
  \fill[inkgray!8] (0,\my) rectangle (\mx,\H);
  \draw[pattern=north east lines,pattern color=inkgray!20] (0,\my) rectangle (\mx,\H);
  \node[align=center,font=\scriptsize\itshape,text=inkgray] at (\mx/2,{(\my+\H)/2+0.2})
     {\textbf{forbidden}\\ small footprint cannot\\ localize at fine resolution\\ $m\,w=\Omega(n)$ \,(\cref{thm:crop})};
  \draw[densely dashed,inkgray!40,line width=0.6pt] (\mx,0)--(\mx,\H);
  \draw[densely dashed,inkgray!40,line width=0.6pt] (0,\my)--(\W,\my);
  \draw[->,inkgray!80,line width=0.9pt] (0,0)--(\W,0);
  \draw[->,inkgray!80,line width=0.9pt] (0,0)--(0,\H);
  \node[lbl,text=inkgray,anchor=north] at (\W,-0.05) {full support $\Omega(n)$};
  \node[lbl,text=inkgray,anchor=north west] at (0.12,-0.06) {small $o(n)$};
  \node[lbl,text=inkgray,anchor=north,font=\small\bfseries] at (\W/2,-0.5)
     {footprint $|T|$ \;\textnormal{\scriptsize($\Delta$-free $\leftarrow\!\rightarrow$ $\Delta$-coupled, \cref{thm:dichotomy})}};
  \node[lbl,text=inkgray,rotate=90] at (-0.22,1.05) {global readout};
  \node[lbl,text=inkgray,rotate=90] at (-0.22,\H-1.0) {fine / per-region};
  \node[lbl,text=inkgray,anchor=south,rotate=90,font=\small\bfseries] at (-0.55,\H/2)
     {readout resolution \;\textnormal{\scriptsize(localization, \cref{thm:crop})}};
  \node[dotE] (stamp) at (1.5,0.85) {}; \node[lbl,anchor=west,text=attribgold!88!black] at (1.65,0.85) {digit-stamp};
  \node[dotE] (tree)  at (2.6,1.7) {};  \node[lbl,anchor=west,text=attribgold!88!black] at (2.75,1.7) {Tree-Ring};
  \node[dotB] (green) at (6.0,0.8) {};  \node[lbl,anchor=west,text=detectblue] at (6.15,0.8) {green-list};
  \node[dotB] (gshad) at (7.7,1.5) {};  \node[lbl,anchor=west,text=detectblue] at (7.85,1.5) {Gaussian Shading};
  \node[dotB] (synth) at (5.7,1.9) {};  \node[lbl,anchor=west,text=detectblue] at (5.85,1.9) {SynthID};
  \node[dotB] (seal)  at (7.4,4.1) {};  \node[lbl,anchor=west,text=detectblue] at (7.55,4.1) {SEAL};
  \node[dotB] (edit)  at (5.9,3.6) {};  \node[lbl,anchor=west,text=detectblue] at (6.05,3.6) {EditGuard};
  \node[dotB] (audio) at (8.0,3.2) {};  \node[lbl,anchor=west,text=detectblue] at (8.15,3.2) {AudioSeal};
\end{tikzpicture}
\caption{\textbf{The two orthogonal axes} of the taxonomy. The footprint $|T|$ (horizontal; small gold dots
$=$ embedding, full-support blue dots $=$ biasing) governs the $\Delta$-free decoupling of
\cref{thm:dichotomy}; the readout resolution (vertical) governs localization (\cref{thm:crop}). Being
independent, the two axes let deployed schemes spread across three quadrants, and the localizers (SEAL,
EditGuard, AudioSeal) reach Level 3 by \emph{fine readout}, not a small footprint. The fourth quadrant, small
footprint with fine resolution, is empty for a reason: \cref{thm:crop} forbids it, $m\,w=\Omega(n)$.}
\label{fig:twoaxes}
\end{figure}

\begin{table}[t]
\centering\footnotesize
\caption{Where deployed watermarks fall, with the footprint criterion applied uniformly. ``Basis'' is the
representation the extractor reads; ``Footprint'' is the per-secret footprint $T$ (\cref{cor:bridge}), full
($\Omega(n)$) or small ($o(n)$) (footprint read on per-secret kernels; \cref{app:witnesses}); ``Model'' is the
operative quality model; ``Readout'' is global (whole-carrier statistic) or fine (per-region); ``Levels''
lists the forensic rungs the base construction reaches, with parenthesized rungs reached only by descendant
or extension variants (multi-bit, many-user, or message-channel versions), not the base scheme. Almost every
deployed scheme is full-support (biasing); the genuine $o(n)$-footprint cases are Tree-Ring (a ring in the
Fourier basis) and the theoretical digit-stamp. Level-3 localizers are full-support with \emph{fine} readout,
exactly as \cref{thm:crop} requires.}
\label{tab:taxonomy}
\setlength{\tabcolsep}{4pt}
\resizebox{\textwidth}{!}{%
\begin{tabular}{@{}lllllll@{}}
\toprule
Scheme & Basis & Footprint & Model & Readout & Levels & Class \\
\midrule
\multicolumn{7}{@{}l}{\emph{Language models (token carrier): all biasing}}\\
Green-list \citep{kirchenbauer2023watermark}        & tokens & full   & $\mathcal B(\Delta)$ & global & 0,(1,2) & biasing \\
Unigram \citep{zhao2024unigram}                      & tokens & full   & $\mathcal B(\Delta)$ & global & 0       & biasing \\
SynthID-Text \citep{dathathri2024synthid}            & tokens & full   & d-free               & global & 0       & biasing (d-free) \\
Reweighting \citep{wu2024dipmark,hu2024unbiased}     & tokens & full   & d-free               & global & 0       & biasing (d-free) \\
Multi-bit \citep{yoo2023multibit,arcmark2026}        & tokens & full   & $\mathcal B(\Delta)$/d-free & bucketed & 0,2 & biasing \\
BiMark \citep{feng2025bimark}                        & tokens & full   & d-free               & bucketed & 0,2 & biasing (d-free) \\
Many-user \citep{cohen2024many}                      & tokens & full   & d-free               & global & 0--2    & biasing (d-free) \\
Distortion-free \citep{kuditipudi2023robust}         & tokens & full   & d-free               & global & 0       & biasing (d-free) \\
Undetectable \citep{christ2024undetectable}          & tokens & full   & d-free               & global & 0       & biasing (d-free) \\
PRC \citep{christ2024pseudorandom}                   & tokens & full   & d-free               & global & 0       & biasing (d-free) \\
SemStamp \citep{hou2024semstamp}                     & sentence-emb. & full & $\mathcal B(\Delta)$ & global & 0 & biasing \\
\midrule
\multicolumn{7}{@{}l}{\emph{Images / audio}}\\
Tree-Ring \citep{wen2023treering}                    & Fourier & \textbf{small} & d-free & global & 0,(1) & \textbf{embedding} \\
RingID \citep{ci2024ringid}                          & Fourier & full  & d-free (part.) & global & 0,1 & biasing \\
Gaussian Shading \citep{yang2024gaussianshading}     & latent  & full  & d-free  & global & 0,2 & biasing (d-free) \\
ROBIN \citep{huang2024robin}                         & Fourier (interm.) & full & $\mathcal B(\Delta)$ & global & 0 & biasing \\
HiDDeN \citep{zhu2018hidden}                         & pixels  & full  & $\mathcal B(\Delta)$ & global & 0,2 & biasing \\
Stable Signature \citep{fernandez2023stablesig}      & feature & full  & $\mathcal B(\Delta)$ & global & 0,2 & biasing \\
SEAL \citep{seal2025}                                & latent  & full  & d-free  & \textbf{fine} & 0--3 & biasing (d-free) \\
EditGuard \citep{editguard2024}                      & pixels  & full  & $\mathcal B(\Delta)$ & \textbf{fine} & 0,(2),3 & biasing \\
AudioSeal \citep{sanroman2024audioseal}              & samples & full  & $\mathcal B(\Delta)$ & \textbf{fine} & 0,(2),3 & biasing \\
\midrule
\emph{(Digit-stamp, \cref{thm:dichotomy}(b))}        & tokens  & \textbf{small} & $\mathcal E(m)$ & global & 0--2 & \textbf{embedding} \\
\bottomrule
\end{tabular}%
}
\end{table}

\paragraph{1. Almost every deployed watermark is biasing, and the $\Delta$-free class is nearly empty.}
Applied consistently, the footprint criterion puts not only the LLM schemes but also the neural image and audio
watermarks on the biasing side: green-list and its descendants tilt every token; the distortion-free family
keeps $\pbar=\pnat$ but spreads its signal over the whole sequence; and SEAL, EditGuard, AudioSeal, and Stable
Signature perturb the entire latent, image, or waveform. For the distortion-free rows the per-secret reading
shows this is forced, not incidental: the per-secret kernels of an exact sampler are near-deterministic in the
key (each conditional collapses onto the token the key selects), so every position is marked (full support), and the per-token conditional divergence exceeds any finite cap, placing these schemes outside
$\mathcal B(\Delta)$ for every finite $\Delta$. They are governed by the entropy column, and by
\cref{thm:entropyclose} that column is now closed: attribution among $N$ users at $n=(1+o(1))\log N/h$ is
tight for the exact-sampler schemes with i.i.d.\ keys that the theorem covers (the computationally
undetectable constructions, governed by seed entropy, sit outside its hypotheses). The only genuine $o(n)$-footprint schemes in
\cref{tab:taxonomy} are Tree-Ring (a ring on a few Fourier shells) and the theoretical digit-stamp. Even
RingID \citep{ci2024ringid}, the Tree-Ring successor engineered for multi-key identification, does not join
them: its key capacity comes from a key-specific noise pattern occupying an \emph{entire} latent channel
alongside the discretized rings, so its per-secret footprint is a constant fraction of the latent
(\cref{sec:schemefit}) --- attribution capacity was bought by leaving the small-footprint class. The
$\Delta$-free decoupling of \cref{thm:dichotomy}(b) is therefore realized by essentially no deployed scheme,
which is precisely why deployable, edit-robust $\Delta$-free attribution is open (\cref{sec:discussion}), not
a solved case we are merely cataloguing.

\paragraph{2. Localization is a second, orthogonal axis (readout resolution), and the framework predicts
its cost.} Whether a scheme reaches Level 3 is decided not by its footprint but by whether its (full-support)
mark is decoded per-region: SEAL reads a per-patch hash, EditGuard a per-region mask, AudioSeal a per-sample
residual, while green-list reads one whole-passage count. This is not in tension with \cref{thm:crop}; it is
its content. Crop-robust localization at resolution $w$ \emph{requires} a carrier footprint of size
$\Omega(n/w)$, so a scheme that localizes finely and survives cropping \emph{must} be full-support. The deployed localizers are
indeed observed to be full-support (\cref{tab:taxonomy}), consistent with this. The framework thus predicts both halves of the empirical
benchmark on which fine-readout embedding-style schemes localize at high AUC while global-readout latent
schemes (Gaussian Shading, Tree-Ring in pixels) score near zero \citep{seal2025}.

\paragraph{3. The classification is basis-relative.} Tree-Ring is the one deployed embedding scheme, and only
in the right basis: in the latent-noise spatial basis the inverse-DFT of the ring spreads over all noise
coordinates (and the noise-to-image map is the full diffusion sampler, not a transform), so there it is
biasing; in the Fourier basis of the initial noise the mark is a small object on a few frequency shells,
and the detector reads it there by inverting the generation.\footnote{Like sparsity, the footprint is
relative to a fixed dictionary \emph{and} a fixed coordinate ordering: for Tree-Ring the dictionary is the
latent-Fourier one and the ordering is its radial indexing of shells. Only the mass $\norm{\nu}_1$ is
invariant under a change of either (\cref{prop:profile}).} The localization map can be a change of basis;
the framework asks in which representation the mark becomes an $o(n)$ object, and answers that for almost all
of today's schemes there is none. SemStamp \citep{hou2024semstamp} makes the relativity concrete a second
time, in the opposite direction: its extractor reads a sentence-embedding dictionary, where the carrier's
coordinates are sentences rather than tokens, and the basis change re-grains the footprint without shrinking
it: every sentence is steered into a keyed LSH region, so the scheme is full-support at either granularity.
A change of basis can relocate the mark (Tree-Ring) or merely change the units in which it is read
(SemStamp); only the former creates an $o(n)$ object.

\paragraph{Consistency with the definition.} A green-list key \emph{is} recoverable from a long window and a
SEAL patch-hash \emph{is} readable from a crop, yet both schemes mark a constant fraction of coordinates
under every key, so both are biasing under the per-secret footprint (\cref{def:embedbias}); they differ in
readout resolution, not in footprint. Keying the embedding/biasing line on ``readable from a sub-carrier''
would split them; keying it on the footprint does not, and \cref{tab:taxonomy} is the consistent application.

\section{Related work}
\label{sec:related}

\paragraph{Detection statistics (Level 0), imported.} A line of recent work pins down the sample complexity
of watermark detection. \citet{huang2023optimal} give uniformly most powerful tests and matching minimax
Type-II bounds; \citet{li2024statistical} derive closed-form detection efficiency and the pivotal statistic;
\citet{cai2024better} establish the per-step KL detectability floor $1-\sqrt{1-e^{-D_{\mathrm{KL}}}}$ and the
fidelity--detectability frontier (sharpened by \citealp{he2025optimal} to a tight $f$-divergence
bound). These fix our Level-0 rung at $\ntask{\det}=\Theta(1/\Delta)$; we treat the rate and the
$\Delta\!\leftrightarrow\!(\alpha,\beta)$ knob as given and build the ladder above them.

\paragraph{Multi-user watermarking and fingerprinting (Level 1).} \citet{cohen2024many} construct
language-model watermarks for many adaptive users with a detect-short / trace-long gap, built from Tardos and
Boneh--Shaw fingerprinting codes \citep{tardos2008optimal, boneh1998collusion}. This is the closest prior
art, and it proves the direction \emph{complementary} to ours: achievability of cheap detection with
expensive tracing in a cryptographic model. Their sole optimality remark is a counting argument (``$\log n$
bits are needed to identify a user''); there is no statistical converse on length, no distortion budget, and
no biasing side. The matching converse we use is not new: the \emph{distortion-constrained fingerprinting
converse} of \citet{moulin2008universal} and \citet{somekhbaruch2005capacity} already proves that
fingerprinting capacity under a fidelity constraint $\to0$ as the constraint tightens, exactly the
``couples to the budget'' phenomenon (with $L_2$-distortion in place of KL). We import it and instantiate
it for the per-token KL channel rather than claim it. Classical fingerprinting \citep{tardos2008optimal}
supplies the $\Theta(c^2\log N)$ scaling under the combinatorial marking assumption (the same code machinery
drives tight lower bounds well beyond watermarking, e.g.\ in differential privacy,
\citealp{bun2014fingerprinting}), but with no detection task and hence no separation. Our addition over this line is the contrast with the footprint model
(\cref{thm:dichotomy}), not the converse. \citet{jiang2024attribution} study detection versus attribution
empirically at fixed length and argue, under a strict threshold, that the two essentially coincide. This is
the opposite phenomenology to our impossibility regime, which their model (embedding bitstrings, no biasing,
no length law) cannot express. Concurrent work of \citet{song2025ideal} studies attribution faithfulness without
a sample-complexity ladder. On the deployed-attribution axis, \citet{sander2024radioactive} recover a
membership/identity signal from watermarked text (``radioactivity''); this is a Level-1-flavored capability in
practice, but it asks whether a model was trained on marked text rather than which of $N$ users emitted a
fixed sample, and it carries no length law. Multi-bit payload capacity (Level 2) is studied by
\citet{covert2026multibit} via Gelfand--Pinsker coding and by \citet{yoo2023multibit, arcmark2026}; our
embedding-side payload term is consistent with these, and the $\log N$ attribution surcharge is the axis they
lack.

\paragraph{The embedding/biasing taxonomy (Level 3 and the dichotomy).} The recent SoK
\citep{sok2024watermarking} taxonomizes ``statistical'' (token-probability) versus ``embedding''
(content-feature) schemes and argues detection and attribution should be separate functions, but on the
robustness-versus-unforgeability \emph{security} axis, with no sample-complexity content; the
statistical/embedding split there is an engineering taxonomy, never a theorem or a phase boundary. Empirical
localization results confirm the mechanism without explaining it: SEAL \citep{seal2025} localizes at
$98.2\%$ AUC where distributional schemes score $0\%$ on the same task, and EditGuard \citep{editguard2024}
gives tamper localization for image watermarks. Our \cref{thm:dichotomy} is, to our knowledge, the first
theorem stating the existence of a localization map as the boundary between $\Delta$-coupled and $\Delta$-free
forensic cost.

\paragraph{Concurrent information-theoretic frameworks.} Distributional Information Embedding
\citep{distribembed2025} is the closest concurrent framework: it analyzes the watermarking trade-off at the
level of the scalar $\I(S;X)$. Our profile $\nu(t)$ is the positional refinement of that scalar (its mass
recovers it), and the forensic ladder is the use the refinement is put to: tasks become functionals of $\nu$
that a single number cannot separate. The unified-optimization framework of \citet{gloaguen2026unified}
treats scheme \emph{design} as constrained optimization, the design axis, orthogonal to sample complexity.
Proportion estimation \citep{li2025proportion} targets a different Level-3-adjacent estimand: what
\emph{fraction} of a text is marked, not which positions; WISER \citep{bonnerjee2025wiser} localizes
multi-segment watermarks via change-point methods, an empirical/finite-sample method with no
information-theoretic lower bound.

\paragraph{Steganography.} The information-theoretic model of steganography
\citep{cachin2004stego, fridrich2009stego} and the square-root law of steganographic capacity
\citep{ker2008sqrtlaw} concern the \emph{bounded-total-KL} regime, where keeping the total divergence
constant forces an embeddable payload of order $\sqrt n$. Our regime is the \emph{linear-total-KL} one: a
per-token budget $\Delta$ gives total budget $n\Delta$ and a payload linear in $n$. The two should not be
conflated; the forensic-recovery budget (\cref{thm:budget}) is the linear-regime analogue of the stego capacity
bound, now resolved by forensic level and by mechanism. The structurally closest predecessor of a
per-position information quantity is \citet{fillerfridrich2009fisher}: under mutually independent embedding,
a per-element \emph{Fisher} information, summed across the cover, determines the capacity of
$\varepsilon$-secure steganography --- the engine of the square-root law. The profile $\nu$ differs on three
axes: it is a mutual-information quantity, not a second-order (Fisher) one; it feeds per-task functionals
(mass \emph{and} shape), not a single aggregate capacity; and it is defined on autoregressive conditionals,
not i.i.d.\ cover models.

\paragraph{Covert communication and perfect steganography.} The square-root law of covert capacity
\citep{bash2012srl, ker2008srl, bloch2016covert} lives in the budget regime $\KLop\le\eps$: a fixed total
divergence forces an $O(\sqrt n)$ payload. The statistically distortion-free column of \cref{tab:ladder} is
the $\KLop=0$ \emph{keyed} regime (Cachin-style perfect steganography \citep{cachin2004stego}) where
the binding resource is not a divergence budget but the generation entropy $h$. \Cref{thm:entropyclose} is
the multi-user attribution rate for that regime; to our knowledge no prior work gives a tight $h$-rate for
$N$-user attribution. The closest, \citet{cohen2024many}, is cryptographic (security-parameter counting,
with adaptive robustness as the focus), complementary, not rate-tight.

\paragraph{Identification via channels.} \citet{steinberg2001identification} cast watermark identification
as identification via channels in the sense of \citet{ahlswede1989identification}: the blocklength is held
fixed and the number of \emph{verifiable} signatures grows doubly exponentially in it. This is the dual of
our question, which fixes the user population $N$ and asks for the sample cost $n(N)$. Their split between
verification (``is it signature $s$?'') and decoding (``which signature?'') has, to our knowledge, no
sample-complexity counterpart; we pose it as an open question in \cref{sec:discussion}.

\paragraph{Impossibility and robustness.} A separate thread shows strong watermarks can be removed under mild
assumptions \citep{zhang2023sand, francati2026codinglimits}, the latter giving a sharp coding-theoretic
ceiling on tamper robustness, and that reusing a detector across keys inflates false positives
\citep{furussell2025falsedetection}; on the attack side, \citet{abdalla2025mixtures} exhibit
statistical-to-computational gaps for removal. The false-detection result is a $\Delta$-dependent upper bound
on one parasitic detector, not a floor over all detectors on natural content; we discuss the corresponding
open question (a false-positive floor) in \cref{sec:discussion}. Recent design, detection, attack, and
localization work adjacent to the framework is surveyed further in \cref{sec:morerelated}.

\section{Discussion}
\label{sec:discussion}

\paragraph{What the dichotomy says for regulation.} Current statutes ask for marks ``in a machine-readable
format'' (EU AI Act, Art.~50), ``noticeable labels'' (China's Deep Synthesis provisions), or ``latent
disclosure'' (California SB~942). Read operationally, each presumes an extractable object: the embedding
side of \cref{def:embedbias}. \Cref{thm:dichotomy} makes the cost of that presumption precise: a biasing
watermark can satisfy a Level-1 (attribution) or Level-2 (extraction) mandate only at a sample cost
$\Omega(\log N/\Delta)$ that \emph{diverges} as the mark is made stealthier ($\Delta\to0$), whereas an
embedding watermark pays a fixed $\Theta(\log N/\log q)$ independent of fidelity. A regulator who wants
attribution is, whether or not the text says so, mandating the embedding ontology. We state this as an
informed structural consequence, not a legal interpretation.

\paragraph{Which rate is a sample complexity.} The closed column also settles a question the naive analysis
gets wrong: the R\'enyi-2 rate $r_2$ governs the size of the falsely-consistent list, but the Shannon rate
$h$ governs the \emph{decision}, and only the decision is what a forensic guarantee names
(\cref{rem:listsize}). Whether an edit-robust decoder, which can no longer check consistency token-for-token,
is forced back onto the larger $r_2$ rate is the open question we turn to next.

\paragraph{Localization (Level 3) and the witness gap.} \Cref{thm:crop} makes the redundancy--resolution
tradeoff quantitative: crop-robust localization at resolution $w$ forces $m\,w=\Omega(n)$ (precisely
$|R^{+}|\ge(n-w+1)/w$ for the dilated support and $|R|\ge|R^{+}|/(2r+1)$, with $|R|\le m$ in the
factorizing regime). This explains the empirical split (\cref{sec:taxonomy}) and exposes the honest limit of our
decoupling witness. The digit-stamp of \cref{thm:dichotomy}(b) has footprint $m=\Theta(\log N/\log q)=o(n)$, so
by \cref{thm:crop}(ii) it cannot be crop-robust at any fine resolution: a single splice destroys it. The
\emph{deployed} schemes that do localize (SEAL at $98.2\%$ AUC, EditGuard, AudioSeal) are, by \cref{thm:crop},
forced to be \emph{full-support}; they are biasing and localize only because they read their delocalized
mark at fine per-region resolution. So the clean $\Delta$-free decoupling and
operational edit-robust localization pull in opposite directions, and the question our witness leaves open
(\emph{is there a deployable (imperceptible, edit-robust) scheme whose attribution cost is $\Delta$-free?})
is, by \cref{thm:crop}, exactly the question of whether one can be $\Delta$-free on the payload while paying
the $m\,w=\Omega(n)$ tax only on the locating layer. We conjecture this is achievable with a layered
PRC-style construction but do not prove it. The same gap is now the live one on the distortion-free side:
\cref{thm:entropyclose} closes the attribution rate there, but its witness decoder is exact-alignment
(consistency is checked token-for-token against the key) and a single edit destroys it. Under a soft
(edit-robust) notion of consistency the true key is no longer deterministically consistent, the
tie-abstention device behind \cref{rem:listsize} loses its zero-false-positive guarantee, and a genuine
$r_2$-type cost may resurface. Whether edit-robust distortion-free attribution still costs $\log N/h$, or
genuinely pays the larger $\log N/r_2$, is the sharpest open problem the framework raises.

\paragraph{Verification versus search (open).} Our attribution task is a \emph{search}: which of $N$ users
emitted $X$? The verification variant (is it user $u$?) is a composite binary test, and the
identification-via-channels literature \citep{steinberg2001identification, ahlswede1989identification}, with a
covert (distortion-free) analogue in \citet{houkramer2021covertid}, suggests the two scale very differently:
at fixed blocklength the number of verifiable signatures is doubly exponential while the decodable ones are
merely exponential. The sample-complexity dual would make
verification $N$-free, an exponential separation from search's $\Theta(\log N/\Delta)$; the proof obstacle
is testing the composite null (the $N-1$ other keys) uniformly. The sample-complexity version of the
verification-vs-decoding split is open; we have not pursued it here.

\paragraph{A false-positive floor (open).} The seed of this work conjectured an information-theoretic
false-positive floor for biasing watermarks on natural content: that no detector can drive the
false-positive rate to zero at fixed power when $\pwm$ and $\pnat$ overlap. Against a computationally
unbounded distinguisher this is delicate, since undetectable constructions \citep{christ2024undetectable} make
the marginal gap vanish, so any floor must be stated in the key-known or information-theoretic-adversary
regime. \citet{furussell2025falsedetection} give a related upper bound (false positives inflate when a
detector is reused across keys), but a genuine floor over \emph{all} detectors on a natural-content null
remains open. We leave it as a conjecture rather than fold a weaker statement into the main results.

\paragraph{Tightness and generality of the achievability (open).} Three gaps remain between our converses and
the constructions that meet them. The $\mathcal B(\Delta)$ achievability of \cref{thm:dichotomy}(a) matches
its converse only up to the per-position capacity constant, and only for memoryless or regular strongly-mixing
sources; the exact $q$-ary tilting constant, and a general-ergodic achievability that removes the mixing
hypothesis the conditional tilting code needs, are both open. On the distortion-free side the entropy column
assumes i.i.d.\ per-user keys; correlated or coded assignments (Tardos-style,
\citealp{tardos2008optimal}) might trade candidate-list size against length, a different point on the same
rate curve. None of these changes a rate order; each sharpens a constant or widens a hypothesis.

\paragraph{Existence, identity, and the ledger.} Beneath the rates, the ladder separates two kinds of
question. Detection asks whether a mark \emph{exists}: a two-point hypothesis test, whose cost is set by a
fixed divergence and is indifferent to the size of the user population. Attribution asks \emph{who}: an
estimation among $N$ alternatives, which cannot conclude before $\log N$ nats of identity have arrived.
Existence is cheap; identity is priced; and the window of \cref{thm:separation} is what lies between, a text
that can raise suspicion but cannot yet support a name. The form of a forensic guarantee matters as much as
its rate: \cref{thm:entropyclose}(i) prices false accusation per innocent rather than on average (no innocent
is implicated except with probability $\delta/N$, exactly and non-asymptotically), which is the quantitative
shape of a presumption of innocence. And the account every level draws on is settled before any question is
asked: by data processing (\cref{thm:budget}), reading the carrier can only spend information that generation
deposited, never mint more. What can ever be detected, attributed, or extracted from a scheme's output is
fixed, token by token, by the profile $\nu$ it writes, and where it can be localized is fixed by the shape of
$\nu$ together with the edit coupling on the carrier; the quality models are covenants on the deposits
($\Delta$ at a time, or only what the source's entropy affords); attribution and extraction are withdrawals
against one balance (\cref{prop:region}). Design is the choice of a deposit schedule, and a forensic mandate
is, in the end, a constraint on $\nu$.

\paragraph{Limitations.} (i) Our bounds are information-theoretic: $\Phibud(n)=I(S;X)$ caps every
\emph{computationally unbounded} extractor, so the lower bounds are unconditional, but the distortion-free
budget (\cref{asm:free}) uses statistical undetectability, with a negligible slack under merely computational
undetectability. (ii) The Level-0 detection rate we import assumes the standard
independent/stationary-token model \citep{li2024statistical, huang2023optimal}; our co-located gap needs only
that the key-unknown detection error vanishes on the window, which in that model follows from a positive
per-token mixture divergence via the relative-entropy ergodic theorem (a large \emph{total} KL alone would not
suffice for a single joint sample). (iii) The within-scheme impossibility
(\cref{thm:separation}, ``detectable yet unattributable at one length'') is an asymptotic statement: it lives
on the window $\omega(1/\Delta)\le n<(1-\eta)\log N/\Delta$, which is wide as $N\to\infty$ but narrow at
moderate $N$, so our simulations (\cref{sec:numerics}) accordingly exhibit the robust form of the separation
(attribution threshold $\Theta(\log N)$ above the detection threshold) rather than the razor-thin point.
(iv) The embedding ``stamp'' of \cref{thm:dichotomy} is the simplest witness of decoupling, not a deployable
scheme; robust embedding watermarks (commitment- or code-based; SEAL, EditGuard empirically) realize the same
decoupling with imperceptibility and robustness, which we cite rather than reprove. (v) The distortion-free
achievability of \cref{thm:entropyclose,cor:extraction-h} is information-theoretic in its decoder as well: the
witness extractor searches the key set, so it runs in time $\Theta(N)=\Theta(2^\ell)$ in the payload; an
efficient $\Theta(\ell/h)$-sample extractor is a separate requirement, which the ``machine-readable mark''
reading of a regulatory mandate implicitly wants and which we leave open.

\paragraph{Summary.} A watermark is an object in its carrier or a property of its law, and the footprint of
\cref{def:embedbias} is where that informal distinction becomes measurable. Detection is the cheapest
question one can ask of either kind, and the only one whose cost is mechanism-blind. Every question above it
(who, what, where) is metered by a single distortion budget for schemes confined to the distributional
quality model, and bought at a $\Delta$-free price by schemes that can use a bounded footprint; the
localization map is the structural feature that admits the cheaper model, and it enables without deciding.
The three questions above detection are readings of one measure, two of its mass and one of its shape with
the edit coupling on the carrier, and their prices differ because the readings do.

\section{Conclusion}
\label{sec:conclusion}

A watermark decides, token by token as it generates, how much of its secret the world will ever recover; the
information profile $\nu(t)=I(S;X_t\mid X_{<t})$ is the record of that decision, and the forensic
questions above detection are readings of it. The prices follow. The mass $\norm{\nu}_1$ funds attribution
and extraction in one of two currencies, $\Delta$ per token against a fidelity budget or $h$ per
token against the entropy the source affords; the shape of $\nu$, with the edit coupling on the carrier,
governs localization under the uncertainty $m\,w=\Omega(n)$; detection is the mechanism-blind baseline
beneath them. The question that would have led our list of
open problems, a distortion-free achievability to match the entropy converse, is closed by
\cref{thm:entropyclose}: attribution costs $(1+o(1))(\log N+\log(1/\delta))/h$ at full stationary-ergodic
generality, with an exact per-innocent false-positive guarantee, and extraction follows at $\Theta(\ell/h)$
(\cref{cor:extraction-h}).

With this the ladder's currencies are identified at every rung: $\Delta$, $h$, and the geometry of the
footprint. What the framework leaves open --- and the sharpest of it is a matter of robustness, not of
information --- we set out in \cref{sec:discussion}.

Nothing in the profile is special to watermarking. $\nu(t)=I(S;X_t\mid X_{<t})$ prices what any secret
threaded through a sequential generator makes recoverable, and its mass-versus-shape reading should apply
wherever information is embedded position by position --- steganography, fingerprinting, covert
communication. We have used it to say one thing about one domain: the forensic power of a watermark is fixed
by the information it makes recoverable, not by the mechanics of how the mark is embedded.

\bibliographystyle{plainnat}   
\bibliography{refs}

\appendix
\newpage
\section{Notation}
\label{sec:notation}

\Cref{tab:notation} collects the symbols used throughout. The dependency structure of the results is mapped in
\cref{fig:resultmap} (\cref{sec:results}); the imported and textbook \emph{facts} each proof draws on are
catalogued in \cref{sec:imported} and cited inline at each use. All logarithms and information quantities are
in nats unless a base is shown; per the convention of \cref{sec:prelim}, the profile $\nu$ and the budget
$\Phibud$ are read conditional on the registry $\kappa$ (written $I(S;X)$ for brevity).

{\small
\begin{longtable}{@{}l @{\hspace{1.4em}} l @{\hspace{1.4em}} l@{}}
\caption{Notation used throughout the paper.}\label{tab:notation}\\
\toprule
Symbol & Meaning & Defined in \\
\midrule
\endfirsthead
\toprule Symbol & Meaning & Defined in \\ \midrule
\endhead
\multicolumn{3}{@{}l}{\textit{\textcolor{detectblue}{Model and keys}}}\\[1pt]
$n$ & carrier length (number of tokens) & \cref{sec:prelim} \\
$\Sigma,\ q=|\Sigma|$ & token alphabet and its size & \cref{sec:prelim} \\
$X\in\Sigma^n$ & the carrier (the generated text/sample) & \cref{sec:prelim} \\
$\pnat$ & natural (unmarked) law on $\Sigma^n$ & \cref{sec:prelim} \\
$\pwm^{(k)}$ & marked law under key $k$ & \cref{sec:prelim} \\
$\pbar=\E_K\,\pwm^{(K)}$ & key-averaged (mixture) law & \cref{sec:prelim} \\
$K\in\Keys$ & secret key; key space & \cref{sec:prelim} \\
$\kappa=\{k_1,\dots,k_N\}$ & key registry (enrolled keys; the informed decoder's side information) & \cref{sec:prelim} \\
$\hat p_\kappa=\tfrac1N\sum_v\pwm^{(k_v)}$ & realized key mixture given the registry $\kappa$ & \cref{app:entropyclose} \\
$S$ & secret: user $U\in[N]$, payload $W\in\{0,1\}^\ell$, or both & \cref{sec:prelim} \\
$N,\ \ell$ & number of users; payload length (bits) & \cref{def:tasks} \\
\addlinespace
\multicolumn{3}{@{}l}{\textit{\textcolor{detectblue}{The information profile}}}\\[1pt]
$\nu(t)=I(S;X_t\mid X_{<t})$ & information profile (secret information added by token $t$) & \cref{prop:profile} \\
$\|\nu\|_1=I(S;X)=\Phibud(n)$ & its mass $=$ the forensic-recovery budget & \cref{def:budget} \\
$\supp\nu$ & its shape (support) & \cref{prop:profile} \\
$T$ & per-secret footprint (the marked positions) & \cref{def:embedbias} \\
$\bar d(t)$ & per-secret distortion profile & \cref{sec:prelim} \\
\addlinespace
\multicolumn{3}{@{}l}{\textit{\textcolor{detectblue}{Quality models}}}\\[1pt]
$\Delta$ & per-token distortion (KL) budget & \cref{asm:bounded} \\
$\mathcal B(\Delta)$ & bounded-distortion / biasing class; $L^\infty$ cap $\nu(t)\le\Delta$ & \cref{asm:bounded} \\
$\mathcal E(m),\ m$ & footprint / embedding class; $L^0$ cap $|\supp\nu|\le m$ & \cref{def:embedbias} \\
$\Lmap_k$ & localization map & \cref{def:embedbias} \\
$R,\ R^{+}$ & carrier footprint and its $r$-dilation (Level 3) & \cref{asm:local} \\
$w$ & crop resolution (window length) & \cref{def:tasks} \\
\addlinespace
\multicolumn{3}{@{}l}{\textit{\textcolor{detectblue}{Forensic tasks and rates}}}\\[1pt]
$\ntask{\det},\ntask{\att},\ntask{\ext}$ & sample complexity of detection / attribution / extraction & \cref{def:tasks} \\
$P_e$ & attribution error probability & \cref{def:tasks} \\
$\delta$ & per-innocent false-positive level (attribution) & \cref{thm:entropyclose} \\
$\lambda$ & detection soundness parameter ($\mathrm{FP}\le 2^{-\lambda}$) & \cref{sec:prelim} \\
\addlinespace
\multicolumn{3}{@{}l}{\textit{\textcolor{detectblue}{Information-theoretic quantities}}}\\[1pt]
$h,\ h_n$ & entropy rate; finite-$n$ entropy $H(\pnat^{(n)})/n$ & \cref{asm:free} \\
$r_2=-\log\sum_x\pnat(x)^2$ & R\'enyi-2 rate & \cref{rem:listsize} \\
$\imath(X)=-\log\pnat(X)$ & realized surprisal & \cref{def:witness} \\
$\tau=\log(N/\delta)$ & surprisal threshold & \cref{def:witness} \\
$I(\cdot;\cdot),\ H(\cdot),\ \KL{\cdot}{\cdot}$ & mutual information, entropy, relative entropy (nats) & \cref{sec:itbackground} \\
$\JSD,\ \Unif$ & Jensen--Shannon divergence; uniform distribution & \cref{sec:itbackground} \\
\bottomrule
\end{longtable}
}

\section{Information-theoretic background}
\label{sec:itbackground}

This section is a self-contained primer on the information-theoretic notions the paper uses; readers fluent in
information theory can skip it. All quantities are in nats (natural logarithms), and $\pnat,\pwm$ are laws on
the length-$n$ carrier $\Sigma^n$.

\paragraph{Surprisal and entropy.} The \emph{surprisal} of an outcome $x$ under a law $p$ is
$\imath_p(x)=-\log p(x)$: a rare outcome is surprising. Its average is the \emph{entropy}
$H(p)=\E_{x\sim p}\,\imath_p(x)=-\sum_x p(x)\log p(x)$, the mean uncertainty in a draw from $p$. For a sequence
the per-symbol entropy $h_n=H(\pnat^{(n)})/n$ measures how much fresh randomness each token carries; for a
stationary source it decreases to the \emph{entropy rate} $h=\lim_n h_n$.

\paragraph{Mutual information and the chain rule.} The \emph{mutual information}
$I(S;X)=H(S)-H(S\mid X)$ is how much observing $X$ lowers the uncertainty about $S$. It is the right currency
for a forensic decoder: by data processing (below), no function of $X$ recovers more than $I(S;X)$ nats about
$S$. The \emph{chain rule} resolves it position by position, $I(S;X)=\sum_t I(S;X_t\mid X_{<t})$, and the
summand $\nu(t)=I(S;X_t\mid X_{<t})$ is the information profile that organizes the paper.

\paragraph{Relative entropy, the golden formula, and data processing.} The \emph{relative entropy} (or
Kullback--Leibler divergence) $\KL{p}{q}=\sum_x p(x)\log\frac{p(x)}{q(x)}\ge0$ measures how distinguishable $p$
is from $q$; it is the exponent at which a likelihood-ratio test separates them. It is our fidelity currency:
a mark that moves a token law by KL $\Delta$ spends $\Delta$ of distortion there. The \emph{golden formula}
(\cref{fact:golden}) writes a mutual information as a divergence against \emph{any} reference $Q$,
$I(K;X)=\min_Q\E_K\KL{P_{X\mid K}}{Q}$, so $I(K;X)\le\E_K\KL{P_{X\mid K}}{Q}$ for every $Q$; taking $Q=\pnat$
turns the bound into the distortion budget. The \emph{data-processing inequality}, that a Markov chain
$S\to K\to X$ forces $I(S;X)\le I(K;X)$, says post-processing cannot manufacture information. Together they are
the forensic-recovery budget (\cref{thm:budget}).

\paragraph{Fano's inequality.} To convert an information bound into an error bound we use \emph{Fano's
inequality} (\cref{fact:fano}): naming one of $N$ equiprobable users with error $P_e$ needs
$I(U;X)\ge(1-P_e)\log N-\log2$, i.e.\ $P_e\ge 1-\frac{I(U;X)+\log2}{\log N}$. Recovering $\log N$ nats of
identity therefore demands a budget of $\approx\log N$ nats; this is the engine of every lower bound here.

\paragraph{Typicality and Stein's lemma.} The \emph{asymptotic equipartition property} says the surprisal of
a long sample concentrates: $\imath_{\pnat}(X)/n\to h$ almost surely (the Shannon--McMillan--Breiman theorem),
so a typical sample has probability $\approx e^{-nh}$. This is exactly why the surprisal threshold of
\cref{thm:entropyclose} works: the true (typical) text clears any sub-$h$ bar $\tau$ once $n\gtrsim\tau/h$. The
dual statement for testing is \emph{Stein's lemma} and its ergodic extension (\cref{fact:stein}): the optimal
test of $\pbar$ against $\pnat$ has error decaying at the KL-rate, so a positive per-symbol divergence makes
the two laws asymptotically separable.

\paragraph{R\'enyi-2 entropy and collisions.} The \emph{R\'enyi-2 entropy rate} $r_2=-\log\sum_x\pnat(x)^2$
governs \emph{collisions}: two independent draws from $\pnat$ coincide with probability
$\sum_x\pnat(x)^2=e^{-r_2}$ per symbol. R\'enyi entropy is non-increasing in its order, so $r_2\le h$, with
equality only for a source uniform on its support; the gap $h/r_2$ is precisely the factor by which the naive
consistency analysis overcharges (\cref{rem:listsize}).

\paragraph{Jensen--Shannon divergence.} The \emph{Jensen--Shannon divergence}
$\JSD(p,q)=\tfrac12\KL{p}{m}+\tfrac12\KL{q}{m}$, with $m=\tfrac12(p+q)$, is a symmetric, bounded divergence; it
equals the mutual information between a fair coin choosing $p$ or $q$ and the resulting sample. It is the
\emph{yield} of the tilting code (\cref{thm:dichotomy}): fidelity is spent in KL, but the bits a decoder reads
back per token are measured by the JSD between the two tilted kernels.

\section{Deferred proofs}
\label{sec:appendix}

The starred macros reprint each statement with its original number, so the appendix copy cannot drift from the
body. The three textbook tools the proofs use as machinery are in \cref{sec:tools}; the prior results we
\emph{import}, together with the non-standard terminology they involve, are restated next.

\subsection{Prior results we import, and their terminology}
\label{sec:imported}
The proofs lean on three results from prior work that are not ours, and on a few information-theoretic terms
that may be unfamiliar; we explain both informally here before invoking them.

\paragraph{Fingerprinting capacity and collusion.} A \emph{fingerprinting code} gives each of $N$ users a
distinct codeword and asks, from one observed copy, which user it came from. A \emph{collusion} attack lets
several users pool their copies and average them into a forgery that no single codeword matches; the
\emph{fingerprinting capacity} is the largest number of identifiable bits per output symbol that survives such
attacks, under a fidelity (distortion) budget on how far the marked copy may sit from the original. The result
we use \citep{moulin2008universal, somekhbaruch2005capacity} is that this capacity tends to $0$ as the budget
tightens, which is exactly the ``couples to the budget'' phenomenon. We invoke only its degenerate,
\emph{collusion-free} ($c=1$) case, where the worst-case collusion channel disappears and the bound is the
plain Fano inequality (\cref{fact:fano}) applied to $I(U;X)$ under the budget.

\paragraph{Channels with memory, information density, and mixing.} A channel ``with memory'' is one whose
output at position $i$ may depend on the entire past, not just the current input (as an autoregressive
language model does). The \emph{information density} $i(x;y)=\log\frac{p(x,y)}{p(x)\,p(y)}$ is the
log-likelihood ratio whose expectation is the mutual information; a channel is \emph{information-stable} when
the normalised density $\frac1n\sum_i i(X_i;Y_i\mid\text{past})$ concentrates around a constant (a law of large
numbers for information), and a sufficient condition is that the source is \emph{strongly mixing}, i.e.\ its
correlations decay with distance. For information-stable channels the Shannon coding theorem still holds:
roughly $nC$ message bits decode reliably at blocklength $n$, where $C$ is the capacity \citep{verdu1994general}.

\paragraph{Exponential tilt and Jensen--Shannon divergence.} An \emph{exponential tilt} of a distribution $p$
is a one-parameter reweighting $p_+(x)\propto p(x)\,e^{\theta\,\phi(x)}$ that shifts $p$ in a chosen direction
$\phi$ while letting us dial its KL distance to $p$ continuously from $0$ up to the target $\Delta$ we need. The \emph{Jensen--Shannon divergence} $\mathrm{JSD}(p,q)=\tfrac12\KL{p}{m}+\tfrac12\KL{q}{m}$
with $m=\tfrac12(p+q)$ is a symmetric, bounded divergence; it equals the mutual information between a fair coin
choosing $p$ or $q$ and the resulting sample, and so is the per-token information rate of our tilting code.
(For \cref{asm:free}, ``distortion-free'' or ``undetectable'' means the key-averaged law equals the natural
one, $\pbar=\pnat$, so no key-blind test beats chance; the recoverable signal is then limited by the
generation entropy rather than by a distortion budget.)

\begin{fact}[Level-0 detection rate; \citealp{cai2024better,li2024statistical,huang2023optimal}]\label{fact:det}
Under the standard independent/stationary-token detection model with per-token budget $\Delta$, the optimal
detector achieves target error $\delta$ at $\ntask{\det}=\Theta(\log(1/\delta)/\Delta)$, and the single-step
detectability floor is $1-\sqrt{1-e^{-D}}$ at per-step divergence $D$ \citep[Prop.~3.1]{cai2024better}. We use
only the monotone consequence that detection error $\to 0$ as $n\Delta\to\infty$.
\end{fact}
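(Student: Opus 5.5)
Because this rate is imported, I would check it with three standard two-point testing bounds rather than rebuild it. Work in the independent/stationary-token model: under $H_1$ each token has law $P$ with $\KL{P}{Q}=D\le\Delta$, and under $H_0$ it has law $Q=\pnat$. The tests are on $P^{\otimes n}$ against $Q^{\otimes n}$.

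\emph{Single-step floor.} Start from the Bretagnolle--Huber bound $\mathrm{TV}(P,Q)\le\sqrt{1-e^{-\KL{P}{Q}}}$. For any test, the Type-I plus Type-II error satisfies $\alpha+\beta\ge1-\mathrm{TV}(P,Q)$, and combining the two gives the floor $1-\sqrt{1-e^{-D}}$ at per-step divergence $D$. This matches \citep[Prop.~3.1]{cai2024better}.

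\emph{Converse $n_{\det}=\Omega(\log(1/\delta)/\Delta)$.} Apply the same inequality in its product form, $\alpha+\beta\ge\tfrac12\exp(-\KL{P^{\otimes n}}{Q^{\otimes n}})$. Tensorization of KL gives $\KL{P^{\otimes n}}{Q^{\otimes n}}=nD\le n\Delta$, so $\alpha+\beta\ge\tfrac12e^{-n\Delta}$. Error at most $\delta$ therefore forces $n\ge\log(1/(4\delta))/\Delta$. This direction uses only the avg-form budget.

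\emph{Achievability $n_{\det}=O(\log(1/\delta)/\Delta)$.} Use the likelihood-ratio test at the Chernoff-optimal exponent. Bounding its error by the Bhattacharyya affinity gives $\alpha+\beta\le\exp(-n\,d_B(P,Q))$, where $d_B=-\log\sum_x\sqrt{P(x)Q(x)}\ge H^2(P,Q)$. This step is the main obstacle, because it needs $H^2(P,Q)\ge c\,\KL{P}{Q}$ and that fails in general: a mark with KL equal to $\Delta$ can concentrate on a tiny-probability event. The plan is to restrict to the regime the imported works assume. There the likelihood ratio $dP/dQ$ is bounded in $[e^{-M},e^{M}]$, as it is for exponential tilting and green-list schemes, and then $\KL{P}{Q}\le C_M H^2(P,Q)$. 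We also take the scheme to spend a constant fraction of its budget, $D\ge c'\Delta$. Together these give $d_B\ge c\Delta$ and hence $n=O(\log(1/\delta)/\Delta)$. In the stationary non-i.i.d.\ case I would substitute \cref{fact:stein}(ii) for the i.i.d.\ computation.

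Finally, for the monotone consequence that the paper actually uses, no constants are needed. As $n\Delta\to\infty$, either the Chernoff bound above or \cref{fact:stein} drives the error to $0$.
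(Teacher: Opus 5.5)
The paper gives no proof of this statement. It is imported from \citet{cai2024better,li2024statistical,huang2023optimal}, and nothing downstream uses more than its qualitative tail. Even the co-located window of \cref{thm:separation} rests on an explicit mixture-detectability hypothesis and \cref{fact:stein}(ii), not on this rate.

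Your reconstruction is therefore a different route, and it is sound:
\begin{itemize}
\item Bretagnolle--Huber gives the floor $1-\sqrt{1-e^{-D}}$ on $\alpha+\beta$.
\item After tensorization it also gives the unconditional converse $n\ge\log(1/(4\delta))/\Delta$, in your symmetric reading where both errors are at most $\delta$.
\item The Bhattacharyya bound gives achievability.
\end{itemize}
What your route buys over the citation is that it exposes what ``standard model'' hides. In the symmetric reading the upper half of the $\Theta$ genuinely fails for a bare per-token budget, for two reasons. First, a scheme may spend $D\ll\Delta$. Second, a divergence concentrated on a rare event has Chernoff information smaller than $\KL{P}{Q}$ by a factor of order $L$; take $Q=\mathrm{Ber}(\eta)$, $P=\mathrm{Ber}(\eta e^{L})$ with $\eta e^{L}\ll1$. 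Your bounded-likelihood-ratio and $D\ge c'\Delta$ hypotheses are therefore a necessary repair, not an artifact of your method. What the citation buys is sharp constants, which an affinity bound cannot recover.

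There are three smaller points.

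\emph{The Hellinger obstacle belongs to the symmetric reading.} Suppose instead that $\delta$ is the false-positive level at fixed power $1-\beta$, which is how watermark detectors are actually calibrated. Then the converse is data processing, $(1-\beta)\log(1/\delta)\le n\KL{P}{Q}+\log2$. For achievability, apply \cref{fact:stein}(i) with your marked law $P$ as the hypothesis whose error is held fixed; this gives the forward divergence itself as the exponent. For fixed laws no Hellinger--KL comparison is needed there. Uniformity as $\Delta\to0$ still requires concentration of the log-likelihood ratio, which bounded ratios supply.

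\emph{Your closing sentence overreaches if read as regularity-free.} Let $\Delta$ shrink with $n$ and take the same rare-event pair with $L\to\infty$. Then $n\Delta\to\infty$ while $nH^2(P,Q)\to0$, so $\mathrm{TV}(P^{\otimes n},Q^{\otimes n})\to0$ and every test fails. The monotone consequence is unconditional only for a fixed scheme ($P\ne Q$, $n\to\infty$), which is the case the paper invokes.

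\emph{The non-i.i.d.\ substitution gives less than you need.} In the stationary non-i.i.d.\ case, \cref{fact:stein}(ii) yields only the qualitative limit, not the $\Theta(\log(1/\delta)/\Delta)$ rate.
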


\begin{fact}[Distortion-constrained fingerprinting converse; \citealp{moulin2008universal,somekhbaruch2005capacity}]\label{fact:moulin}
The capacity of a fingerprinting/watermarking channel under a fidelity constraint tends to $0$ as the
constraint tightens; equivalently, identifying one of $N$ users from a length-$n$ output costs
$\Omega(\log N\,/\,\text{(per-symbol fidelity budget)})$. The single-user, collusion-free case is
\cref{fact:fano} applied to $I(U;X)$ under the budget, which is the only case we invoke.
\end{fact}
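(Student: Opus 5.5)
The plan is to reduce the statement to its single-user, collusion-free core and derive it from Fano rather than from the collusion-channel machinery of \citet{moulin2008universal,somekhbaruch2005capacity}. Take $U\sim\Unif[N]$ with keys $k_1,\dots,k_N$ and observed output $X\sim\pwm^{(k_U)}$ of length $n$. First, the Markov chain $U\to K\to X$ together with \cref{fact:golden}, using the reference $Q=\pnat$, gives
\[
I(U;X)\le\E_U\,\KL{\pwm^{(k_U)}}{\pnat}.
\]
By the chain rule, the right-hand side equals $\sum_t\bar d(t)$. Under a per-symbol fidelity budget $\Delta$ in the avg-form of \cref{asm:bounded}, this is at most $n\Delta$; this is exactly \cref{thm:budget}(a).

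Second, apply \cref{fact:fano} to any estimator $\hat U=g(X,\kappa)$, working conditionally on the registry as in the paper's convention. This gives
\[
P_e\ \ge\ 1-\frac{n\Delta+\log 2}{\log N}.
\]
Hence any target error $P_e\le\eps<1$ forces $n\ge\bigl((1-\eps)\log N-\log2\bigr)/\Delta$, which is $\Omega(\log N/\Delta)$ once $N\ge N_0(\eps)$. For the finitely many small $N$ the bound is trivial because the constant absorbs them. This is the claimed ``capacity tends to $0$ as the constraint tightens'' statement in sample-complexity form: the per-symbol identifiable information is at most $\Delta\to0$.

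The main obstacle is matching the imported statement's notion of a ``fidelity budget'' to KL. Moulin and Somekh-Baruch--Merhav state their constraint as an expected distortion, such as $L_2$, rather than KL. To stay within what we invoke, I would state the fact only for the per-symbol KL budget of \cref{asm:bounded}. For the general-distortion reading I would simply cite the collusion-free ($c=1$) specialization of their capacity expression, which vanishes with the budget. I would not reprove it, because only the KL instance is used later, in \cref{thm:separation} and \cref{thm:dichotomy}(a). I would also remark that the converse needs only the avg-form cap, since $\nu(t)\le\bar d(t)$ by \cref{prop:profile}(ii), so no sup-form hypothesis enters.
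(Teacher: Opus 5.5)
Your proposal is correct and is essentially the paper's own argument. The paper likewise invokes only the collusion-free case: it bounds $I(U;X)\le n\Delta$ by data processing and the golden formula with reference $\pnat$ (\cref{thm:budget}(a), using only the avg-form cap), then applies Fano to get $P_e\ge1-(n\Delta+\log2)/\log N$ in the proof of \cref{thm:separation}, and it cites rather than reproves the $L_2$-distortion capacity statement of \citet{moulin2008universal,somekhbaruch2005capacity}. One small slip: the finitely many $N<N_0(\eps)$ cannot be absorbed into the constant uniformly as $\Delta\to0$, since Fano is vacuous at $N=2$ and those cases would need a two-point bound such as $1-P_e\le 1/N+\sqrt{n\Delta/2}$ via Pinsker; because the claim is asymptotic in $N$, your restriction to $N\ge N_0(\eps)$ already suffices.
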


\begin{fact}[General channel capacity; \citealp{verdu1994general}]\label{fact:verduhan}
For an information-stable channel with memory, the capacity equals $\lim_n \tfrac1n\sup_{P_{X^n}}I(X^n;Y^n)$,
and reliable decoding of $M$ messages is possible at blocklength $n$ once $\log M\le(1-o(1))nC$; the
information density $\tfrac1n\sum_i i(X_i;Y_i\mid \text{past})$ concentrates under stationarity and mixing.
\end{fact}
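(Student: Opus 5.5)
Since this is an imported result, the plan is to reconstruct it from the information-spectrum method of \citet{verdu1994general} rather than from any earlier statement in the paper. The argument has three layers. First comes the general formula $C=\sup_{\mathbf X}\underline{I}(\mathbf X;\mathbf Y)$, where $\underline{I}$ is the liminf in probability of the normalized information density $\tfrac1n i(X^n;Y^n)$. Second, information stability collapses $\underline{I}$ to the mutual-information limit $\lim_n\tfrac1n\sup_{P_{X^n}}I(X^n;Y^n)$. Third, stationarity and mixing supply that stability.

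\emph{Achievability.} I would use Feinstein's lemma, or equivalently a random-coding threshold decoder. Draw $M$ codewords i.i.d.\ from $P_{X^n}$ and decode to the unique codeword whose information density with the output exceeds $\log M+\gamma n$. The error probability is then at most
\[
\Pr\bigl[i(X^n;Y^n)\le\log M+\gamma n\bigr]+e^{-\gamma n}.
\]
If $\tfrac1n i(X^n;Y^n)$ concentrates at $\tfrac1n I(X^n;Y^n)\to C$, any $\log M\le(1-o(1))nC$ drives the first term to $0$, which gives the achievability half of the stated fact. The chain rule writes $i(X^n;Y^n)=\sum_i i(X_i;Y_i\mid\text{past})$, which is exactly the sum appearing in the statement.

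\emph{Converse.} I would use the Verdú--Han lemma: every code with $M$ messages satisfies
\[
P_e\ge\Pr\bigl[i(X^n;Y^n)\le\log M-\gamma n\bigr]-e^{-\gamma n}.
\]
Under stability this forces $P_e\to1$ once $\log M\ge(1+\epsilon)nC$, so $C$ is the capacity. A weaker but sufficient alternative is to apply Fano's inequality (\cref{fact:fano}) to $I(X^n;Y^n)\le\sup_{P_{X^n}}I$, which already yields the weak converse, and that is all the paper uses.

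\emph{Concentration, the main obstacle.} The step I expect to be hardest is showing that $\tfrac1n\sum_i i(X_i;Y_i\mid\text{past})$ concentrates when the input-output process is stationary and strongly mixing. I would try two routes, in order:
\begin{enumerate}
\item Treat the increments as a stationary ergodic sequence of functions of the joint process and invoke the relative-entropy Shannon--McMillan--Breiman theorem of \citet{barron1985density} (\cref{fact:stein}(ii)). This needs the same $L^1$-domination of the log-density, which holds, for instance, when all conditionals are bounded below by $p_0>0$ as in the regularity hypothesis of \cref{thm:dichotomy}.
\item Failing that, bound the variance of the sum directly. Under uniform boundedness of the increments (again from $p_0>0$) and summable mixing coefficients, the variance is $O(n)$, and Chebyshev gives concentration at scale $o(n)$.
\end{enumerate}
The delicate point in route 2 is that the conditional increments depend on the entire past. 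I would first truncate to a finite memory $L$, using mixing to control the truncation error uniformly in $n$, and then let $L\to\infty$.
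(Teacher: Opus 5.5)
Your argument is the information-spectrum method of the source the paper cites. The paper gives no proof of its own, only the citation. With your Fano fallback, the plan does establish the statement, but the converse you lead with is false.

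In the Verd\'u--Han lemma, the density must be evaluated with $X^n$ uniform on the codebook of the code in hand, which is an arbitrary input law. Information stability only asserts that $i(X^n;Y^n)/C_n\to1$ in probability, with $C_n=\sup_{P_{X^n}}I(X^n;Y^n)$, for \emph{some} input sequence. By Verd\'u--Han, the strong converse holds iff $\sup_{\mathbf X}\underline I(\mathbf X;\mathbf Y)=\sup_{\mathbf X}\overline I(\mathbf X;\mathbf Y)$, and stability does not give this.

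Concretely, take a BSC$(p)$ with $\Hb(p)<\tfrac12\log2$. Add two inputs that are noiseless or pure noise according to one fair coin drawn per block. This channel is information stable with $C=\log2-\Hb(p)$, which is attained, with a concentrating density, by inputs that avoid the new symbols. Yet a code built from the new symbols at any rate below $\log2$ errs with probability at most $\tfrac12$, so $P_e\not\to1$ above $C$. Drop the claim $P_e\to1$ and keep the Fano weak converse; it is all the capacity formula, and the paper, needs.

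For the concentration clause, route~1 misapplies Barron. His relative-entropy SMB theorem, the source of \cref{fact:stein}(ii), requires a finite-order Markov reference measure. The reference here is $P_{X^n}\times P_{Y^n}$, which is generally not Markov. You need neither Barron, nor $p_0>0$, nor mixing rates. With finite alphabets, write $\tfrac1n i(X^n;Y^n)=\tfrac1n\log P(X^n,Y^n)-\tfrac1n\log P(X^n)-\tfrac1n\log P(Y^n)$. Then apply the ordinary Shannon--McMillan--Breiman theorem to $(\mathbf X,\mathbf Y)$, $\mathbf X$ and $\mathbf Y$ separately. This gives a.s.\ convergence to $H(\mathbf X)+H(\mathbf Y)-H(\mathbf X,\mathbf Y)$ under ergodicity of the joint process alone. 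It also handles the fact that the increments, which condition on pasts of growing length, are not a stationary sequence.

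If you keep route~2, note that the truncation error is controlled by martingale convergence plus stationarity, not by mixing. Mixing only decorrelates the truncated increments.

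Two smaller corrections:
\begin{itemize}
\item The identity $i(X^n;Y^n)=\sum_i i(X_i;Y_i\mid\text{past})$ holds for channels without feedback whose kernel depends on the current input and the output past, as in the paper's tilting code. For general channels with memory, the chain rule gives $\sum_i i(X^i;Y_i\mid Y^{i-1})$ instead.
\item Your Feinstein bound yields $\log M\le(1-\epsilon)nC$ for each fixed $\epsilon$, hence for \emph{some} $\epsilon_n\to0$, not for \emph{any} $o(1)$ slack. At $\log M=nC-a\sqrt n$ the error generally tends to a positive constant.
\end{itemize}
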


\subsection{The forensic-recovery budget}

\begin{statementbox}
\ThmBudget*
\end{statementbox}
\begin{proof}
\emph{Idea.} Everything the carrier reveals about the secret passes through the key, and the golden formula
prices that information against any reference law; choosing the natural law turns the price into the
distortion budget, while choosing the source itself turns it into the entropy ceiling.

Since $X$ depends on the secret $S$ only through the key $K=K(S)$, the chain $S\to K\to X$ is Markov. All
quantities are read conditionally on the registry $\kappa$ (\cref{sec:prelim}); given $\kappa$ the secret
determines its enrolled key, so $\bar d$ below is the realized per-key distortion and the sums over $S$ range
over the $N$ enrolled keys (the realized key-mixture is $\hat p_\kappa$; $\pbar=\E_\kappa\hat p_\kappa$ stays
the unconditional marginal). The data-processing inequality (\cref{fact:golden})
gives $\Phibud(n)=I(S;X)\le I(K;X)$ (mutual informations conditional on $\kappa$ throughout). For any distribution $Q$
on $\Sigma^n$,
\[
  \E_K\,\KL{P_{X\mid K}}{Q}
  =\E_K\!\sum_x P_{X\mid K}(x)\log P_{X\mid K}(x)-\sum_x \hat p_\kappa(x)\log Q(x),
\]
since $\E_K P_{X\mid K}(x)=\hat p_\kappa(x)$. The second term is the cross-entropy $H(\hat p_\kappa,Q)$,
minimized over $Q$ at $Q=\hat p_\kappa$ (Gibbs' inequality), and the minimum value makes the right-hand side
equal to $I(K;X)$; this is the
golden formula of \cref{fact:golden}, and in particular, for every fixed $Q$,
\begin{equation}\label{eq:golden}
  I(K;X)\ \le\ \E_K\,\KL{P_{X\mid K}}{Q}.
\end{equation}

\emph{(a)} Put $Q=\pnat$ in \eqref{eq:golden}: under \cref{asm:bounded}, pushing the key average through to
the secret and applying the chain rule for relative entropy,
\[
  I(K;X)\ \le\ \E_K\KL{\pwm^{(K)}}{\pnat}\ =\ \E_S\KL{\pwm(\cdot\mid S)}{\pnat}\ =\ \sum_{t=1}^n\bar d(t)\ \le\ n\Delta,
\]
which uses only the avg-form cap $\max_t\bar d(t)\le\Delta$ (indeed only $\sum_t\bar d(t)\le n\Delta$); with
the data-processing step $\Phibud(n)\le n\Delta$.

\emph{(b)} Working conditionally on $\kappa$, mutual information never exceeds a conditional entropy, and
conditioning on $\kappa$ only reduces entropy; under \cref{asm:free} $\pbar=\pnat$, so
\[
  \Phibud(n)\ =\ I(S;X\mid\kappa)\ \le\ I(K;X\mid\kappa)\ \le\ H(X\mid\kappa)\ \le\ H(X)\ =\ H(\pbar)\ =\ H(\pnat)\ =\ nh_n,
\]
the step $H(X\mid\kappa)\le H(X)$ being the Jensen (concavity-of-entropy) bound
$\E_\kappa H(\hat p_\kappa)\le H(\E_\kappa\hat p_\kappa)=H(\pnat)$ on the realized key-mixture
$\hat p_\kappa:=\tfrac1N\sum_v\pwm^{(k_v)}$ (\cref{app:entropyclose}), and
$h_n:=H(\pnat^{(n)})/n$ is the finite-$n$ entropy
($h_n\downarrow h$ for stationary sources, so asymptotic statements may use the rate $h$).

\emph{(c)} Under $\mathcal E(m)$ the per-secret kernels equal the natural ones off $T$, so
$X_t\perp S\mid X_{<t}$ for every $t\notin T$ and $\nu(t)=I(S;X_t\mid X_{<t})=0$ there
(\cref{prop:profile}(iii); conditioning on the \emph{full} past screens autoregressive propagation). The
chain rule then gives $\Phibud(n)=\sum_t\nu(t)=\nu(T)$, and each remaining term is at most
$H(X_t\mid X_{<t})\le\log q$, whence $\Phibud(n)\le|T|\log q\le m\log q$.
\end{proof}

\noindent\emph{Scope of (b).} The entropy bound $\Phibud\le H(X)$ needs no undetectability assumption, but the
evaluation $H(X)=nh_n$ uses \emph{statistical} undetectability $\pbar=\pnat$. Computational undetectability gives no two-sided
entropy bound: for $\pnat=\Unif(\{0,1\}^n)$, a pseudorandom generator with seed length $s=\mathrm{polylog}(n)$
has output $\approx_c\pnat$ yet entropy $\le s\ll nh$. Computationally undetectable schemes (Christ--Gunn,
PRC) are therefore governed by seed entropy, not $nh_n$.

\subsection{Attribution converse and the forensic gap}

\begin{statementbox}
\ThmSeparation*
\end{statementbox}
\begin{proof}
\emph{Converse.} Apply \cref{thm:budget}(a) with $S=U\sim\Unif[N]$: each $\KL{\pwm^{(k_u)}}{\pnat}\le n\Delta$,
so $I(U;X)\le n\Delta$. Fano's inequality (\cref{fact:fano}), together with $H(U\mid X)=H(U)-I(U;X)=\log
N-I(U;X)\ge\log N-n\Delta$, gives
\[
  \log N-n\Delta\ \le\ H(U\mid X)\ \le\ \log 2+P_e\log N,
\]
which rearranges to the converse
\[
  P_e\ \ge\ 1-\frac{n\Delta+\log 2}{\log N}.
\]
The bound is non-trivial once $n\Delta<\log N-\log 2$, giving $\ntask{\att}=\Omega(\log N/\Delta)$. This is the
single-user case of the distortion-constrained fingerprinting converse of
\citet{moulin2008universal,somekhbaruch2005capacity} (\cref{fact:moulin}), specialized to the per-token KL
channel.

\emph{Forensic gap.} Call the scheme \emph{mixture-detectable} if the key-unknown likelihood-ratio test of
$\pbar$ against $\pnat$ has error $\to 0$ as $n\to\infty$, and write $\ntask{\det}^{\mathrm{mix}}$ for the
length at which it does. In the stationary-token model this is implied, when $\pbar$ is ergodic, by a positive per-token mixture
divergence $\tfrac1n\KL{\pbar}{\pnat}\ge c>0$: by the relative-entropy (Shannon--McMillan--Breiman) theorem
(\cref{fact:stein}(ii)) the normalised log-likelihood ratio $\tfrac1n\log(\pbar/\pnat)$ concentrates at a
positive rate under $\pbar$ and a negative rate under $\pnat$, so the likelihood-ratio test separates the
hypotheses with vanishing error. (For a non-ergodic key-mixture $\pbar$ one needs the divergence rate positive
on every ergodic component; a component that coincides with $\pnat$ would leave a constant miss probability.) (We do
\emph{not} use the i.i.d.\ Chernoff--Stein lemma applied to a single length-$n$ sample, for which a diverging KL
alone would not force the error to $0$.) On the window $\ntask{\det}^{\mathrm{mix}}\le n<(1-\eta)\log N/\Delta$,
detection error $\to 0$ while $n\Delta<(1-\eta)\log N$, so the converse gives $P_e\ge
1-\frac{(1-\eta)\log N+\log 2}{\log N}=\eta-o(1)$. The window is non-empty when $\log
N=\omega(\Delta\,\ntask{\det}^{\mathrm{mix}})$. At $n^\star=\lceil\log\log N/\Delta\rceil$ (assumed
$\ge\ntask{\det}^{\mathrm{mix}}$, so detection still succeeds), $I(U;X)\le n^\star\Delta\le\log\log
N+\Delta=\log\log N+O(1)$ ($\Delta$ a fixed per-scheme constant), whence $P_e\ge 1-\frac{\log\log
N+O(1)}{\log N}\to 1$.
\end{proof}

\noindent\emph{Necessity of mixture-detectability.} The hypothesis $c>0$ is not free. The parity scheme on
$\Sigma^n=\{0,1\}^2$ with two keys, $\pwm^{(0)}=\Unif\{x:x_1\oplus x_2=0\}$ and
$\pwm^{(1)}=\Unif\{x:x_1\oplus x_2=1\}$, has $\KL{\pwm^{(b)}}{\pnat}=\log 2$ (within budget) yet
$\pbar=\tfrac12(\pwm^{(0)}+\pwm^{(1)})=\Unif(\{0,1\}^2)=\pnat$ exactly, so the key-unknown detector is blind.
For such (distortion-free) schemes ``detectable but not attributable'' is automatic, not a separation.

\subsection{Two quality models}

\begin{statementbox}
\ThmDichotomy*
\end{statementbox}
\begin{proof}
\emph{Idea.} The lower bound is the budget once more: $\mathcal B(\Delta)$ caps the mass at
$n\Delta$, so attribution needs $n\gtrsim\log N/\Delta$. The two upper bounds spend that mass in the two
extremal shapes, thin-and-wide (a tilting code on every token) and tall-and-narrow (the digit stamp on a few).

Write $d_i^{(k)}$ for the per-key conditional distortions, so
$\sum_i d_i^{(k)}=\KL{\pwm^{(k)}}{\pnat}$ by the chain rule for relative entropy.

\emph{(a) Lower bound.} A scheme in $\mathcal B(\Delta)$ (sup-form: every per-secret conditional within
$\Delta$ at every prefix) has $d_i^{(k)}\le\Delta$ for every $i,k$, so
$\KL{\pwm^{(k)}}{\pnat}=\sum_i d_i^{(k)}\le n\Delta$ and \cref{thm:separation} applies (the converse needs
only this avg-form consequence), giving
$\ntask{\att}=\Omega(\log N/\Delta)$; the $\ell$-bit version replaces $U\sim\Unif[N]$ with
$W\sim\Unif\{0,1\}^\ell$ in Fano, giving $\ntask{\ext}=\Omega(\ell/\Delta)$.

\emph{(a) Upper bound (conditional tilting code).} For a fixed $\Delta$, let $p_+^{(i)}(\cdot\mid x_{<i})$ be
the order-preserving exponential tilt of the natural conditional $\pnat(\cdot\mid x_{<i})$ with
$\KL{p_+^{(i)}}{\pnat^{(i)}}=\Delta$. Assign user $u$ an i.i.d.\ codeword $c_u\in\{0,1\}^n$; generate
$X_i\sim p_+^{(i)}(\cdot\mid X_{<i})$ when $c_u[i]=1$ and $X_i\sim\pnat(\cdot\mid X_{<i})$ otherwise. By the KL
chain rule the total distortion is exactly $(\#\{i:c_u[i]=1\})\,\Delta\le n\Delta$ for \emph{any}
autoregressive $\pnat$, so the scheme lies in $\mathcal B(\Delta)$ (each token within $\Delta$). The induced
per-position channel $c_u[i]\mapsto X_i$ (conditioned on the past) has mutual information
$\E\,\JSD(p_+^{(i)},\pnat^{(i)})=\Theta(\Delta)$; under a mild mixing assumption on $\pnat$ the conditional
information density $\tfrac1n\sum_i i(c_u[i];X_i\mid X_{<i})$ concentrates, and the general channel-with-memory
coding theorem (Verdú--Han, \cref{fact:verduhan}) gives reliable maximum-likelihood decoding of $N$ codewords once
$n=\Theta(\log N/\Delta)$. The same code carries $\ell$ message bits, giving $\ntask{\ext}=\Theta(\ell/\Delta)$.

\emph{(b) Footprint upper bound (the stamp).} Let $m=\lceil\log N/\log q\rceil$ and $\Lmap_k(x)=\{1,\dots,m\}$.
Encode $u\in[N]$ by its $m$ base-$q$ digits and set $X_1,\dots,X_m$ to those digits; generate
$X_{m+1},\dots,X_n\sim\pnat$. Then the per-secret kernels for $i>m$ are exactly natural, so the scheme is in
$\mathcal E(m)$ with marked set $T=\{1,\dots,m\}$, and the
extractor reading $X_{1:m}$ recovers $u$ with zero error, so $\Phibud=\log N$ and the region has size
$m=\Theta(\log N/\log q)$, independent of any per-token cap. Fidelity holds by the footprint $m/n\to0$; if a
KL reading is wanted, assuming $\pnat(\text{digit}_i\mid\text{past})\ge p_0>0$ gives total distortion
$\le m\log(1/p_0)$, an $n$-independent constant, so the average per-token distortion $\to 0$. The $\ell$-bit
version forces $\lceil\ell/\log q\rceil$ positions.
\end{proof}

\begin{statementbox}
\CorBridge*
\end{statementbox}
\begin{proof}
In $\mathcal E(m)$ the per-secret kernels off $T$ are natural, so the likelihood factorizes as
\[
  \pwm(x\mid s)\;=\;\prod_{t\in T}\pwm(x_t\mid s,x_{<t})\cdot\prod_{t\notin T}\pnat(x_t\mid x_{<t}),
\]
and the ratio $\pwm(x\mid s)/\pnat(x)=\prod_{t\in T}\pwm(x_t\mid s,x_{<t})/\pnat(x_t\mid x_{<t})$ depends on
$s$ only through the pairs $((x_{<t},x_t))_{t\in T}$. By the factorization criterion, $X_{\le\max T}$ is a
sufficient statistic for $S$; and $\nu=0$ off $T$ (\cref{prop:profile}(iii)) gives $I(S;X)=\nu(T)$. When $T$
is a prefix $\{1,\dots,m\}$ (the stamp), sufficiency is exactly the Markov chain $S\to X_T\to X$. (For
general $T$ the chain $S\to X_T\to X$ can fail: the readout needs the conditioning prefixes, i.e.\
$X_{\le\max T}$, not $X_T$ alone.) In all cases $\Lmap_k:=T$ localizes the mark: the $s$-dependence of the
likelihood lives on $T$, read conditionally on the (natural) prefix.

\emph{Converse leg.} Suppose the scheme also satisfies the avg-form cap $\max_t\bar d(t)\le\Delta$ (which
the sup-form implies). Then $\nu(t)\le\bar d(t)\le\Delta$ at every $t$ ($\nu(t)\le\bar d(t)$ by the golden
formula; cf.\ \cref{prop:profile}(ii)), so vanishing-error attribution forces, by Fano,
$(1-o(1))\log N\le I(U;X)=\nu(T)\le m\Delta$, i.e.\ $m\ge(1-o(1))\log N/\Delta$: localization does not buy
$\Delta$-freeness --- dropping the per-token cap does.
\end{proof}

\subsection{Level 3: footprint--resolution and the rate region}

\begin{statementbox}
\ThmCrop*
\end{statementbox}
\begin{proof}
\emph{Idea.} A crop window disjoint from the dilated footprint sees only unmarked text and cannot detect, so
the footprint must meet every length-$w$ window; a set meeting every window cannot be too sparse, which
lower-bounds its size.

\emph{(a)/(b).} Let $R^{+}=\{i:\mathrm{dist}(i,R)\le r\}$ be the dilated support, $|R^{+}|\le(2r+1)|R|$.
Suppose a
length-$w$ window $W$ is disjoint from $R^{+}$. By the coupling of \cref{asm:local}, the marked carrier agrees
on $W$ with an unmarked carrier $X'\sim\pnat$, so $x_W=x'_W$ has law exactly $\pnat$; a detector observing only
$x_W$ then tests $\pnat$ against $\pnat$ and cannot exceed chance, contradicting $w$-crop-robust detection.
Hence every length-$w$ window meets $R^{+}$. There are $n-w+1$ such windows, and a single element $i$ lies
in at most $w$ of them (those starting in $[i-w+1,i]$), so $|R^{+}|\ge(n-w+1)/w$, tight for
$R^{+}=\{w,2w,\dots\}$. Combining with $|R^{+}|\le(2r+1)|R|$ gives $|R|\,w\ge(n-w+1)/(2r+1)$, i.e.\
$|R|\,w=\Omega(n)$ for $r=O(1)$ (for $w>n/2$ a nonempty $R$ gives $|R|\,w>n/2$ directly); for a scheme in
$\mathcal E(m)$ whose natural law factorizes across positions, the off-$T$ kernels are prefix-independent,
an exact $r=0$ coupling with $R=T$ exists, and the bound reads $m\,w=\Omega(n)$ (for correlated sources an
$R=T$ coupling can fail to exist: a scheme stamping one symbol of a perfectly correlated source is
$w$-crop-robust for every $w$ with $|T|=1$, and is excluded exactly because its carrier footprint under any
valid coupling is $\Omega(n)$, not $|T|$). (i) and (ii) are
the regimes $|R|=o(n)$ (in particular $m=o(n)$ under $\mathcal E(m)$) and $w=O(1)$.

\emph{(c).} Here we take $\pnat$ to factorize across the length-$w$ blocks (the i.i.d.\ / whitened regime;
the general correlated case is discussed below). Partition $[n]$ into $N_w=\lfloor n/w\rfloor$ disjoint
length-$w$ blocks $W_1,\dots,W_{N_w}$; crop to a uniformly random block $J\sim\Unif[N_w]$. Crop-robust
localization must output $J$ from $x_{W_J}$ alone, so by Fano (\cref{fact:fano}) $I(J;x_{W_J})\ge(1-o(1))\log
N_w$. For the upper bound, the golden formula (\cref{fact:golden}) with the common natural block law
$\pnat^{W}$ as reference gives $I(J;x_{W_J})\le\E_J\,\KL{P_{x_{W_J}\mid J}}{\pnat^{W}}$. Because $\pnat$ is
block-independent, its within-block conditionals coincide with the full-prefix conditionals, so by the
chain rule over the block and joint convexity of relative entropy each block term obeys
$\KL{P_{x_{W_J}\mid J}}{\pnat^{W}}\le\sum_{i\in W_J}d_i$ (the marked within-block conditional, with its
pre-block prefix marginalized, stays within $d_i:=\E_K d_i^{(K)}=\bar d(i)$ of the natural conditional by
convexity). Hence
\[
  I(J;x_{W_J})\ \le\ \E_J\!\sum_{i\in W_J}d_i
  \ =\ \frac{1}{N_w}\sum_{i\in[n]}d_i
  \ \le\ \frac{n\Delta}{N_w}\ =\ w\Delta,
\]
using only the \emph{global} budget $\sum_i d_i\le n\Delta$ of \cref{asm:bounded} (avg-form), not a
per-token cap. Combining, $\log N_w\le(1+o(1))\,w\Delta$, i.e.\ $w\Delta\ge(1-o(1))\log(n/w)$.

\emph{The block-independence hypothesis is load-bearing.} For a correlated $\pnat$ the cropped block
$x_{W_J}$ marginalizes out its pre-block prefix, and $\KL{P_{x_{W_J}\mid J}}{\pnat^{W}}$ need not be
bounded by $\sum_{i\in W_J}d_i$: the divergence of the cropped block against the stationary block law can
exceed the prefix-conditioned per-token budget actually spent there (a binary first-order Markov source
already exhibits this). The conclusion $I(J;x_{W_J})\le w\Delta$ nonetheless held throughout a numerical
search over correlated stationary sources and localizing schemes (the ratio $I/(w\Delta)$ peaking at $1$,
attained at the i.i.d.\ point), so we expect the relation to be general; a tight proof for correlated
$\pnat$ is left open.
\end{proof}

\begin{statementbox}
\PropRegion*
\end{statementbox}
\begin{proof}
\emph{Converse.} With $S=(U,W)$ and $U\perp W$, the chain rule and \cref{thm:budget}(a) give
\[
  I(U;X)+I(W;X\mid U)\ =\ I(U,W;X)\ \le\ n\Delta.
\]
Vanishing-error attribution forces $I(U;X)\ge(1-o(1))\log N$ by Fano (\cref{fact:fano}), and vanishing-error
extraction forces $I(W;X\mid U)\ge(1-o(1))\ell\log2$ (Fano applied to $W\in\{0,1\}^\ell$ conditioned on each
value of $U$ and averaged); adding,
\[
  R_{\att}+R_{\ext}\ \le\ (1+o(1))\,n\Delta.
\]

\emph{Achievability.} Concatenate the user identity and the payload into a single message of
$R_{\att}+R_{\ext}$ nats and transmit it with the conditional tilting code of \cref{thm:dichotomy}(a), whose
per-token capacity is $C=\Theta(\Delta)$. Under the regularity/mixing hypothesis the information density
concentrates, so reliable decoding holds once $nC\ge(1+o(1))(R_{\att}+R_{\ext})$; the decoder splits the
recovered message into $(\hat U,\hat W)$.
\end{proof}

\subsection{Proof of \texorpdfstring{\cref{thm:entropyclose}}{the entropy-column theorem}}
\label{app:entropyclose}

\begin{statementbox}
\ThmEntropyClose*
\end{statementbox}
\begin{proof}
\emph{Idea.} The true text is $\pnat$-typical, so its surprisal clears any sub-$h$ threshold (completeness, by
SMB); an innocent key's string is itself $\pnat$-distributed, so it coincides with the typical text only with
$\pnat$-mass $\le e^{-\tau}$ (soundness, a change of measure). The threshold $\tau=\log(N/\delta)$ is what
balances the two sides.

Throughout, the true user's key makes the text law exactly $\pnat$ (the inverse-CDF sampler is
distortion-free: marginally over the key, $X\sim\pnat$). \emph{Consistency} of a key $k_v$ with a text $X$
means that $k_v(t)$ lands in the realized token's quantile cell of $\pnat(\cdot\mid X_{<t})$ at every $t$;
for the deterministic sampler this says exactly $X(k_v)=X$, where $X(k_v)$ is the single string key $k_v$
generates.

\emph{(i) False positives, pointwise version (deterministic sampler).} Fix the realized innocent keys. Each
innocent $k_v$ determines one string $X(k_v)$, and $\hat U=v$ requires $X=X(k_v)$ and $\imath(X)>\tau$. The
only
randomness left is the true text $X\sim\pnat$, so
\[
  \Pr[\hat U=v]\ =\ \pnat\big(X(k_v)\big)\,\mathbf 1\{\imath(X(k_v))>\tau\}\ \le\ e^{-\tau}\ =\ \delta/N,
\]
since on $\{\imath>\tau\}$ we have $\pnat=e^{-\imath}<e^{-\tau}$ \emph{pointwise}. This holds for every
realized key
set (no averaging over the key draw) and for arbitrary $\pnat$ (no ergodicity), at every $n$.

\emph{(i$'$) Averaged version (randomized distortion-free kernels).} Given $X$, an independent wrong key
$k_v$ is consistent with probability $\prod_t\pnat(X_t\mid X_{<t})=\pnat(X)=e^{-\imath(X)}$ (each $k_v(t)$
must
land in the realized token's quantile cell, and $k_v(t)\sim\Unif[0,1]$ independently). Hence
\[
  \Pr[\hat U=v]\ \le\ \E_{X\sim\pnat}\!\big[e^{-\imath(X)}\,\mathbf 1\{\imath(X)>\tau\}\big]
  \ =\!\!\sum_{X:\,\pnat(X)<e^{-\tau}}\!\!\pnat(X)^2\ \le\ e^{-\tau}\sum_X\pnat(X)\ =\ e^{-\tau}.
\]
This is the watermark-forensics instance of the folklore likelihood-ratio Markov bound
$\Pr_{\pnat}[\log(P^{k_v}/\pnat)>\tau]\le e^{-\tau}$ (Barron-style change of measure): for any randomized
distortion-free kernel only the key-averaged bound survives, which is how the ``any scheme'' reading must be
scoped.

\emph{(ii) Completeness.} The true key is always consistent, so the decoder finds $U$ iff $\imath(X)>\tau$.
By the
Shannon--McMillan--Breiman theorem \citep{cover2006elements}, $\imath(X)/n\to h$ almost surely for stationary
ergodic $\pnat$; at $n\ge(1+\alpha)\tau/h$ we have $\tau/n\le h/(1+\alpha)<h$, so
$\Pr[\imath(X)>\tau]=\Pr[\imath(X)/n>\tau/n]\to1$. \emph{Quantifier order.} Completeness needs $\tau/n$
bounded away
from $h$: with $\delta$ fixed and $N\to\infty$ this holds along $n=(1+\alpha)\tau/h$ with the $o(1)$ uniform;
for jointly $\delta\to0$ one needs $\log(1/\delta)=O(\log N)$, keeping $\tau=\Theta(\log N)$ --- this is the
asymptotic regime stated in the theorem.

\emph{(iii) Rate.} Combining, the average error is at most
$\Pr[\imath\le\tau]+\sum_{v\ne U}\Pr[\hat U=v]\le o(1)+N\cdot\delta/N=\delta+o(1)$ at
$n=(1+\alpha)(\log N+\log(1/\delta))/h$, any fixed $\alpha>0$.

\emph{(iii) Converse.} Write $P^{k}:=\pwm^{(k)}$ for the marked law under key $k$, and
$\hat p_\kappa:=\tfrac1N\sum_{v}P^{k_v}$ for the realized $N$-key mixture given the key set
$\kappa=\{k_1,\dots,k_N\}$; each $P^{k_v}$ is a random measure with $\E[P^{k_v}]=\pnat$ (the sequence-level
condition $\pbar=\pnat$; for the witness this follows from the kernel-level form because its keys are
i.i.d.\ across positions, so inverse-CDF sampling is distortion-free at the sequence level, cf.\ \cref{asm:free}),
and since the keys are i.i.d., $\E_\kappa[\hat p_\kappa]=\pbar=\pnat$. Condition on the realized key set $\kappa$: $U\sim\Unif[N]$ and
$X\sim P^{k_U}$; since $U\perp\kappa$ and $U\sim\Unif[N]$, the $\kappa$-conditional marginal of $X$ is
$\hat p_\kappa$, so $H(X\mid\kappa)=H(\hat p_\kappa)$ and $I(U;X\mid\kappa)\le H(X\mid\kappa)
=H(\hat p_\kappa)$. Fano (\cref{fact:fano}) at each $\kappa$ gives
$P_e(\kappa)\ge1-(I(U;X\mid\kappa)+\log2)/\log N$; averaging over the i.i.d.\ key draw (by concavity of
entropy and Jensen, $\E_\kappa H(\hat p_\kappa)\le H(\E_\kappa\hat p_\kappa)=H(\pnat)=nh_n$; the inequality
is textbook and we claim only its use)
\[
  P_e\ =\ \E_\kappa\,P_e(\kappa)\ \ge\ 1-\frac{\E_\kappa\,I(U;X\mid\kappa)+\log2}{\log N}
  \ \ge\ 1-\frac{nh_n+\log2}{\log N}.
\]
Any rule with error $\le2\delta$ (the threshold decoder achieves $\le\delta+o(1)$) therefore needs
$n\ge(1-2\delta)\log N/h_n-O(1/h)$. Two symbols do the bookkeeping here: completeness (ii) uses the entropy
\emph{rate} $h$ via Shannon--McMillan--Breiman, while the converse uses the finite-$n$ quantity
$h_n=H(\pnat^{(n)})/n$ exactly; for stationary sources $h_n\downarrow h$, so $nh_n=(1+o(1))nh$ and the two
sides meet asymptotically. Since $\alpha>0$ is arbitrary, $\ntask{\att}=\Theta(\log N/h)$ for every fixed
$\delta\in(0,\tfrac12)$ (achievability alone holds for all $\delta\in(0,1)$; the converse coefficient
$1-2\delta$ needs $\delta<\tfrac12$); letting $\delta\to0$ with $\log(1/\delta)=o(\log N)$ squeezes the two sides together,
$\ntask{\att}=(1+o(1))\log N/h$, closing the column.

\emph{Fixed-key variant.} For a \emph{fixed} deployed key set $\kappa$ the averaging step is unavailable;
$I(U;X)\le H(\hat p_\kappa)$ still holds, and under the carried hypothesis
$H(\hat p_\kappa)\le nh_n+o(\log N)$ (the slack $o(\log N)$ as $N\to\infty$)
the same Fano computation gives $P_e\ge1-(nh_n+o(\log N)+\log2)/\log N$, hence
$\ntask{\att}=\Omega(\log N/h)$
and, with $W\sim\Unif\{0,1\}^\ell$ in place of $U$, $\ntask{\ext}=\Omega(\ell/h)$. (When $\pnat$ is the
maximum-entropy law the hypothesis is automatic.)
\end{proof}

\begin{statementbox}
\CorExtraction*
\end{statementbox}
\begin{proof}
Identify the payload $W\sim\Unif\{0,1\}^\ell$ with $U\sim\Unif[N]$ for $N=2^\ell$, so $\log N=\ell\log2$
nats. \Cref{thm:entropyclose} gives $\ntask{\ext}=\Theta(\ell/h)$ for fixed $\delta$, and
$\ntask{\ext}=(1+o(1))\,\ell\log2/h$ as $\delta\to0$ with $\log(1/\delta)=o(\ell)$.
\end{proof}

\subsection{The overlap identity and the candidate list}
\label{app:overlap}

\begin{lemma}[Scheme-independent collision identity]\label{lem:overlap}
For any statistically distortion-free scheme satisfying the kernel-level condition
$\E_{k\sim\mu}\,p(x\mid k,x_{<t})=\pnat(x\mid x_{<t})$ for every prefix $x_{<t}$ and every $x$ (randomized
kernels included), and independent keys $k,k'\sim\mu$,
\[
  \E_{k,k'}\Big[\sum_x p(x\mid k,x_{<t})\,p(x\mid k',x_{<t})\Big]\ =\ \sum_x\pnat(x\mid x_{<t})^2 .
\]
The expected pairwise per-token overlap is therefore invariant across all distortion-free schemes.
\end{lemma}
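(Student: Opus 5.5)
The identity follows from exchanging expectation with the finite sum over $x$ and using independence of $k$ and $k'$. Fix the prefix $x_{<t}$ throughout. The left-hand side is then an expectation of a finite sum of products of bounded nonnegative quantities; each $p(x\mid k,x_{<t})\in[0,1]$ and $\Sigma$ is finite. Linearity of expectation therefore moves $\E_{k,k'}$ inside the sum over $x$, giving
\[
  \sum_x \E_{k,k'}\big[p(x\mid k,x_{<t})\,p(x\mid k',x_{<t})\big].
\]

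Since $k,k'\sim\mu$ are independent, and each factor is a measurable function of a single key with $x_{<t}$ fixed, each product expectation factors as $\E_k[p(x\mid k,x_{<t})]\cdot\E_{k'}[p(x\mid k',x_{<t})]$. The kernel-level distortion-free hypothesis (\cref{asm:free}, in the form restated in the lemma) says each factor equals $\pnat(x\mid x_{<t})$. Substituting and summing gives $\sum_x\pnat(x\mid x_{<t})^2$. Randomized kernels need no separate treatment, because $p(\cdot\mid k,x_{<t})$ is already the kernel averaged over the scheme's internal randomness. The argument only uses that this averaged kernel is a measurable function of $k$ whose $\mu$-mean is natural.

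I do not expect a real obstacle. The one point to state carefully is that the prefix must be held fixed and must not depend on $k$ or $k'$. If $x_{<t}$ were itself generated under one of the keys, the two factors would no longer be independent and the identity could fail. So I will state the lemma pointwise in the prefix, which is how it is phrased, and note this explicitly.

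I will also connect the lemma to its later use. For an i.i.d.\ source the natural conditional is prefix-free, so taking the product over $t$ and using independence of key coordinates across positions gives an expected whole-string collision of $e^{-nr_2}$. Linearity over the $N-1$ innocent keys then gives the expected false-consistency count $(N-1)e^{-nr_2}$ in \cref{rem:listsize}(i). That multiplicative step is where position-independence of the key coordinates is genuinely needed.
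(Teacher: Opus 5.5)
Your proof is correct and matches the paper's argument: you move the expectation inside the finite sum over $x$, factor $\E[AB]=\E[A]\,\E[B]$ termwise by independence of $k,k'$, and apply the kernel-level distortion-free condition to each factor. For the downstream count $(N-1)e^{-nr_2}$, the paper does not multiply per-token identities over $t$. It applies the same factorization once at the sequence level, $\E_{k,k'}\sum_X P^{k}(X)P^{k'}(X)=\sum_X\pnat(X)^2$ via $\pbar=\pnat$, and only then uses the i.i.d.\ source to evaluate $\sum_X\pnat(X)^2=e^{-nr_2}$.
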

\begin{proof}
For independent $k,k'$, $\E[AB]=\E[A]\,\E[B]$ termwise in $x$; then apply the kernel-level distortion-free
condition twice.
\end{proof}

For the inverse-CDF witness of \cref{thm:entropyclose} the kernels are point masses given the prefix, and the
identity is also a direct sequence-level computation:
$\E_{k,k'}\sum_X P^{k}(X)P^{k'}(X)=\sum_X\pnat(X)^2$. With an i.i.d.\ source,
$\sum_X\pnat(X)^2=e^{-nr_2}$ where $r_2:=-\log\sum_{x\in\Sigma}\pnat(x)^2$ is the R\'enyi-2 rate, so the
expected number $M$ of falsely consistent keys is
\[
  \E[M]\ =\ (N-1)\,\E_{X\sim\pnat}[\pnat(X)]\ =\ (N-1)\sum_X\pnat(X)^2\ =\ (N-1)\,e^{-nr_2},
\]
which is the exponentially large candidate list of \cref{rem:listsize} in the window
$\log N/h\le n\ll\log N/r_2$.

\emph{Abstain-on-tie has zero false positives (closed world).} The rule outputs $v$ only when $v$ is the
\emph{unique} consistent key. A false positive requires a wrong $v$ to be uniquely consistent; but the true
key is always consistent with the text it generated --- contradiction. Hence $\mathrm{FP}=0$ exactly, at
every $n$ and every source, \emph{within} the closed-world model $U\in[N]$; on out-of-set text (no true key
present) the argument vanishes. Its completeness $\E_X[(1-\pnat(X))^{N-1}]\to1$ at the threshold
$(1+o(1))\log N/h$ is supported numerically and is \emph{not}
claimed as a theorem; \cref{thm:entropyclose} does not use it.

\subsection{Two definitional witnesses: why the per-secret recut is necessary}
\label{app:witnesses}

The v2 definitions of \cref{sec:prelim} state the quality caps on the \emph{per-secret} kernels
$\pwm(\cdot\mid s,x_{<t})$, not on the key-averaged marginal distortion profile. Two two-token examples show
that the marginal profile and the information profile disagree in both directions, so the recut is forced,
not cosmetic.

\paragraph{Parity (marginal $\delta\equiv0$, yet $\nu(2)=\log2$).} On $\Sigma=\{0,1\}$, $n=2$,
$\pnat=\Unif(\{0,1\}^2)$, secret $S\sim\Unif\{0,1\}$ with $\pwm^{(s)}=\Unif\{x:x_1\oplus x_2=s\}$. The
key-averaged law is $\pbar=\Unif(\{0,1\}^2)=\pnat$ exactly, so the marginal per-token distortion vanishes at
both positions, and a key-averaged footprint reading would place the scheme in $\mathcal E(0)$ --- absurd,
since $\Phibud=I(S;X)=\log2$: given $X_1$, the token $X_2$ determines the parity, so
$\nu(2)=I(S;X_2\mid X_1)=\log2$. Per-secret, the kernel at $t=2$ is a point mass $\ne\Unif$, so $t=2$ is
marked, as it must be.

\paragraph{Kernel bend (marginal $\delta_2>0$, yet $\nu\equiv0$).} Same $\Sigma$, $n=2$, $\pnat=\Unif$;
\emph{every} secret generates $X_1\sim\Unif\{0,1\}$ and then $X_2\sim\mathrm{Ber}(3/4)$, regardless of $S$.
Then $\delta_2=\KL{\mathrm{Ber}(3/4)}{\mathrm{Ber}(1/2)}>0$ while $X\perp S$, so $\nu\equiv0$: position $2$
carries distortion but no secret. Distortion support does not certify information.

Together: the marginal profile both under-counts (parity) and over-counts (bend) the secret-bearing
positions, while the per-secret profile $\bar d$ dominates both ($\nu(t)\le\bar d(t)$ by the golden formula
and $\delta^{\mathrm{marg}}(t)\le\bar d(t)$ by joint convexity of KL), which is what makes the v2 caps sound
and the screening property $\supp(\nu)\subseteq T$ (\cref{prop:profile}(iii)) a theorem rather than a
convention.

\section{How deployed watermarks fit the framework}
\label{sec:schemefit}

We now place each scheme of \cref{tab:taxonomy} inside the model of \cref{sec:prelim}: we identify its
per-secret footprint $T$ (hence its side of \cref{def:embedbias}), the quality model it obeys
(\cref{asm:bounded} or \cref{asm:free}, or the footprint cap $\mathcal E(m)$), its readout resolution, and the
forensic levels it reaches. The recurring lesson is that almost every deployed scheme is \emph{biasing}:
full-support, and localizing (when it does) by fine readout rather than by a small footprint.

\subsection{Language-model watermarks}

\paragraph{Green-list \citep{kirchenbauer2023watermark}.} At each step the key hashes the previous token to
split the vocabulary into a green/red list and adds a fixed logit bias toward green; detection counts green
tokens. Every token's conditional is tilted under every key, so the per-secret footprint is $T=[n]$ and the scheme is \emph{biasing}
(\cref{def:embedbias}) under the bounded-distortion model $\mathcal B(\Delta)$, with $\Delta$ the per-token
green-bias divergence. Readout is global (one green-fraction statistic), so it reaches Level 0 and, in its
multi-bit and many-user descendants, Levels 1--2, but not crop-robust Level 3. Its secret \emph{is} recoverable
from a long window, which is why the support, not a read-from criterion, is the right cut (\cref{sec:prelim}).

\paragraph{Unigram-Watermark \citep{zhao2024unigram}.} The context-width-$1$ ablation of green-list: a single
key-determined global green list (no context hashing), a fixed logit bias toward it, and a
one-proportion $z$-test on the green count. Every token's conditional is tilted under every key, so $T=[n]$:
\emph{biasing} under $\mathcal B(\Delta)$, global readout, Level 0. Its contribution is a \emph{provable}
edit- and paraphrase-robustness guarantee, and the fixed list buys it by making the mark maximally
delocalized: robustness here is redundancy of a full-support mark, not a small footprint.

\paragraph{SynthID-Text \citep{dathathri2024synthid}.} Tournament sampling reweights the next-token
distribution through keyed elimination rounds; in its non-distortionary mode the key-averaged law is the
model's own, $\pbar=\pnat$. It is therefore \emph{biasing} in the distortion-free regime (\cref{asm:free}):
full support, global readout, Level-0 (zero-bit) detection.

\paragraph{Distribution-preserving reweighting \citep{wu2024dipmark,hu2024unbiased}.} DiPmark and the
unbiased-watermark family reweight the next-token distribution by a key-conditioned function whose
key-average is the identity: $\E_K\,\pwm(\cdot\mid K,x_{<t})=\pnat(\cdot\mid x_{<t})$ at every prefix --- the
\emph{kernel-level} form of \cref{asm:free}, not merely the sequence-level one. Under each key the reweighting
touches every step, so $T=[n]$: \emph{biasing} (d-free), global readout (a keyed score summed over the
sequence), Level 0.

\paragraph{Multi-bit watermarks \citep{yoo2023multibit,arcmark2026}.} These carry an $\ell$-bit payload by
position-dependent green-lists (MPAC, bounded-distortion $\mathcal B(\Delta)$) or an optimal-transport coupling
(ArcMark, distortion-free, \cref{asm:free}); either way the mark perturbs every token's per-secret conditional, so $T=[n]$
and the scheme is \emph{biasing}, with bucketed (per-position-class) rather than fine readout. The family
realises Levels 0 and 2, at the $\Theta(\ell/\Delta)$ payload cost of \cref{tab:ladder}.

\paragraph{BiMark \citep{feng2025bimark}.} An \emph{unbiased multilayer} multi-bit scheme: a bit-flip
reweighting whose key-average is the identity ($\E_K\,\pwm(\cdot\mid K,x_{<t})=\pnat(\cdot\mid x_{<t})$, the
kernel-level form of \cref{asm:free}), stacked across several layers to raise detectability while preserving
quality (lower perplexity). Every step is reweighted, so $T=[n]$: full support, \emph{biasing} in the
distortion-free regime. Multi-bit messages are carried by an XOR-enhanced position-allocation code, read back
bit by bit from a voting matrix without the message space, so the readout is bucketed (per-position-class)
rather than fine; the scheme reaches Levels 0 and 2. Being distortion-free, its payload cost is governed by the
entropy column rather than a distortion budget: \cref{cor:extraction-h} caps message-agnostic $\ell$-bit
extraction at $\Theta(\ell/h)$ tokens.

\paragraph{Many adaptive users \citep{cohen2024many}.} This construction is built generically from a
distortion-free zero-bit scheme plus an erasure-robust (Boneh--Kiayias--Montgomery) fingerprinting code,
inheriting full support and $\pbar=\pnat$: \emph{biasing} in the entropy regime (\cref{asm:free}). It embeds a
message and traces among many users, reaching Levels 0--2; its detect-short / trace-long guarantee is the
achievability companion to our Level-0/Level-1 converse (\cref{thm:separation}).

\paragraph{Distortion-free watermarks \citep{kuditipudi2023robust}.} An exponential-minimum-sampling rule
(after Aaronson's exponential-sampling proposal for GPT, \citealp{aaronson2022watermark}) aligns the output
with a key-dependent random sequence while keeping $\pbar=\pnat$; the alignment signal is
spread over the whole sequence. Full support, \emph{biasing} (\cref{asm:free}), global readout, Level 0. Its
robustness to cropping is redundancy of a delocalised mark, not a small footprint.

\paragraph{Undetectable watermarks \citep{christ2024undetectable}.} A pseudorandom signal is embedded in the
high-entropy steps of generation so that, without the key, the output is computationally indistinguishable
from $\pnat$. Full support, distortion-free \emph{biasing} (\cref{asm:free}); the governing resource is the
generation entropy (the paper certifies completeness through the empirical entropy of the realized output),
and the scheme reaches Level 0.

\paragraph{Pseudorandom error-correcting codes \citep{christ2024pseudorandom}.} A PRC encodes a pseudorandom
codeword spread across the output, decodable with the key even after bounded corruption. Full support,
distortion-free \emph{biasing} (\cref{asm:free}); the error-correction buys robustness, not a small footprint.
The language-model watermark it instantiates is zero-bit, so it reaches Level 0; multi-bit message embedding is
a separate steganographic application of the same code.

\paragraph{SemStamp \citep{hou2024semstamp}.} Sentence-level rejection sampling: a keyed locality-sensitive
hash partitions a sentence-embedding space into regions, and each candidate sentence is resampled until it
lands in a watermarked region; detection counts in-region sentences. The extractor's basis is the
sentence-embedding space, so the footprint is read at sentence granularity --- and there every sentence's
per-secret conditional is the natural law \emph{conditioned} on the keyed region, hence marked: full support,
\emph{biasing} under $\mathcal B(\Delta)$ (the per-sentence divergence is $\log(1/\gamma)$ for region mass
$\gamma$ under $\pnat$), global readout, Level 0. The basis change swaps tokens for sentences without
shrinking the footprint's fraction: basis-relativity (\cref{sec:taxonomy}, pattern 3) in the opposite
direction from Tree-Ring, for which the change of basis is what creates the $o(n)$ object.

\subsection{Image and audio watermarks}

\paragraph{Tree-Ring \citep{wen2023treering}.} A ring is written into a few concentric frequency shells of the
diffusion model's initial-noise Fourier spectrum; detection inverts the generation and matches the ring. In
the latent-noise spatial basis the inverse-DFT of the ring is delocalized over all noise coordinates (and the
noise-to-image map is the full diffusion sampler, not a transform), so there it is full support, biasing; but
in the \emph{Fourier basis of the latent} the mark is an $o(n)$ object on a handful of shells: it is \emph{embedding}
there, the one deployed scheme with a genuine small footprint. Its localization map is the ring's support and
its readout is global (a matched filter), so it reaches Level 0 and, with key-distinct rings, Level 1, but not
spatial Level 3. This is the basis-dependence of \cref{def:embedbias} made concrete.

\paragraph{RingID \citep{ci2024ringid}.} The Tree-Ring successor built for multi-key
\emph{identification} --- Level 1 in practice, instantiated in the same latent-Fourier dictionary, the regime
of \cref{thm:dichotomy}(b)'s attribution question. Two verified design changes decide its classification.
The ring values are discretized (each ring carries $\pm\alpha$ rather than a draw from the Gaussian Fourier
prior), so the key-averaged law on the shells is no longer natural (RingID itself documents the
distribution shift this family incurs), and the d-free reading survives only partially. And to make many
keys discriminable it adds a heterogeneous second component: a key-specific Gaussian noise pattern occupying
an \emph{entire} latent channel, read jointly with the rings by a nearest-key $\ell_1$ match after DDIM
inversion. That channel is a constant fraction of the latent coordinates, so the per-secret footprint is
$\Omega(n)$: \emph{biasing}, global readout, Levels 0--1. The instructive part is the trade: the ring
component alone is the deployed object closest to the small-footprint attribution of \cref{thm:dichotomy}(b),
and RingID buys reliable multi-key capacity precisely by leaving the embedding class.

\paragraph{Gaussian Shading \citep{yang2024gaussianshading}.} The latent prior is replaced by a key-determined
sample that is still marginally Gaussian, so the output is performance-lossless ($\pbar=\pnat$) while carrying
a message. Full support in the latent, distortion-free \emph{biasing} (\cref{asm:free}), global readout;
Levels 0 and 2.

\paragraph{ROBIN \citep{huang2024robin}.} An explicit watermark is planted in the Fourier spectrum of an
\emph{intermediate} diffusion latent (around steps 200--300 of 1000), and an adversarially optimized hiding
prompt-guidance signal steers the remaining generation to conceal it; verification DDIM-inverts to the
embedding step and thresholds an $\ell_1$ distance to the planted pattern. The optimized pattern spans
concentric rings covering roughly $70\%$ of the frequency plane, so even in its own Fourier basis the
footprint is $\Omega(n)$: full support, \emph{biasing} under $\mathcal B(\Delta)$ (the optimization
explicitly budgets retention against an unwatermarked anchor image, and the steered law is not $\pnat$),
global readout, Level 0 (zero-bit).

\paragraph{HiDDeN \citep{zhu2018hidden}.} The classic learned image watermark and the ancestor of Stable
Signature: an encoder network maps a cover image and an $\ell$-bit message to a visually indistinguishable
encoded image, and a jointly trained decoder recovers the message through differentiable noise layers (blur,
crop, dropout, JPEG). The learned residual spans the whole image, so $T$ is full: \emph{biasing} under
$\mathcal B(\Delta)$, global readout of one bitstring; Levels 0 and 2.

\paragraph{Stable Signature \citep{fernandez2023stablesig}.} The latent decoder is fine-tuned so every image
it produces carries a fixed bitstring, recovered by a trained extractor. The signature perturbs the whole
image's features (full support); it is \emph{biasing} under $\mathcal B(\Delta)$ with global readout of one
bitstring, reaching Levels 0 and 2 but, lacking per-region decoding, not Level 3.

\paragraph{SEAL \citep{seal2025}.} A semantic locality-sensitive hash of the image is bound into the watermark
and read \emph{per patch}, which lets it localize tampering at high AUC where latent schemes score zero. The
mark covers the whole carrier (full support, distortion-free \emph{biasing}, \cref{asm:free}); what reaches
Level 3 is its \emph{fine readout resolution}, not a small footprint, exactly as \cref{thm:crop} requires of
any crop-robust localizer. Levels 0--3.

\paragraph{EditGuard \citep{editguard2024}.} A dual localization-plus-copyright watermark embeds a fragile,
spatially-indexed mark across all pixels and recovers a tamper mask from reconstruction error. Full support,
\emph{biasing} under $\mathcal B(\Delta)$, but with \emph{fine} per-region readout, so it reaches Levels 0
and 3 (and carries a copyright bitstring, a Level-2 channel).

\paragraph{AudioSeal \citep{sanroman2024audioseal}.} A detector network reads a per-sample localization signal
embedded throughout the waveform, flagging AI-generated segments at sample resolution. Full support,
\emph{biasing} under $\mathcal B(\Delta)$, fine readout; Levels 0 and 3 (a 16-bit message variant adds Level 2).

\paragraph{The digit-stamp (ours).} For contrast, the decoupling witness of \cref{thm:dichotomy}(b) writes the
secret into the first $m=\Theta(\log N/\log q)$ tokens and leaves the rest natural: $T=[m]$ with $|T|=o(n)$, so it
is genuinely \emph{embedding} in token space and attributes at a $\Delta$-free cost. It is also the only such
construction here, and it is not robust to a single edit, which is precisely the gap between the
information-theoretic decoupling and a deployable scheme (\cref{sec:discussion}).

\section{Additional related work}
\label{sec:morerelated}

\Cref{sec:related} positioned the paper against the work its theorems build on or answer. This appendix
broadens the lens to recent watermarking research adjacent to the framework, read, where the connection is
real, through the profile $\nu$ and the ladder. Independent and concurrent, \citet{liu2026implicitidentity}
survey a related landscape under the name \emph{implicit identity}, organizing fingerprinting and
watermarking across datasets, models, and generated content into a lifecycle taxonomy; our contribution is
orthogonal in kind, not degree: rather than taxonomize the space of mechanisms, we isolate one
generation-time mark and derive exact sample-complexity rates for each forensic capability it supports.

\paragraph{Design and detection theory for language-model watermarks.} A fast-moving line treats scheme
design as an optimization problem. \citet{he2025dawa} jointly optimize the watermarking scheme and its
detector under a worst-case false-positive constraint and distortion control, with closed-form
distribution-adaptive solutions, and \citet{ai2026pasa} watermark over semantic clusters in embedding space
with provable robustness to semantic-invariant (paraphrase) attacks; \citet{tsur2025heavywater} cast
distortion-free design for low-entropy
generation (code, structured output) as optimal transport, the regime where the entropy column prices
attribution most dearly ($h$ small makes $\log N/h$ large); \citet{zhao2025permuteflip} design a decoder
(Permute-and-Flip) with a provably better quality--stability trade-off and a Gumbel-analogous watermark
tailored to it; and \citet{gloaguen2026dlmwatermark} give the first watermark for diffusion language models,
whose arbitrary-order generation is a carrier process our autoregressive profile does not directly cover
(re-cutting $\nu$ for order-agnostic generation is a natural extension). \citet{liu2026incontext} move the
marking capability from the sampler to the prompt itself; in our terms the deposit schedule is then chosen
without kernel access, a constrained class the framework does not yet price. On the detection side,
\citet{fernandez2023bricks} consolidate deployed tests with non-asymptotic false-positive guarantees, the
Level-0 discipline whose attribution-side analogue is the per-innocent $\delta/N$ form of
\cref{thm:entropyclose}(i); \citet{li2025trgof} prove a truncated goodness-of-fit test adaptively optimal for
detecting the Gumbel-max watermark under human edits, robust Level-0 where our open problem asks for robust
Level 1. Closest in spirit to our currencies, \citet{he2026tradeoffs} derive information-theoretic
trade-offs (false alarm, detection error, distortion, information rate) for multi-bit watermarking of
stationary processes; like the concurrent \citet{distribembed2025}, the analysis lives at the level of
aggregate rates, and the positional decomposition $\nu(t)$, on which our shape and localization results
turn, is again the difference. A further question in the same currency is answered by
\citet{chai2026gumbelproportion}: estimating what \emph{fraction} of a document is watermarked under the
Gumbel-max mechanism, with matching information-theoretic lower bounds for both a full-observation and a
pivotal-statistic estimator---a mixed-authorship reading of the mass $\|\nu\|_1$ our framework has not
itself posed. Finally, \citet{golowich2024edit} construct watermarks that are
simultaneously undetectable and robust to a constant fraction of adversarial edits via indexing
pseudorandom codes: the strongest known step toward the regime of our sharpest open problem, since it
achieves edit-robust Level 0, and whether an indexing-style construction supports $N$-user attribution at
the $h$-rate is exactly the question \cref{sec:conclusion} poses.

\citet{qin2026increfa} attack a different notion of attribution entirely: attributing an image to its
generating model from learned artifacts alone, with no embedded key and an open, ever-growing set of
candidate generators (their IncreFA casts this as incremental learning against 28 models spanning
2022--2025). \Cref{thm:entropyclose} instead prices attribution \emph{among a fixed, enrolled key set}; the
two notions bound complementary settings, passive generator identification versus keyed user attribution,
and neither theorem transfers to the other's regime.

\paragraph{Attribution in agent-trajectory carriers.} Moving beyond token streams, \citet{gao2026trace}
watermark the \emph{trajectories} of LLM agents---their logs of tool calls and actions---for robust ownership
attribution, splitting the mark across two complementary channels: a selection channel keyed on local content
and a tally channel keyed on log structure that stays invariant under rewriting. Read through the ladder this
is Level-1 attribution in a non-text carrier, and the design speaks directly to our sharpest open problem: the
structural tally channel is a deposit schedule whose support is engineered to survive step deletion and
paraphrase, empirical evidence that the edit-robust attribution \cref{sec:conclusion} calls for becomes
attainable once the carrier admits a rewriting-invariant coordinate. In a related but distinct carrier,
\citet{lu2026bicot} watermark a model's own \emph{chain-of-thought}: an ownership signal is aligned with
high-saliency structural anchors in the internal geometry of the reasoning trace, and a model-free
registration test verifies it, robust to fine-tuning, quantization, model perturbation, and output-level
attack. Where \citet{gao2026trace} split the mark across an agent's external action log, here the carrier is
the reasoning process a single model produces on its own, a second illustration that attribution generalizes
past the plain generated token sequence our profile $\nu$ is defined over.

\paragraph{Image and multimodal schemes, and localization.} \citet{li2025shallowdiffuse} embed
diffusion-image watermarks in a low-dimensional subspace approximately orthogonal to the generation
manifold; in the taxonomy's terms this is a latent-basis choice, and the basis relativity of
\cref{def:embedbias} (only the mass $\norm{\nu}_1$ is invariant) is the right lens for such subspace
designs. \citet{gao2026slice} bind distinct semantic factors of an image to designated regions of the
initial noise, obtaining false-accept guarantees together with tamper localization, per-factor readout on
the axis \cref{thm:crop} prices. \citet{zeng2026lava} layer audio-visual anti-tampering marks with
cross-modal alignment for deepfake detection and localization, Level-3 readout in a multimodal carrier,
company to EditGuard and AudioSeal in \cref{tab:taxonomy}'s fine-readout rows. Two further constructions
narrow to a single carrier each: \citet{milis2026audio} watermark synthetic speech alone, exploiting
token-vocabulary redundancy for a gradient-free scheme robust to audio-domain edits, and
\citet{chen2026spherical} watermark panoramic imagery alone, coupling third-order $SO(3)$ representations so
the mark survives arbitrary 3-D rotation of the sphere. Both are carrier-specific instances of the same
footprint-versus-basis question \cref{sec:taxonomy} poses in general.

\paragraph{Attacks and removal.} Removal is the adversarial complement of our lower bounds: the rates price
the decoder's task when the mark survives, and attacks decide whether it survives (the impossibility line of
\citealp{zhang2023sand, francati2026codinglimits}). \citet{bao2026shift} give a training-free removal attack
on diffusion watermarks that deflects the latent trajectory by stochastic resampling, succeeding against
nine schemes at once because they share a reliance on trajectory reconstruction; \citet{gao2026breaking}
defeat semantic-aware image watermarks by LLM-guided, coherence-preserving semantic injection, attacking in
the semantic basis the scheme reads, a reminder that the footprint, like sparsity, is basis-relative
(\cref{sec:taxonomy}). \citet{ni2025fragility} track how the mutual information a robust mark carries decays
under iterated diffusion editing --- an information-theoretic reading of removal on the same quantity $\nu$
whose deposit our converses price.

\section{What is a watermark? A philosophical and historical discussion}
\label{sec:philosophy}

The question in this section's title is older than the technology that prompts it, and its oldest answer is an
image. In the paper mills of Fabriano, around 1282, a figure of twisted wire was sewn onto the mould on which
the pulp was couched; where the wire lay the sheet formed thinner, and the design (a crown, a hand, a
unicorn) appeared not on the page but \emph{through} it, when the leaf was lifted to the light
\citep{hunter1947papermaking}. The mark was in the very substance of the paper and yet nowhere on its surface;
it named the maker, the mill, the year, and it did so to guard against forgery, so that the sheet could vouch
for its own origin. Nearly everything essential to the objects of this paper is already present in that
gesture: a mark laid into a carrier at the moment of its making, imperceptible to ordinary reading, and
legible only from a particular standpoint --- held, as it were, against the light.

Two ancient impulses meet in it. The first is to \emph{authenticate}: the seal in wax, the potter's mark, the
assayer's hallmark struck on silver, the signature, each a way of binding a made thing to the one who made
it, of letting an object carry its own provenance. The second is to \emph{conceal}. The art of hidden writing
is at least as old as \citet[5.35, 7.239]{herodotus1998histories}, who tells that Histiaeus pricked a message
into the shaved scalp of a slave and waited for the hair to regrow before sending him, and that Demaratus
scraped the wax from a tablet, wrote on the wood beneath, and sealed the words again under fresh wax.
Antiquity had no name for the practice; it was Trithemius, at the close of the fifteenth century, who supplied
one (\emph{steganographia}, covered writing) and bequeathed the modern art its name
\citep{kahn1967codebreakers}. A generative watermark is the union of these two impulses, and a strange union
it is, for it authenticates \emph{by} concealing: the distortion-free mark of \cref{thm:entropyclose} is a
seal that leaves the wax unbroken, a signature in an ink that, to every eye but one, is the paper itself.

What kind of thing, then, is such a mark? Here the historical question turns metaphysical, and the standing
temptation is to seek an object where there is none. Shown a marked passage and asked to point to the
watermark, one hunts for a token, a phrase, a buried signature, and, for the schemes that matter, finds
nothing. The error is one of category. It is the error \citet{ryle1949} diagnosed in the visitor who, shown
the colleges, the library, and the halls, asks to be shown the University in addition: the University is not a
further building beside the others but the manner in which they are ordered, and to look for it among them is
to mistake its logical type. In an older vocabulary one seeks a substance and is offered an accident, a way
that a substance is \citep[Categories, chs.~2, 5]{aristotle1963categories}. A biasing watermark is exactly
this: not an object in the carrier but a property of the law that generated it, a disposition of the text to
answer a certain test, present at every position and locatable at none. To ask which token \emph{is} the mark
is to ask which molecule of a warm gas carries its heat. That a property with no place can nonetheless be
measured, named, and priced is the bequest of \citet{shannon1948mathematical}: once information is reconceived
as distinguishability rather than as stuff, a mark need not be a thing in order to be real, and the profile
$\nu$ becomes the exact record, position by position, of how far the generating law has been bent to
carry the secret, whether that bending is gathered into an object or dissolved into a property.

If the watermark is not an object, what relation does it bear to its origin? The precise answer was furnished,
in another connection, by \citet{peirce1931collected}, who divided the ways a sign can stand for its object
into three. An \emph{icon} signifies by resemblance, a portrait, a map; a \emph{symbol} by convention,
a word, a flag; but an \emph{index} signifies ``by virtue of being really affected by that Object'' (CP
2.248): smoke by fire, a weathercock by the wind, a footprint by the foot that pressed it. A watermark is
neither icon nor symbol. It does not resemble its origin, and it refers to it by no agreement; it is an
\emph{index}. The correlations a decoder reads were physically laid down by the key at the instant of
generation, and to detect or to attribute is to read backward along that causal thread to the act that
deposited it. This is why forensics is possible at all (forensics is the reading of indices), and it is
why the forensic-recovery budget of \cref{thm:budget} has the form it has. An index can carry only the causal
information that flowed into it; no reading, however ingenious, recovers a trace the cause did not leave. Data
processing is nothing but this ancient fact set in the language of information: a trace cannot be manufactured
downstream of the event that makes it.

An index that points to a person becomes a name. The passage from the trace to the name was made by
\citet{kripke1980}, for whom a name is fastened to its bearer by an \emph{initial baptism} and thereafter
transmitted along a causal--historical chain, referring not through any description its user holds but through
the unbroken line back to the baptismal act. The key, drawn at generation time, is such a baptism; the marked
text carries the chain folded into its statistics; and to attribute is to trace the chain to the name that
fixed it. Detection and attribution are thus two questions of unequal depth put to a single index. Detection
asks only whether a mark \emph{exists} --- a question of being, indifferent to whose it is. Attribution asks
\emph{whose}, and this is the harder question, for it demands not existence but individuation: what makes this
text the work of \emph{this} origin and no other. The schoolmen had a word for the metaphysical ground of such
thisness: the \emph{haecceity} of \citet{scotus1994}, the ultimate difference by which a thing is this one
and not another of its kind. Existence is cheap; haecceity is dear, and \cref{thm:entropyclose} prices it
exactly, at $\log N$ nats in the appropriate currency, for to single one origin out of $N$ is to pay $\log N$
nats of thisness.

There remains a paradox at the core of every mark of authenticity, and it is the deepest lesson the theory
holds. A mark can authenticate only if it can be checked, and to be checked it must in some measure be
reproducible --- the verifier must be able to regenerate or to recognize it. But a mark that is reproducible
is, in the same measure, forgeable. \citet{derrida1988limited} found this tension in the signature, the very
emblem of a singular and present author: a signature works as the sign of that singularity only because it has
``a repeatable, iterable, imitable form,'' and the iterability that lets it function at all is the same that
makes it forgeable. A watermark inherits the paradox in its sharpest form and answers it with the \emph{key}.
The mark is made iterable for exactly one party, the holder of the key, and the security of the scheme is
nothing but the size of the gap between the iterability granted to the verifier and the iterability withheld
from the forger. The distortion-free construction is, in this light, a resolution of the signature's paradox
by information: the mark is wholly public in its effect (the text circulates freely, indistinguishable from
unmarked text) and wholly private in its ground (only the key can read it), and the entire force of
\cref{thm:entropyclose} lives in that difference. Here, too, \citet{leibniz1989} returns. His principle that
no two things are indiscernible (that to be numerically distinct is to differ in some property) appears,
in the distortion-free regime, to fail: two users' outputs are, as laws, statistically identical, told apart
by no test whatever. The principle is not broken but \emph{relativized}. Indiscernibility is always
indiscernibility \emph{to} a standpoint; the texts no test can separate are separated at once by the holder of
the registry, and the watermark is precisely the information that lives in the difference between what one
standpoint can tell apart and what another cannot. This is the whole content of the registry-conditioning of
\cref{sec:prelim}: against the world the mark's mass is zero, against the key it is positive, and the scheme
is nothing but that difference.

We may now answer the question of the title. A watermark is not a mark in a text but a structure in the
relation among three terms: a text, the origin that generated it, and a standpoint from which that origin
can be read. It is a trace, in Peirce's exact sense, made legible by a key; a name, in Kripke's, deposited at
the text's birth and recoverable along a causal thread; a commitment, in the end, by which the words are made
to answer for where they came from. The text is unchanged; what the mark alters is not what the words say but
what they \emph{owe} --- how much of their origin they can be compelled to confess, and from where. The
profile $\nu$ is the ledger of that debt, entered once at generation and never after, and the four forensic
questions of this paper are the four questions one has always been able to put to a trace: that it is
(detection), whose it is (attribution), what it says (extraction), and where it lies (localization). Their
prices, which the foregoing has computed, are the prices of reading a mark that was made, like the figure in
the Fabriano sheet, to be found only by one who knows how to hold it to the light.

\end{document}